\DeclareMathOperator*{\argmin}{\arg\!\min}
\newtheorem{theorem}{Theorem}[section]
\newtheorem{condition}{Condition}[section]
\newtheorem{assumption}{Assumption}[section]
\newtheorem{definition}{Definition}[section]
\newtheorem{example}{Example}[section]
\newtheorem{lemma}{Lemma}[section]
\newtheorem{proposition}{Proposition}[section]
\theoremstyle{definition}
\newtheorem{remark}{Remark}[section]
\newtheorem{thm}{Theorem}[section]
\numberwithin{equation}{section}
\begin{document}

\title{Minimum Sliced Distance Estimation in a Class of Nonregular Econometric Models 
\thanks{This paper is a substantially revised version of part of Chapter 1 of Park’s dissertation (\cite{Park2022_dissertation}). We thank Debopam Bhattacharya, Tetsuya Kaji, Tong Li, Ruixuan Liu, and participants of several conferences/seminars for helpful discussions.}}
\author{Yanqin Fan\thanks{%
Department of Economics, University of Washington, Seattle, WA 98195, USA;
email: \texttt{fany88@uw.edu}.} \ and Hyeonseok Park \thanks{\textit{Corresponding author}, 
Center for Industrial and Business Organization and Institute for Advanced Economic Research, Dongbei University of Finance and Economics, China;
email: \texttt{hynskpark21@dufe.edu.cn}}}
\date{\today }
\maketitle

\begin{abstract}
This paper proposes minimum sliced distance estimation in structural econometric models with possibly parameter-dependent supports. In contrast to likelihood-based estimation, we show that under mild regularity
conditions, the minimum sliced distance estimator is asymptotically normally distributed leading to simple inference regardless of the presence/absence of
parameter-dependent supports. We illustrate the performance of our estimator on an auction model. 

\textbf{Keywords}: Auction Model; Asymptotic Normality; Parameter-Dependent Support; Sliced Cr\'amer Distance; Sliced Wasserstein Distance.

\textbf{JEL Codes:} C1; C31; C34
\end{abstract}

\newpage

\section{Introduction}

Classical likelihood-based estimation and inference may face challenges in structural econometrics models such as auction models (see e.g., \cite{Paarsch1992} and \cite%
{Donald2002}) and equilibrium job-search models (see e.g.,\cite{Bowlus2001}). In such models, the density or conditional density function of the \textit{observable} variable of interest may have a jump at a parameter-dependent boundary such as the
one-sided and two-sided models studied in \cite{Chernozhukov2004}. In the one-sided model in \cite{Chernozhukov2004} and \cite{Hirano2003},
the conditional density function of the dependent variable is assumed 
to be strictly bounded away from zero at the parameter-dependent boundary; 
In the two-sided models in \cite{Chernozhukov2004}, 
the conditional density function is assumed to have a jump at the parameter-dependent boundary, 
see Example \ref{example:parameter-dependent-support} in the next section. \cite{Chernozhukov2004} develop non-normal asymptotic theory for MLE and
Bayes estimation (BE) for both one-sided and two-sided models. \cite{Hirano2003} study efficiency considerations
according to the local asymptotic minimax criterion for conventional loss
functions in one-sided models.\footnote{\cite{Chernozhukov2004} shows that Bayesian credible intervals based on posterior quantiles, which are computationally attractive, are valid in large samples and perform well in small samples. } 

Besides the complexity of the non-normal asymptotic distribution of MLE established in \cite{Chernozhukov2004}, inference is further complicated by the dichotomy of the asymptotic theory: MLE is asymptotically normal when the conditional density function of the dependent variable has no jump and non-normal otherwise. To simplify inference in such models, \cite{Li2010} proposes a two-step estimator based on indirect inference. In the first step, an auxiliary linear regression model is estimated from both real data and simulated data drawn from the structural model; in the second step, a minimum distance between estimators of the parameters in the auxiliary model based on real data and those based on synthetic data is adopted to estimate the structural parameter of interest. \textit{Assuming that the auxiliary model ensures identification of the structural parameter},  \cite{Li2010} shows that under standard regularity conditions, the indirect inference estimator is asymptotically normally distributed and inference is simple and standard.

The motivation for this paper is the same as that of \cite{Li2010}. Unlike \cite{Li2010}, we develop a class of (one-step) \textit{sliced} minimum distance estimators of the structural parameter of interest without relying on an auxiliary regression model. Each estimator in this class makes use of a \textit{sliced} $L_{2}$-distance between some \textquotedblleft empirical measure\textquotedblright\ of the observed data and a
parametric or semiparametric measure induced by the
structural model. We refer to the resulting estimator as a \textit{minimum sliced distance} (MSD) estimator. Two prominent examples of sliced $L_{2}$-distances are sliced $2$-Wasserstein distance and sliced Cr\'amer distance\footnote{The sliced Cr\'amer distance is used in \cite{Zhu1997} for goodness-of-fit testing.} based on which we construct \textit{minimum sliced Wasserstein distance} (MSWD) estimator and  \textit{minimum sliced Cr\'amer distance} (MSCD) estimator respectively. Wasserstein
distances have recently been used in the statistics and machine learning
literatures for testing and estimation of generative models. For example, 
\cite{Bernton2019} proposes minimum WD estimator based on the $1$%
-Wasserstein distance and establishes non-normal asymptotic distribution for \textit{univariate} unconditional parametric distributions; \cite{Nadjahi2020AsymSWD} proposes a sliced
version of the estimator in \cite{Bernton2019} and extends the non-normal
asymptotic distribution in \cite{Bernton2019}.\footnote{The technical analysis of 
\cite{Bernton2019} (and \cite{Nadjahi2020AsymSWD}) relies critically on the
equivalence between the $1$-Wasserstein distance and the $1$-Cr\'amer distance
for univariate distributions. Such an equivalence no longer holds for the $2$%
-Wasserstein distance adopted in the current paper and the analysis in \cite{Bernton2019} breaks down.}

This paper makes several contributions. First, we establish consistency and
asymptotic normality of the MSD estimator under a set of high-level assumptions applicable to a wide range of models and sliced distances. In contrast to likelihood-based inference, inference using our
MSD estimator is standard. Second, we verify the high-level assumptions for the MSCD estimator under primitive conditions for conditional models including the one-sided and two-sided models in \cite%
{Chernozhukov2004} and \cite{Hirano2003}. For the latter, our primitive conditions imply
that the MSCD estimator is asymptotically normally distributed regardless of the presence/absence of 
 a jump in the conditional density function. Third, we confirm the accuracy of the asymptotic normal distribution of MSCD estimator in a simulation experiment based on 
the auction model in \cite{Paarsch1992} and \cite{Donald2002} and compare it with the indirect inference estimator of \cite{Li2010}.

The rest of the paper is organized as follows.
Section \ref{sec:general-setup} introduces our general framework and the motivating
example, proposes the class of MSD estimators including MSWD and MSCD estimators, and presents an illustrative example comparing the different distributions of MLE, MSWD, and MSCD estimators for a one-sided and a two-sided uniform models.
Section \ref{section:asym-theory} establishes the consistency and asymptotic normality of the MSD
estimator under high level assumptions. Section \ref{section:SC-conditional} verifies the high level
assumptions for the MSCD estimator in the conditional model including one-sided parameter-dependent support under primitive conditions. Section 
\ref{sec:numerical-results} presents numerical results using
synthetic data generated from the auction model in \cite{Paarsch1992}, \cite{Donald2002}, and \cite{Li2010}. Section \ref{sec:conclusion} concludes. A series
of appendices contains verification of assumptions for both MSCD and MSWD estimators in one-sided and two-sided uniform models, verification of high level assumptions for MSCD in two-sided parameter-dependent support models, and technical proofs of the main results in the paper.

\section{The General Set-up and Minimum Sliced Distance Estimation} \label{sec:general-setup}

\subsection{The General Set-up and Motivating Examples}

Let $\left\{ Z_{t}\right\}_{t=1}^T $ denote a random sample satisfying Assumption \ref{assumption:DGP} below. 
\begin{assumption} \label{assumption:DGP}
$\left\{ Z_{t}\right\}_{t=1}^T$ is
a random sample from either an \textit{unconditional model} characterized by a parametric distribution function $F\left(\cdot ;\psi _{0}\right) $ of $Z_{t}$ or a \textit{conditional model} characterized by a parametric conditional distribution function $F\left( \cdot |x,\psi _{0}\right)$ of $Y_{t}$ given $X_{t}=x\in\mathcal{X}$, where $Z_{t}=\left( Y_{t}^{\top
},X_{t}^{\top }\right) ^{\top }$ in the conditional model and $\psi_0 \in \Psi \subset \mathbb{R}^{d_{\psi}}$ for a finite integer $d_{\psi}$.
\end{assumption}
We note that in the conditional model, \textit{\ the distribution of }$X_{t}$\textit{\ is unspecified}. 
In both models, we are interested in the estimation and inference for $\psi _{0}$.

When the density function of either $F\left( \cdot ;\psi \right) $ or $%
F\left( \cdot |x,\psi \right) $ exists, MLE or the conditional MLE is a
popular approach to estimating the unknown parameter $\psi _{0}$. However, the
asymptotic distribution of MLE or conditional MLE and the associated
inference depend critically on  model assumptions. The classical Wald,
QLR, and score tests rely on smoothness assumptions on the density function
and the assumption that the true parameter $\psi _{0}$ is in the interior
of $\Psi$. Many important structural models in economics violate one or
more assumptions underlying the classical likelihood theory which motivates the
development of alternative methods of estimation and inference such as those
discussed in Section 1.

Below we present two examples of the conditional model stated in Assumption 2.1. Example \ref{example:parameter-dependent-support} includes the one-sided models in \cite{Chernozhukov2004} and \cite{Hirano2003} and two-sided models in \cite{Chernozhukov2004} for which likelihood-based estimation and inference are difficult to implement. We refer to both as \textit{parameter-dependent support models} and use them to illustrate our assumptions/results in subsequent sections. Example \ref{example:auction} is the independent private value procurement auction model formulated in \cite{Paarsch1992} and \cite{Donald2002} for which the winning bid follows the one-sided model in Example \ref{example:parameter-dependent-support}.

\begin{example}[Parameter-Dependent Support Models]
\label{example:parameter-dependent-support}	
	
A scalar random variable $Y$ given a vector of covariates $X$ follows
\begin{align*}
	Y = g(X, \theta_0) + \epsilon,
\end{align*}
where the conditional density function of the error term $\epsilon$ given $X$ is  $f_{\epsilon}(\epsilon|X, \theta_0, \gamma_0)$, and $\theta_0 \in \Theta \subset \mathbb{R}^{d_{\theta }}$ and $\gamma_0 \in
\Gamma \subset \mathbb{R}^{d_{\gamma }}$ are finite-dimensional parameters. 

In one-sided models, $f_{\epsilon}(\epsilon|X, \theta, \gamma) = 0$ for all $\epsilon \le 0$. 
Let $\mathcal{X}\subset R^{d_{x}}$ denote the support of $X$. \cite%
{Hirano2003} assume that \textit{for }$X$\textit{\ in some subset of }$%
\mathcal{X}$\textit{\ with positive probability, the conditional density of }
$\epsilon$ \textit{\ at its support boundary } $\epsilon = 0$ \textit{\
is strictly positive}. 

In two-sided models, 
\begin{equation*}
f_{\epsilon}(\epsilon|X,\theta,\gamma ):=%
\begin{cases}
f_{L, \epsilon}(\epsilon|X,\theta,\gamma ) & \text{ if }\epsilon < 0, \\ 
f_{U, \epsilon}(\epsilon|X,\theta,\gamma) & \text{ if } \epsilon \ge 0,%
\end{cases}%
\end{equation*}%
where as in \cite{Chernozhukov2004}, it holds that \textit{for any }$x\in \mathcal{X}$, 
\begin{align}
\lim_{\epsilon\uparrow 0 } f_{\epsilon}(\epsilon|x,\theta ,\gamma)
&= f_{L, \epsilon}(0|x,\theta,\gamma)\text{, }%
\lim_{\epsilon\downarrow 0} f_{\epsilon}(\epsilon|x,\theta,\gamma)=f_{U, \epsilon}(0 |x,\theta,\gamma ),  \notag \\
f_{U, \epsilon}(0|x, \theta,\gamma ) &> f_{L, \epsilon}(0|x,\theta,\gamma)+\eta \text{ \textit{for some }}\eta >0,  \text{\textit{ for all }}\left( \theta ,\gamma \right) \in \Theta \times
\Gamma \text{.} \label{Jump}
\end{align}%

Let $F\left( y|x,\theta ,\gamma \right) $ denote the conditional cdf of $Y$
given $X=x$. For one-sided models, 
\begin{equation}
F\left( y|x,\theta ,\gamma \right) =\int_{0
}^{y - g\left( x,\theta \right)}f_{\epsilon}\left( u|x,\theta ,\gamma \right) du\text{, }y\geq g\left( x,\theta
\right) \text{;}  \label{eq:one-sided}
\end{equation}%
for two-sided models, 
\begin{equation}
F(y|x,\theta ,\gamma )=%
\begin{cases}
\int_{-\infty }^{y-g\left( x,\theta \right)}f_{L, \epsilon}(u|x,\theta ,\gamma )\mathrm{d}u, & \text{ if }y\leq
g\left( x,\theta \right) , \\ 
\int_{-\infty }^{0}f_{L, \epsilon}(u|x,\theta ,\gamma )\mathrm{%
d}u+\int_{0}^{y-g\left( x,\theta \right)}f_{U,\epsilon}(u|x,\theta ,\gamma )\mathrm{d}%
u, & \text{ if }y>g\left( x,\theta \right) .\label{eq:two-sided}%
\end{cases}%
\end{equation}%
Let $\psi =(\theta ,\gamma )^{\prime }\in \Psi =\Theta \times \Gamma $.

\end{example}

Many structural econometrics models lead to one-sided or two-sided models
satisfying the assumptions in \cite{Hirano2003} and \cite{Chernozhukov2004} such as (\ref{Jump}) in two-sided models so that the non-normal asymptotic theory they develop is applicable. On the other hand, if (\ref{Jump}) does not hold, then the asymptotic distribution of MLE is normal. 

\begin{example}
\label{example:auction}

We consider the independent private value procurement
auction model formulated in \cite{Paarsch1992} and \cite{Donald2002}. 
In the first-price procurement auction model, there is only one buyer but multiple sellers. 
Sellers provide their bids and the lowest one is the winning bid. 
Let $Y$ denote the winning bid and $X$ denote the observable
auction characteristics. Suppose the bidder's private value $V$ follows a
conditional distribution of the form 
\begin{align*}
	f_{V}(v|X, \psi_0) I(v \ge g_V(X, \psi_0)),
\end{align*}
where 
\begin{equation*}
f_{V}\left( v|X, \psi \right) = \frac{1}{h\left( X,\psi \right) }%
\exp \left( -\frac{v}{h\left( X, \psi \right) }\right) \text{ and }%
g_{V}\left( X,\psi \right) =0.
\end{equation*}%
Then $E\left( V|X,\psi \right) =h\left( X,\psi \right)$. We are interested in estimating $\psi_0$. However, we only observe the winning bid $Y$. 
Assuming a Bayes-Nash Equilibrium solution concept, the equilibrium bidding function
satisfies 
\begin{equation}
\sigma \left( v\right) =v+\frac{\int_{v}^{\infty }\left( 1-F_{V}\left( \xi
|X, \psi \right) \right) ^{m-1}d\xi }{\left( 1-F_{V}\left( v|X, \psi\right) \right) ^{m-1}%
},  \label{BidFunction}
\end{equation}%
where  $
F_{V}\left(v|X, \psi\right) $ denotes the conditional cdf of $V$ given $X$. It is easy to show that the pdf of the winning bid $Y$ given $X$ is  
\begin{align*}
	f(y|X, \psi_0)I(y \ge g(X, \psi_0))
\end{align*}
where
\begin{equation*}
f\left( y|X,\psi \right) =\frac{m}{h\left( x,\psi \right) }\exp
\left( -\frac{m}{h\left( x,\psi \right) }\left( y-\frac{h\left( x,\psi
\right) }{m-1}\right) \right) \text{ and }g\left( X,\psi \right) =\frac{%
h\left( x,\psi \right) }{m-1},
\end{equation*}
in which $m$ is the number of bidders in the auction. The conditional cdf $F(y|X, \psi )$ of the winning bid is given by 
\begin{align*}
F(y|X,\psi ) =1-\exp \left( -\frac{m}{h(x,\psi )}\left( y-\frac{h(x,\psi )}{m-1}%
\right) \right) \text{ for }y\geq g(X, \psi ).
\end{align*}
\end{example}

\subsection{Minimum Sliced Distance Estimation}

Let $\mu_0$ and $\mu(\psi)$ denote respectively the true probability measure of $Z_t$ and the probability measure induced by the parametric model with parameter $\psi\in\Psi$. A minimum sliced distance estimator of $\psi_0$ is based on an estimator of a \textit{sliced distance} between $\mu_0$ and $\mu(\psi)$ denoted as $ \widehat{\mathcal{S}}(\psi)$.
To introduce $ \widehat{\mathcal{S}}(\psi)$, we first present a brief review of two popular sliced distances:  the sliced $2$-Wasserstein distance or simply the sliced $2$-Wasserstein distance and sliced Cr\'amer distance.

\subsubsection{Sliced Wasserstein Distance and Sliced Cr\'amer Distance}

Let $\mathcal{P}_{2}(\mathcal{Z})$ denote the space of probability measures
with support $\mathcal{Z}\subset \mathbb{R}^{d}$ and finite second moments. Further let $\mathbb{S}^{d-1}=\{u\in \mathbb{R}^{d}:\lVert u\rVert _{2}=1\}$ be the unit-sphere in $\mathbb{R}^{d} $. 
For two probability measures $\mu $ and $\nu $ from $\mathcal{P}_{2}(%
\mathcal{Z})$, we denote by $\mathcal{W}_{2}\left( \mu ,\nu \right) $ their $%
2$-Wasserstein distance or simply the Wassserstein distance. It is a finite
metric on $\mathcal{P}_{2}(\mathcal{Z})$ defined by the optimal transport
problem:%
\begin{equation*}
\mathcal{W}_{2}\left( \mu ,\nu \right) =\left[ \inf_{\gamma \in \Gamma (\mu
,\nu )}\int_{\mathbb{R}^{d}}\int_{\mathbb{R}^{d}}\lVert x-y\rVert ^{2}%
\mathrm{d}\gamma (x,y)\right] ^{1/2},
\end{equation*}%
where $\Gamma (\mu ,\nu )$ is the set of probability measures on $\mathbb{R}%
^{d}\times \mathbb{R}^{d}$ with marginals $\mu $ and $\nu $.

When $d=1$, the Wasserstein distance is easy to compute. Proposition 2.17 in \cite{Santambrogio2015} or Theorem 6.0.2 in 
\cite{Ambrosio2008} implies that 
\begin{equation}
\mathcal{W}_{2}^{2}(\mu ,\nu )=\int_{0}^{1}\left( F_{\mu }^{-1}(s)-F_{\nu
}^{-1}(s)\right) ^{2}\mathrm{d}s,  \label{W2}
\end{equation}%
where $F_{\mu }(\cdot )$ and $F_{\nu}(\cdot)$ are the distribution functions associated with the
measures $\mu$ and $\nu$, respectively, and $F_{\mu }^{-1}$ and $F_{\nu}^{-1}$ are the quantile functions.

When $d>1$, the Wasserstein distance $\mathcal{W}_{2}$ is difficult to compute. The sliced Wasserstein distance is introduced
to ease the computational burden associated with the Wasserstein distance $%
\mathcal{W}_{2}$ (c.f. \cite{Bonneel2015}). For $u\in \mathbb{S}^{d-1}$ and $z\in \mathbb{R}^{d}$, let $u^{\ast
}(z)=u^{\top }z$ be the 1D (or scalar) projection of $z$ to $u$. For a
probability measure $\mu $, we denote by $u_{\sharp }^{\ast }\mu $ the
push-forward measure of $\mu $ by $u^{\ast }$. The sliced Wasserstein
distance $\mathcal{SW}(\mu ,\nu )$ is defined as follows: 
\begin{equation}
\mathcal{SW}(\mu ,\nu )=\left[ \int_{\mathbb{S}^{d-1}}\mathcal{W}^{2}(u_{\#}^{\ast
}\mu ,u_{\#}^{\ast }\nu )d\varsigma (u)\right] ^{1/2},  \label{SWD}
\end{equation}%
where $\varsigma (u)$ is the uniform distribution on $\mathbb{S}^{d-1}$. It
is well known that $\mathcal{SW}$ is a well-defined metric, see \cite%
{Nadjahi2020SDProperty}. For each $u\in \mathbb{S}^{d-1}$, let 
\begin{equation*}
G_{\mu }(s;u)=\int I(u^{\top }z\leq s)\mathrm{d}F_{\mu }(z).
\end{equation*}%
Define $G_{\nu }(s;u)$ similarly. Since 
\begin{equation*}
\mathcal{W}^{2}(u_{\#}^{\ast }\mu ,u_{\#}^{\ast }\nu )=\int_{0}^{1}\left(
G_{\mu }^{-1}(s;u)-G_{\nu }^{-1}(s;u)\right) ^{2}\mathrm{d}s\text{,}
\end{equation*}%
we obtain that 
\begin{equation*}
\mathcal{SW}(\mu ,\nu )=\left[ \int_{\mathbb{S}^{d-1}}\int_{0}^{1}\left( G_{\mu
}^{-1}(s;u)-G_{\nu }^{-1}(s;u)\right) ^{2}\mathrm{d}s\mathrm{d}\varsigma (u)%
\right] ^{1/2}.
\end{equation*}%
For generality, we introduce a weighted version of $\mathcal{SW}(\mu ,\nu )$.
\begin{definition}
A weighted Sliced Wasserstein distance is defined as: 
\begin{equation}
\mathcal{SW}_{w}(\mu ,\nu )=\left[ \int_{\mathbb{S}^{d-1}}\int_{0}^{1}\left( G_{\mu
}^{-1}(s;u)-G_{\nu }^{-1}(s;u)\right) ^{2}w\left( s\right) \mathrm{d}%
sd\varsigma (u)\right] ^{1/2}, \label{Wasserstein}
\end{equation}%
where $w\left( \cdot\right) $ is a nonnegative function such that $%
\int_{0}^{1}w\left( s\right) ds=1$. 
\end{definition}

Similarly, we introduce a weighted sliced Cram\'er distance between two measures $\mu$ and $\nu$ with support $\mathcal{Z}\subset \mathbb{R}^{d}$. In the univariate case ($d=1$), the Cram\'er distance (see e.g. \cite{Cramer_1928} and \cite{Szekely_2017}) is defined as 
\begin{equation*}
\mathcal{C}_{2}^{2}(\mu ,\nu )=\int_{-\infty }^{\infty }(F_{\mu }(s)-F_{\nu
}(s))^{2}\mathrm{d}s.   
\end{equation*}
\begin{definition}
For $d>1$, we define a weighted Sliced Cram\'er distance as: 
\begin{equation}
S\mathcal{C}_{w}(\mu ,\nu )=\left( \int_{\mathbb{S}^{d-1}}\int_{-\infty
}^{\infty }(G_{\mu }(s;u)-G_{\nu }(s;u))^{2}w(s)\mathrm{d}sd\varsigma
(u)\right) ^{1/2}, \label{Cramer}
\end{equation}
where $w(\cdot)$ is a nonnegative function such that $%
\int_{0}^{1}w\left( s\right) ds=1$.
\end{definition}

\subsubsection{MSD, MSWD, and MSCD Estimators}

For $\psi \in \Psi \subset \mathbb{R}^{d_{\psi }}$, the sliced distance $\widehat{\mathcal{S}}(\psi)$ is defined as 
\begin{equation}
\widehat{\mathcal{S}}(\psi):=\int_{\mathbb{S}^{d-1}}\int_{\mathcal{S}}(Q_{T}(s;u)-%
\widehat{Q}_{T}(s;u,\psi ))^{2}w(s)\mathrm{d}sd\varsigma (u),
\label{SO}
\end{equation}
and the MSD estimator denoted by $\hat{\psi}_{T}$ is defined via
\begin{equation}
\widehat{\mathcal{S}}(\hat{
\psi}_T)=\inf_{\psi \in \Psi }\widehat{\mathcal{S}}(\psi)+o_{p}(T^{-1}),\label{Estimator}
\end{equation}%
where  $Q_{T}(s;u)$ is an \textquotedblleft empirical measure\textquotedblright\ of the projection of observed data $\{u^{\top}Z_t\}_{t=1}^{T}$ and  $\widehat{Q}_{T}(\cdot ;u,\psi )$ is the corresponding parametric or semiparametric measure induced by the structural model.
 
Two examples of $\widehat{\mathcal{S}}(\psi)$ we focus on are estimators of 
 $\mathcal{SW}^2_{w}(\mu_0 ,\mu(\psi) )$ and $\mathcal{SC}^2_{w}(\mu_0 ,\mu(\psi) )$, where
 \begin{itemize}
     \item for $\mathcal{SW}^2_{w}(\mu_0,\mu(\psi))$, $Q_{T}(s;u)$ is the empirical quantile function of $\left\{ u^{\top
}Z_{t}\right\} _{t=1}^{T}$ and $\widehat{Q}_{T}(\cdot ;u,\psi )$ is (an estimator of) a model induced quantile function;
\item  for $\mathcal{SC}^2_{w}(\mu_0, \mu(\psi))$,  $Q_{T}(s;u)$ is the empirical
distribution function of $\left\{ u^{\top }Z_{t}\right\} _{t=1}^{T}$ and $%
\widehat{Q}_{T}(\cdot ;u,\psi )$ is (an estimator of) a model induced distribution function.
 \end{itemize}

The form of $\widehat{Q}_{T}(\cdot ;u,\psi )$ differs for
unconditional and conditional models.
\begin{itemize}
    \item For \textit{unconditional models}, $\widehat{Q}_{T}(\cdot
;u,\psi )$ is deterministic denoted as $Q(\cdot ;u,\psi )$. It is the 
parametric quantile function of $u^{\top }Z_{t}$ induced by the parametric
distribution of $Z_{t}$ for $\mathcal{SW}^2_{w}(\mu_0,\mu(\psi))$ and the parametric distribution function of $%
u^{\top }Z_{t}$ for $\mathcal{SC}^2_{w}(\mu_0, \mu(\psi))$;
\item   
For \textit{conditional models}, $\widehat{Q}_{T}(\cdot ;u,\psi )$ is random. It is $\widehat{G}_{T}(\cdot ;u,\psi )$ for $\mathcal{SC}^2_{w}(\mu_0, \mu(\psi))$ and $\widehat{Q}%
	_{T}(\cdot ;u,\psi )=\widehat{G}_{T}^{-1}(\cdot ;u,\psi )$ for $\mathcal{SW}^2_{w}(\mu_0,\mu(\psi))$, where
 	\begin{align}
		\widehat{G}_{T}(s;u,\psi )& =\frac{1}{T}\sum_{t=1}^{T}\int_{-\infty
		}^{\infty }I(u_{1}y+u_{2}^{\top }X_{t}\leq s)f(y|X_{t},\psi )\mathrm{d}y \nonumber \\
		& =%
		\begin{cases}
			\frac{1}{T}\sum_{t=1}^{T}F(u_{1}^{-1}(s-u_{2}^{\top }X_{t})|X_{t},\psi ) & 
			\text{ if }u_{1}>0 \\ 
			\frac{1}{T}\sum_{t=1}^{T}I(u_{2}^{\top }X_{t}\leq s) & \text{ if }u_{1}=0 \\ 
			1-\frac{1}{T}\sum_{t=1}^{T}F(u_{1}^{-1}(s-u_{2}^{\top }X_{t})|X_{t},\psi ) & 
			\text{ if }u_{1}<0.%
		\end{cases}%
  \label{GT}
	\end{align}
 \end{itemize}
 
 To motivate $\widehat{G}_{T}(s;u,\psi )$ for conditional models,  let $Z=\left( Y,X^{\top }\right) ^{\top }$. We note that the cdf of $Z$ and $u^{\top }Z$ are given by 
	\begin{equation*}
		F\left( z;\psi \right) =\mathbb{E}\left[ F\left( y|X,\psi \right) I\left(
		X\leq x\right) \right] \text{ and }
	\end{equation*}%
	\begin{eqnarray}
		G(s;u,\psi ) &=&\Pr (u_{1}Y+u_{2}^{\top }X\leq s) \nonumber \\
		&=&\mathbb{E}\left[ \int_{-\infty }^{\infty }I(u_{1}y+u_{2}^{\top }X\leq
		s)f(y|X,\psi )\mathrm{d}y\right] \nonumber \\
		&=&%
		\begin{cases}
			\mathbb{E}\left[ F(u_{1}^{-1}(s-u_{2}^{\top }X)|X,\psi )\right] & \text{ if }%
			u_{1}>0 \\ 
			\mathbb{E}\left[ I(u_{2}^{\top }X\leq s)\right] & \text{ if }u_{1}=0 \\ 
			1-\mathbb{E}\left[ F(u_{1}^{-1}(s-u_{2}^{\top }X)|X,\psi )\right] & \text{
				if }u_{1}<0.%
		\end{cases}%
  \label{G}
	\end{eqnarray}%
 Since the population cdf of $u^{\top }Z$, i.e., $G(s;u,\psi_0)$, depends on the unknown distribution of $X$, we make use of its estimator $\widehat{G}_{T}(\cdot ;u,\psi )$ to define $\widehat{Q}_T$.

\begin{definition}
When $\widehat{\mathcal{S}}(\psi)$ is a consistent estimator of $\mathcal{SW}^2_{w}(\mu_0 ,\mu(\psi) )$,  we call $\hat{\psi}_{T}$ the MSWD estimator; When $\widehat{\mathcal{S}}(\psi)$ is a consistent estimator of $\mathcal{SC}^2_{w}(\mu_0 ,\mu(\psi) )$,  $\hat{\psi}_{T}$ is referred to as MSCD 
estimator. 
\end{definition}

\subsection{A Simple Illustrative Example}

\label{sec:illustrative_example}

Before establishing the formal asymptotic theory for $\hat{
\psi}_{T}$ in the next section, we illustrate the different asymptotic distributions of MSCD estimator, MSWD estimator, and MLE in a one-sided uniform model and a two-sided uniform model in this section. 

The one-sided uniform model is $Y\sim U[0,\psi _{0}]$, where $\psi _{0}>0$ is the true parameter. The two-sided uniform model is characterized by the density function
\begin{equation*}
f(y;\psi _{0})=%
\begin{cases}
\frac{1}{4\psi _{0}} & \ \text{if }0 < y\leq \psi_{0}, \\ 
\frac{3}{4(1-\psi _{0})} & \mathbf{\ }\text{if }\psi _{0}<y\leq 1, \\
0 & \text{ otherwise. }
\end{cases}%
\end{equation*}%
This model is an example of the two-sided parameter-dependent support model for all $\psi_0\in (0,1)$ except when $\psi_0=1/4$ so the MLE of $\psi_0$ follows asymptotically a non-normal distribution when $\psi_0\neq 1/4$. When $\psi_0=1/4$, there is no jump in the density function $f(y;\psi _{0})$. With straightforward but tedious algebra, we show that the conditions for asymptotic normality of MLE in Theorem 5.39 of \cite{Vaart1998} are satisfied when $\psi_0=1/4$. 

In addition to MSCD estimator, MSWD estimator, and MLE, we also included an oracle
Generative Adversarial Network (GAN) estimator\footnote{Here, we consider the oracle GAN estimator when the size of the simulated sample is infinity
to minimize variation caused by the simulated sample.} in the comparison. The oracle GAN estimator minimizes the Jensen–Shannon divergence (see Proposition 1 of \cite{Goodfellow2014}
and Example 3 in \cite{Kaji2020}): 
\begin{equation*}
\hat{\psi}_{JS}:=\argmin_\psi \left\{\frac{1}{2T}\sum_{t=1}^{T}\log \left( \frac{2f(y_{t};\psi _{0})}{f(y_{t};\psi_{0})+f(y_{t};\psi) }\right) +\frac{1
}{2}\int \log \left( \frac{2f(y;\psi )}{f(y;\psi _{0})+f(y;\psi )}
\right) f(y;\psi )\mathrm{d}y\right\},
\end{equation*}
where $f(y_t, \psi)$ is the density function of the model-induced parametric distribution.

We present three figures below. In each figure, the left panel plots the population objective functions (divergences/distances) of the four estimators and the right panel presents the corresponding QQ plots based on 3000 normalized values of the estimator, where each value is computed from a random sample of size 1000. We note that in this section and Section \ref{sec:numerical-results}, the normalized value of an estimator $\hat{\psi}$ is defined as $(\hat{\psi} - \psi_0) / s$, where $s$ is the Monte-Carlo standard deviation of the estimator.
\begin{figure}[tbph]
\centering
\includegraphics[width=0.7%
\linewidth]{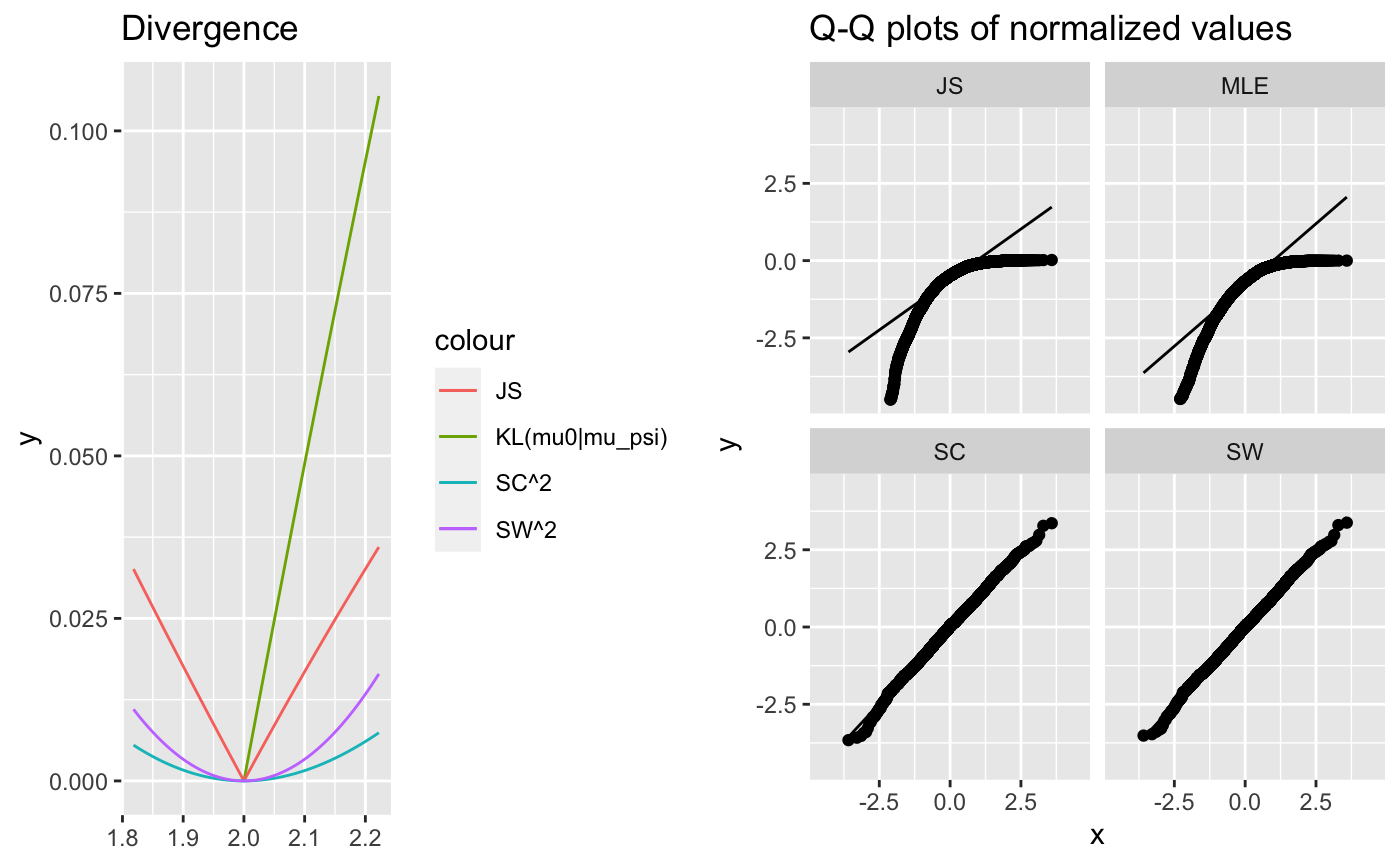}
\caption{One-sided uniform model with $\protect\psi_0 = 2$ and $T = 1000$.}
\label{fig:divergence-plot-unif-oneside-theta2}
\end{figure}

\begin{figure}[tbph]
\centering
\includegraphics[width=0.7%
\linewidth]{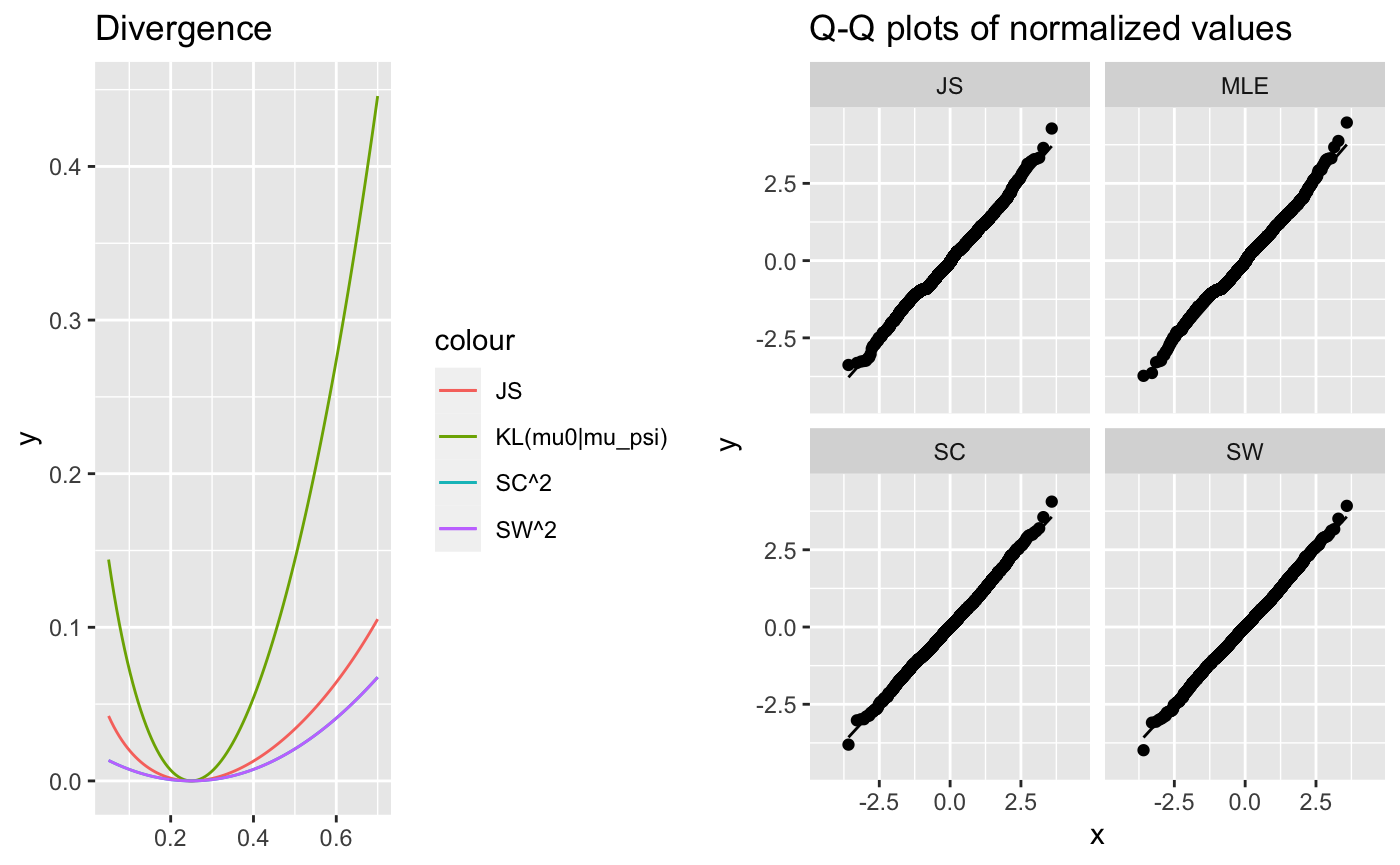}
\caption{Two-sided uniform model with $\protect\psi_0 = 1/4$
with $T = 1000$.}
\label{fig:divergence-plot-unif-twoside-025}
\end{figure}

\begin{figure}[tbph]
\centering
\includegraphics[width=0.7%
\linewidth]{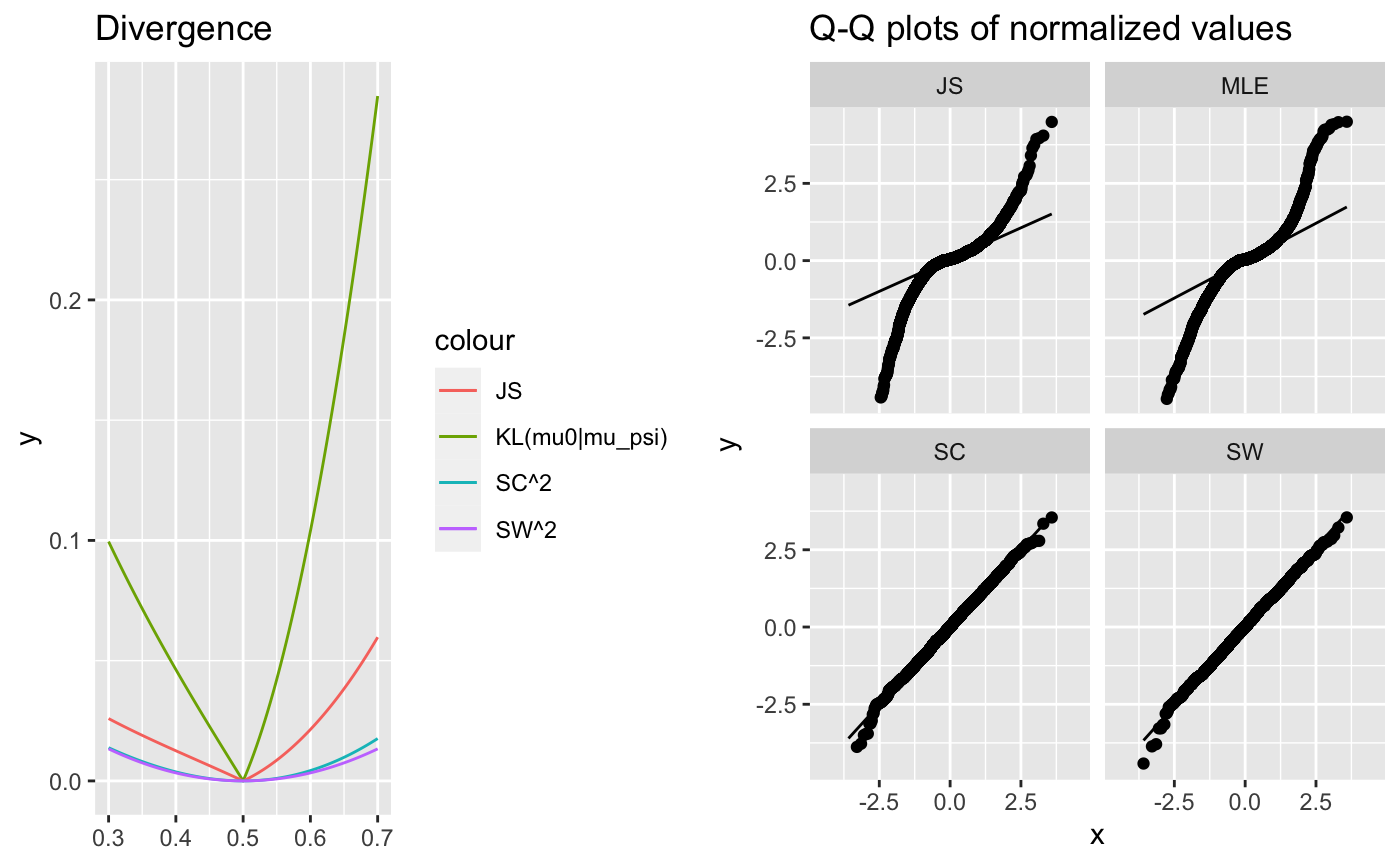}
\caption{Two-sided uniform model with $\protect\psi_0 = 1/2$
with $T = 1000$.}
\label{fig:divergence-plot-unif-twoside-050}
\end{figure}

\newpage
Several conclusions can be drawn from Figures \ref%
{fig:divergence-plot-unif-oneside-theta2}-\ref%
{fig:divergence-plot-unif-twoside-050}. First, the population objective functions for our MSWD and MSCD estimators are smooth in all cases. For the one-sided uniform model, the KL and JS divergences are not differentiable at the true parameter value: KL divergence is not defined when $\psi < \psi_0$. For the two-sided uniform model, the KL and JS divergences are 
bounded but may or may not be first-order differentiable at the true
parameter value. For example, both are not differentiable when $\psi_0 \ne 1/4$. Second, for the one-sided
uniform model, both MLE and oracle GAN estimators are non-normally distributed;
both MSWD and MSCD estimators are approximately normally distributed. Third,
for the two-sided uniform model, when $\psi _{0}=1/4,$ all four estimators
are close to being normally distributed; when $\psi_0 $ is far from $1/4$, MLE and oracle GAN estimators are non-normally distributed but MSWD and MSCD estimators
are again close to being normally distributed. In Section \ref{section:asym-theory} and Appendix \ref{appendix:MSWD-MSCD}, we will verify conditions for asymptotic normality in Theorem \ref{thm:asym-normality-general} for both MSCD and MSWD estimators in the one-sided and two-sided uniform models, providing theoretical justification for the results in Figures \ref%
{fig:divergence-plot-unif-oneside-theta2}-\ref%
{fig:divergence-plot-unif-twoside-050}.

\section{Asymptotic Theory}

\label{section:asym-theory}

In this section, we establish consistency and asymptotic normality of the MSD
estimator $\hat{\psi}_{T}$ defined in (\ref{Estimator}) under a set of
high-level assumptions on $Q_{T}(s;u)$ and $\widehat{Q}_{T}(s;u,\psi )$. The
high-level assumptions allow for the results to be applicable to a wide range of models and estimators. In the next section, we will verify them under sufficient primitive conditions for the MSCD estimator in conditional models. 

\subsection{Consistency}

\begin{assumption}
\label{assumption:convergence} (i) $\int_{\mathbb{S}^{d-1}}$ $\int_{\mathcal{S%
}} \left( Q_{T}(s;u)-Q(s;u)\right) ^{2}w(s)\mathrm{d}s d\varsigma (u) %
\xrightarrow{p}0$;

(ii) $\sup_{\psi \in \Psi }\int_{\mathbb{S}^{d-1}} \int_{\mathcal{S}} \left( 
\widehat{Q} _{T}(s;u,\psi )-Q(s;u,\psi )\right) ^{2}w(s)\mathrm{d}s
d\varsigma (u) \xrightarrow{p}0$.
\end{assumption}

Assumption $\ref{assumption:convergence}$ requires that $Q_{T}(\cdot ,\cdot 
\dot{)}$ and $\widehat{Q}_{T}(\cdot ,\cdot ,\psi )$ converge to $Q(\cdot
,\cdot )$ and $Q(\cdot ,\cdot ,\psi )$ in weighted $L_{2}$-norm,
respectively, and the latter is uniform in $\psi \in \Psi $. Under Assumption \ref{assumption:DGP}, $Q(\cdot
,\cdot )=Q(\cdot ,\cdot ,\psi_0 )$.

Assumption \ref%
{assumption:identification} below states that $\psi _{0}$ is well-separated.

\begin{assumption}
\label{assumption:identification} The true parameter $\psi_0$ is in the interior of $\Psi$ such
that for all $\epsilon >0$, 
\begin{align*}
\inf_{\psi \notin B(\psi _{0},\epsilon )} \int_{\mathbb{S} ^{d-1}}\int_{S}
(Q(s;u)-Q(s;u,\psi ))^{2}w(s)\mathrm{d}s \mathrm{d}\varsigma (u) > 0,
\end{align*}
where $B(\psi _{0},\epsilon ):=\{\psi \in \Psi :\lVert \psi -\psi _{0}\rVert
\leq \epsilon \}$.
\end{assumption}

\begin{theorem}[Consistency of $\hat{\protect\psi}_{T}$]
\label{thm:consistency-general} Suppose Assumptions \ref%
{assumption:convergence} and \ref{assumption:identification} hold. Then $%
\hat{\psi}_{T}\xrightarrow{p}\psi _{0}$ as $T \rightarrow \infty $.
\end{theorem}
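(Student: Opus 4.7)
The plan is to follow the standard well-separated argmin consistency argument, in the spirit of Theorem~5.7 of \cite{Vaart1998}. Introduce the population criterion
\begin{equation*}
\mathcal{S}(\psi) := \int_{\mathbb{S}^{d-1}}\int_{\mathcal{S}}\bigl(Q(s;u) - Q(s;u,\psi)\bigr)^{2} w(s)\,\mathrm{d}s\,\mathrm{d}\varsigma(u),
\end{equation*}
and recall from the remark following Assumption~\ref{assumption:convergence} that $Q(\cdot;\cdot) = Q(\cdot;\cdot,\psi_0)$, so $\mathcal{S}(\psi_0) = 0$ and $\psi_0$ is a global minimizer. Consistency then reduces to combining (a) a uniform-in-$\psi$ comparison between $\widehat{\mathcal{S}}(\psi)$ and $\mathcal{S}(\psi)$, (b) the near-minimizer property \eqref{Estimator}, and (c) the well-separation of $\psi_0$ given in Assumption~\ref{assumption:identification}.

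The key step is (a). I would view the relevant objects inside the Hilbert space $\mathcal{H} := L^{2}\bigl(\mathcal{S}\times\mathbb{S}^{d-1},\, w(s)\,\mathrm{d}s\,\mathrm{d}\varsigma(u)\bigr)$, so that $\widehat{\mathcal{S}}(\psi)^{1/2} = \|Q_T - \widehat{Q}_T(\cdot;\cdot,\psi)\|_{\mathcal{H}}$ and $\mathcal{S}(\psi)^{1/2} = \|Q - Q(\cdot;\cdot,\psi)\|_{\mathcal{H}}$. The reverse triangle inequality then yields
\begin{equation*}
\sup_{\psi\in\Psi}\bigl|\widehat{\mathcal{S}}(\psi)^{1/2} - \mathcal{S}(\psi)^{1/2}\bigr| \le \|Q_T - Q\|_{\mathcal{H}} + \sup_{\psi\in\Psi}\|\widehat{Q}_T(\cdot;\cdot,\psi) - Q(\cdot;\cdot,\psi)\|_{\mathcal{H}},
\end{equation*}
and Assumption~\ref{assumption:convergence}(i)--(ii) sends the right-hand side to zero in probability. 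Working with square roots rather than with $\widehat{\mathcal{S}}$ itself lets the triangle inequality do all the work without requiring an auxiliary uniform bound on $\|Q(\cdot;\cdot,\psi)\|_{\mathcal{H}}$, which would otherwise be needed to control the cross term in $|\|A\|^2 - \|B\|^2|$.

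Applying the uniform bound at $\hat\psi_T$ and at $\psi_0$ and using \eqref{Estimator},
\begin{equation*}
\mathcal{S}(\hat\psi_T)^{1/2} \le \widehat{\mathcal{S}}(\hat\psi_T)^{1/2} + o_p(1) \le \bigl(\widehat{\mathcal{S}}(\psi_0) + o_p(T^{-1})\bigr)^{1/2} + o_p(1) \le \mathcal{S}(\psi_0)^{1/2} + o_p(1) = o_p(1),
\end{equation*}
so $\mathcal{S}(\hat\psi_T) \xrightarrow{p} 0$. For each $\varepsilon > 0$, Assumption~\ref{assumption:identification} supplies $\delta(\varepsilon) > 0$ with $\{\hat\psi_T \notin B(\psi_0,\varepsilon)\} \subset \{\mathcal{S}(\hat\psi_T) \ge \delta(\varepsilon)\}$, whose probability tends to $0$; this delivers $\hat\psi_T \xrightarrow{p} \psi_0$.

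The main obstacle I anticipate is purely organizational rather than technical: it is tempting to compare the squared quantities $\widehat{\mathcal{S}}(\psi)$ and $\mathcal{S}(\psi)$ directly, but that route forces either compactness-type arguments on $\Psi$ or an additional boundedness hypothesis on $\psi \mapsto Q(\cdot;\cdot,\psi)$ in $\mathcal{H}$. Routing everything through the $\mathcal{H}$-norms keeps the proof entirely within Assumptions~\ref{assumption:convergence}--\ref{assumption:identification} and is, I believe, the cleanest path.
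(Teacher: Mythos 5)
Your proposal is correct and follows essentially the same route as the paper's proof: both pass to the square-root (i.e., $L^{2}$-norm) version of the criterion, use the Minkowski/reverse triangle inequality together with Assumption \ref{assumption:convergence} to obtain uniform convergence of that norm over $\Psi$, and then combine the near-minimizer property \eqref{Estimator} with the well-separation in Assumption \ref{assumption:identification}. The paper's argument (with $\mathcal{M}_T$, $\mathcal{M}$ playing the roles of your $\widehat{\mathcal{S}}^{1/2}$, $\mathcal{S}^{1/2}$) differs only in bookkeeping, bounding $\Pr(\mathcal{M}(\hat\psi_T)-\mathcal{M}(\psi_0)\ge\delta)$ directly rather than first showing $\mathcal{S}(\hat\psi_T)\xrightarrow{p}0$.
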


\subsection{Asymptotic Normality}

To establish asymptotic normality of $\widehat{\psi}_T$, we follow \cite{Andrews1999} and \cite{Pollard1980} by verifying the following quadratic approximation of the sliced distance $\widehat{\mathcal{S}}(\psi)$ defined in (\ref{SO}), 
\begin{equation}
	\widehat{\mathcal{S}}(\psi)
	 \approx \int_{\mathbb{S}^{d-1}}\int_{\mathcal{S}} (Q_{T}(s;u)- \widehat{Q}
	_{T}(s;u,\psi_0))^{2}w(s)\mathrm{d}s \mathrm{d}\varsigma(u) - 2(\psi -
	\psi_0)^{\top} A_T /\sqrt{T} + (\psi - \psi_0)^{\top} B_T (\psi - \psi_0),
 \label{QA}
\end{equation}
where 
\begin{align*}
	A_T & = \int_{\mathbb{S}^{d-1}}\int_{\mathcal{S}} (Q_{T}(s;u)- \widehat{Q}
	_{T}(s;u,\psi_0)) \widehat{D}_{T}(s;u,\psi _{0}) w(s)\mathrm{d}s \mathrm{d}
	\varsigma(u); \\
	B_T & = \int_{\mathbb{S}^{d-1}} \int_{\mathcal{S}} \widehat{D}_{T}(s;u,\psi _{0}) 
	\widehat{D}_{T}^{\top}(s ;u,\psi _{0}) w(s)\mathrm{d}sd\varsigma(u).
\end{align*}

Let 
\begin{equation*}
\widehat{R}_{T}(s;u,\psi ,\psi _{0}):=\widehat{Q}_{T}(s;u,\psi )- \widehat{Q}
_{T}(s;u,\psi _{0})-(\psi -\psi _{0})^{\top }\widehat{D}_{T}(s;u,\psi _{0}),
\end{equation*}
where $\widehat{D}_T(\cdot; \dot, \psi_0)$ is an $L_2(\mathcal{S} \times 
\mathbb{S}^{d-1}, w(s)\mathrm{d}s \mathrm{d}\varsigma(u))$-measurable
function. We adopt the following assumption to show the quadratic approximation in (\ref{QA}). 
\begin{assumption}
\label{assumption:norm-diff} The model-induced function $\widehat{Q}_{T}(\cdot ;u,\psi )$ satisfies: for any $\tau _{T}\rightarrow 0$,
\begin{equation*}
\sup_{\psi \in \Psi ;\;\lVert \psi -\psi _{0}\rVert \leq \tau
_{T}}\left\vert \frac{T \int_{\mathbb{S}^{d-1}}\int_{S} \left( \widehat{R}
_{T}(s;u,\psi ,\psi _{0})\right) ^{2}w(s)dsd\varsigma (u)}{(1+\lVert \sqrt{T}
(\psi -\psi _{0})\rVert )^{2}}\right\vert =o_{p}(1).
\end{equation*}
\end{assumption}
First, we note that Assumption \ref{assumption:norm-diff} is implied by the \textit{norm-differentiability} of $\widehat{Q}_{T}(\cdot ;u,\psi )$: for any $\tau _{T}\rightarrow 0$,
\begin{equation*}
\sup_{\psi \in \Psi ;\;\lVert \psi -\psi _{0}\rVert \leq \tau
_{T}}\left\vert \frac{T \int_{\mathbb{S}^{d-1}}\int_{S} \left( \widehat{R}
_{T}(s;u,\psi ,\psi _{0})\right) ^{2}w(s)dsd\varsigma (u)}{\lVert \sqrt{T}
(\psi -\psi _{0})\rVert^{2}}\right\vert =o_{p}(1).
\end{equation*}
Moreover, norm-differentiability of $\widehat{Q}_{T}(\cdot ;u,\psi )$ implies twice continuous differentiability of the sliced distance  $\widehat{\mathcal{S}}(\psi)$.
Second, Assumption \ref{assumption:norm-diff} allows the model-induced function $\widehat{Q}_T$ to have a finite number of kink points with respect to $\psi$ such as in the parameter-dependent support models in Example \ref{example:parameter-dependent-support}. In these models, one can take 
\begin{equation*}
\widehat{D}_{T}(s;u,\psi _{0})=%
\begin{cases}
\frac{\partial \widehat{Q}_{T}(s;u,\psi )}{\partial \psi }\bigg|_{\psi =\psi _{0}} & \text{
when }s\text{ is not a kink point;} \\ 
0 & \text{ when }s\text{ is a kink point.}%
\end{cases}%
\end{equation*} 

\begin{example} \label{example:uniform-models-norm-diff}
In this example, we verify Assumption \ref{assumption:norm-diff} for the unweighted MSCD estimator in the one-sided and two-sided uniform models in Section \ref{sec:illustrative_example}. Appendix \ref{appendix:MSWD-MSCD} verifies it for the MSWD estimator. 

(i) Consider the unweighted MSCD estimator for the one-sided uniform model, $Y\sim U[0,\psi _{0}]$, where $\psi _{0}>0$ is the true parameter. Since this is an unconditional univariate model, one can take the model-induced function $\widehat{Q}_T(s;u,\psi )$ as $Q(s; \psi)$, where $Q(s; \psi_0)$ is the parametric distribution function of $Y$ given by 
\begin{equation*}
Q(s; \psi_0):=F(s;\psi_0 )=%
\begin{cases}
0 & \text{ if }s\leq 0, \\ 
\frac{s}{\psi_0 } & \text{ if }0<s<\psi_0 , \\ 
1 & \text{ if }s\geq \psi_0.%
\end{cases}%
\end{equation*}%
As a result, 
\begin{equation*}
\widehat{D}_{T}(s;u,\psi _{0}):=D(s;\psi _{0})=%
\begin{cases}
0 & \text{ if }s\leq 0, \\ 
-\frac{s}{\psi _{0}^{2}} & \text{ if }0<s<\psi _{0}, \\ 
0 & \text{ if }s\geq \psi _{0}.%
\end{cases}%
\end{equation*}
Letting $\widehat{R} _{T}(s;u,\psi ,\psi _{0})=R(s;\psi ,\psi _{0})$, where 
\[R(s;\psi ,\psi _{0})= Q(s, \psi) - Q(s; \psi_0) - (\psi -
\psi_0)^{\top} D(s; \psi_0),\] 
we obtain that 
\begin{equation*}
\int_{-\infty }^{\infty }\left(R(s;\psi ,\psi _{0})\right) ^{2}dy= 
\begin{cases}
-\frac{(\psi-\psi_{0})^{3}}{3\psi _{0}^{2}} & \text{ if }\psi <\psi_{0}, \\ 
\frac{(\psi -\psi_{0})^{3}}{3\psi \psi_{0}} & \text{ if }\psi \geq \psi _{0}, 
\end{cases}
\end{equation*}
and $Q$ satisfies norm-differentiability, i.e.,
\begin{align*}
\sup_{\psi \in \Psi ;\;\lVert \psi -\psi _{0}\rVert \leq \tau
_{T}}\left\vert \frac{\int_{-\infty}^{\infty} \left( 
R(s;\psi ,\psi _{0})\right) ^{2}ds}{\lVert
\psi -\psi _{0}\rVert ^{2}}\right\vert =o(1).
\end{align*}
Consequently, the population objective function $\mathcal{SC}^2$ is twice continuously differentiable as shown in Figure \ref{fig:divergence-plot-unif-oneside-theta2}.

(ii) We show below that norm-differentiability holds for the unweighted MSCD estimator in the two-sided uniform model for all parameter values including $1/4$. 
The distribution function is given by 
\begin{equation*}
F(y;\psi )=%
\begin{cases}
0 & \text{if } y \le 0 \\
\frac{y}{4\psi } & \text{\ if }0 < y \le \psi, \\ 
\frac{1}{4}+\frac{3(y-\psi )}{4(1-\psi )}=1-\frac{3(1-y)}{4(1-\psi )} & 
\mathbf{\ }\text{if }\psi < y\leq 1, \\
1 & \text{if } y > 1.%
\end{cases}%
\end{equation*}
Adopting the same notation as in the one-sided uniform model, we have
\begin{equation*}
D(y;\psi _{0})= 
\begin{cases}
0 & \text{ if }y\leq 0, \\ 
-\frac{y}{4\psi _{0}^{2}} & \text{ if }0<y<\psi _{0}, \\ 
0 & \text{ if }y=\psi _{0}, \\ 
-\frac{3(1-y)}{4(1-\psi _{0})^{2}} & \text{ if } \psi _{0}<y<1, \\ 
0 & \text{ if }y\geq 1%
\end{cases}%
\end{equation*}
and
\begin{align*}
\int_{-\infty }^{\infty }\left( 
R(y;\psi ,\psi _{0})\right) ^{2}dy 
  = 
\begin{cases}
-\frac{(\psi -\psi _{0})^{3}(4\psi (\psi _{0}-1)+12\psi _{0}^{2}-4\psi
_{0}+1)}{48(\psi -1)(\psi _{0}-1)\psi _{0}^{2}} & \text{ if }\psi <\psi _{0},
\\ 
\frac{(\psi -\psi _{0})^{3}(12\psi \psi _{0}+4\psi _{0}^{2}-8\psi _{0}+1)}{
48\psi \psi _{0}(1-\psi _{0})^{2}} & \text{ if }\psi \geq \psi _{0}.%
\end{cases}%
\end{align*}
As a result, norm-differentiability holds for all parameter values $\psi_0$, i.e., 
\begin{align*}
\sup_{\psi \in \Psi ;\;\lVert \psi -\psi _{0}\rVert \leq \tau
_{T}}\left\vert \frac{\int_{-\infty}^{\infty} \left( 
R(s;\psi ,\psi _{0})\right) ^{2}ds}{\lVert
\psi -\psi _{0}\rVert ^{2}}\right\vert =o(1).
\end{align*}
Furthermore, the population objective function $\mathcal{SC}^2$ is twice continuously differentiable as shown in Figures \ref{fig:divergence-plot-unif-twoside-025} and \ref{fig:divergence-plot-unif-twoside-050}.
\end{example}

Assumption \ref{assumption:big-oh} (i) below strengthens Assumption \ref%
{assumption:convergence} (i) by specifying the rate of convergence.

\begin{assumption}
\label{assumption:big-oh} The following conditions hold:

\begin{description}
\item[(i)] $T\int_{\mathbb{S}^{d-1}}\int_{\mathcal{S}}
(Q_{T}(s;u)-Q(s;u))^{2}w(s) \mathrm{\ d}sd\varsigma (u)=O_{p}(1)$;

\item[(ii)] $T\int_{\mathbb{S}^{d-1}}\int_{\mathcal{S}} (\widehat{Q}%
_{T}(s;u,\psi _{0 })-Q(s;u,\psi _{0}))^{2}w(s)\mathrm{d}sd\varsigma
(u)=O_{p}(1)$;

\item[(iii)] There exists an $L_{2}(\mathbb{R}\times \mathbb{S}^{d-1},w(s) 
\mathrm{d}s\mathrm{d}\varsigma )$-measurable function $D(\cdot ;\cdot ,\psi
_{0})$ such that 
\begin{align*}
\int_{\mathbb{S}^{d-1}}\int_{\mathcal{S}} \left\Vert \widehat{D}%
_{T}(s;u,\psi _{0})-D(s;u,\psi _{0})\right\Vert ^{2}w(s)\mathrm{d}%
sd\varsigma (u) &=o_{p}(1).
\end{align*}
\end{description}
\end{assumption}
The next assumption implies asymptotic normality of the gradient vector of the sample objective function evaluated at the true parameter value. It can be verified by appropriate CLTs. 
\begin{assumption}
\label{assumption:CLT} 
\begin{equation*}
\sqrt{T} 
\begin{pmatrix}
\int_{\mathbb{S}^{d-1}}\int_{\mathcal{S}} (Q_{T}(s;u)-Q(s;u))D(s;u.\psi
_{0})w(s)dsd\varsigma (u) \\ 
\int_{\mathbb{S}^{d-1}}\int_{\mathcal{S}} (\widehat{Q}_T(s;u,\psi
_{0})-Q(s;u,\psi _{0}))D(s;u,\psi _{0})w(s)dsd\varsigma (u)%
\end{pmatrix}
\xrightarrow{d}N(0,V_{0})
\end{equation*}
for some positive semidefinite matrix $V_{0}$.
\end{assumption}

\begin{assumption}
\label{assumption:hessian-pd} 
\begin{equation}
B_{0}:=\int_{\mathbb{S}^{d-1}}\int_{\mathcal{S}} D(s;u,\psi _{0})D^{\top
}(s;u,\psi _{0})w(s)\mathrm{d}sd\varsigma (u)  \label{eq:hessian}
\end{equation}
is positive definite.
\end{assumption}

We are ready to state our main theorem. 
\begin{theorem}[Asymptotic normality of $\hat{\protect\psi}_{T}$]
\label{thm:asym-normality-general} Suppose Assumption \ref{assumption:DGP} and Assumptions \ref%
{assumption:convergence} to \ref{assumption:hessian-pd} hold. %
Then, 
\begin{equation*}
\sqrt{n}(\hat{\psi}_{T}-\psi _{0})\xrightarrow{d}N(0,B_{0}^{-1}\Omega
_{0}B_{0}^{-1}),
\end{equation*}
where $\Omega _{0}=(e_{1}^{\top }, - e_{1}^{\top})V_{0}%
\begin{pmatrix}
e_{1} \\ 
- e_{1}%
\end{pmatrix}%
$ in which $e_{1}=(1,\dotsc ,1)^{\top }$ is a $d_{\psi}$ by $1$ vector of
ones.
\end{theorem}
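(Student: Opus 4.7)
The proof adopts the standard argument for M-estimators whose criterion admits a local quadratic expansion, as in \cite{Pollard1980} and \cite{Andrews1999}. The plan has three parts: first, invoke Theorem \ref{thm:consistency-general} to localize to a shrinking neighborhood of $\psi_0$; second, verify the quadratic approximation (\ref{QA}) uniformly on this neighborhood; and third, combine the expansion with a $\sqrt{T}$-rate argument and the CLT to extract the limit. Under Assumptions \ref{assumption:convergence} and \ref{assumption:identification}, Theorem \ref{thm:consistency-general} gives $\hat{\psi}_T \xrightarrow{p} \psi_0$, so for any $\tau_T \to 0$ with $\sqrt{T}\tau_T \to \infty$ we may restrict to the event $\{\hat{\psi}_T \in B(\psi_0, \tau_T)\}$, which occurs with probability approaching one.

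For the quadratic expansion, I would write
\begin{align*}
(Q_T - \widehat{Q}_T(\psi))^2 = \bigl((Q_T - \widehat{Q}_T(\psi_0)) - (\widehat{Q}_T(\psi) - \widehat{Q}_T(\psi_0))\bigr)^2,
\end{align*}
substitute $\widehat{Q}_T(\psi) - \widehat{Q}_T(\psi_0) = (\psi-\psi_0)^{\top}\widehat{D}_T + \widehat{R}_T$, and integrate against $w(s)\mathrm{d}s\mathrm{d}\varsigma(u)$. This produces the main quadratic terms in (\ref{QA}) together with three remainder terms carrying $\widehat{R}_T$. By Cauchy--Schwarz combined with Assumption \ref{assumption:norm-diff} (which yields $\|\widehat{R}_T\|_{L_2(w)} = o_p((1+\|\sqrt{T}(\psi-\psi_0)\|)/\sqrt{T})$) and Assumption \ref{assumption:big-oh}(i)--(ii) (which give $\|Q_T - \widehat{Q}_T(\psi_0)\|_{L_2(w)} = O_p(T^{-1/2})$), each remainder is uniformly $o_p((1+\|\sqrt{T}(\psi-\psi_0)\|)^2/T)$ on $B(\psi_0,\tau_T)$, validating the approximation (\ref{QA}).

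Next I would establish the $\sqrt{T}$-rate and apply the CLT. A Cauchy--Schwarz bound on $A_T$ using Assumption \ref{assumption:big-oh}(i)--(iii) yields $A_T = O_p(T^{-1/2})$, while the identity $\widehat{D}_T\widehat{D}_T^{\top} - DD^{\top} = (\widehat{D}_T - D)\widehat{D}_T^{\top} + D(\widehat{D}_T - D)^{\top}$ together with Assumptions \ref{assumption:big-oh}(iii) and \ref{assumption:hessian-pd} gives $B_T \xrightarrow{p} B_0$ positive definite. Inserting these into the quadratic expansion and exploiting $\widehat{\mathcal{S}}(\hat{\psi}_T) \le \widehat{\mathcal{S}}(\psi_0) + o_p(T^{-1})$ produces a quadratic lower bound on $T(\widehat{\mathcal{S}}(\hat{\psi}_T) - \widehat{\mathcal{S}}(\psi_0))$ that forces $\sqrt{T}(\hat{\psi}_T - \psi_0) = O_p(1)$. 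Once the rate is known, minimizing the quadratic form gives $\sqrt{T}(\hat{\psi}_T - \psi_0) = B_T^{-1}\sqrt{T} A_T + o_p(1)$. To recover the limiting distribution, I would decompose
\begin{align*}
\sqrt{T}A_T = \sqrt{T}\int_{\mathbb{S}^{d-1}}\int_{\mathcal{S}} (Q_T - Q)\widehat{D}_T w\,\mathrm{d}s\,\mathrm{d}\varsigma - \sqrt{T}\int_{\mathbb{S}^{d-1}}\int_{\mathcal{S}} (\widehat{Q}_T(\psi_0) - Q(\psi_0))\widehat{D}_T w\,\mathrm{d}s\,\mathrm{d}\varsigma,
\end{align*}
replace $\widehat{D}_T$ by $D$ in each integrand (the replacement error is $o_p(1)$ by Cauchy--Schwarz using Assumptions \ref{assumption:big-oh}(i)--(iii)), and apply Assumption \ref{assumption:CLT} to obtain $\sqrt{T}A_T \xrightarrow{d} N(0,\Omega_0)$; Slutsky's theorem then yields the stated asymptotic normality.

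The main technical obstacle is the uniform control of the remainder in the quadratic expansion, particularly the cross-term $\int_{\mathbb{S}^{d-1}}\int_{\mathcal{S}}(Q_T - \widehat{Q}_T(\psi_0))\widehat{R}_T w\,\mathrm{d}s\,\mathrm{d}\varsigma$, which must be uniformly $o_p((1+\|\sqrt{T}(\psi-\psi_0)\|)/T)$; this requires pairing Assumption \ref{assumption:norm-diff} with Assumption \ref{assumption:big-oh}(i)--(ii) at the right scale. A related subtlety is the $\sqrt{T}$-rate step: because Assumption \ref{assumption:norm-diff} is only informative up to factors of $\|\sqrt{T}(\psi-\psi_0)\|$, establishing tightness of $\sqrt{T}(\hat{\psi}_T - \psi_0)$ is a self-bootstrapping argument driven by positive-definiteness of $B_0$ together with $A_T = O_p(T^{-1/2})$.
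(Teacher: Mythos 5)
Your proposal is correct and takes essentially the same route as the paper: the same quadratic expansion of $\widehat{\mathcal{S}}(\psi)$ around $\psi_0$, the same Cauchy--Schwarz control of the three $\widehat{R}_T$-remainder terms using Assumptions \ref{assumption:norm-diff} and \ref{assumption:big-oh}, and the same replacement of $\widehat{D}_T$ by $D$ before invoking Assumptions \ref{assumption:CLT} and \ref{assumption:hessian-pd}. The only difference is presentational: the paper delegates the $\sqrt{T}$-rate and the final argmax step to Theorems 1 and 3 of \cite{Andrews1999}, whereas you sketch those steps (the self-bootstrapping tightness argument and the minimization of the quadratic form) directly.
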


\begin{remark}
When applied to specific models such as the conditional model, Wald-type inference could be done using plug-in estimators of $B_0$ and $V_0$, see Proposition \ref{prop:Asy Normality-one-sided} for expressions of $B_0$ and $V_0$.
\end{remark}

\section{MSCD Estimation of Conditional Models}

\label{section:SC-conditional}

In this section, we impose smoothness conditions to verify the high level assumptions for consistency and asymptotic normality of the MSCD estimator in conditional models, where $Q(\cdot ;u,\psi )=G(\cdot ;u,\psi )$ defined in (\ref{G}) and $\widehat{Q}_{T}(\cdot ;u,\psi )=\widehat{G}_{T}(\cdot ;u,\psi )$ defined in (\ref{GT}). 

Throughout this section, we assume that the parameter space $\Psi $ is compact and the weight function $w(\cdot)$ is integrable, i.e., $\int_{\mathbb{R}} w(s) \mathrm{d}s < \infty$. 

\subsection{Smoothness Assumptions}
 
We impose Conditions \ref{condition:Lip-F} and \ref{condition:norm-diff-true-Q} below on the conditional model and verify them for the one-sided parameter dependent support model.\footnote{Verification for the two-sided parameter-dependent support model is similar and postponed to Appendix \ref{appendix:two-sided}.}

\begin{condition}[Pointwise Lipschitz Continuity]
\label{condition:Lip-F} The conditional cdf $F(y|x, \psi)$ is pointwise Lipschitz with
respect to $\psi$ in the sense that there exists a function $M(y, x)$ such that for any $\psi, \psi^{\prime}\in \Psi$, 
\begin{align*}
|F(y|x, \psi) - F(y|x, \psi^{\prime})| \le M(y, x) \|\psi - \psi^{\prime}\|
\end{align*}
and $\int_{u \in \mathbb{S}^{d-1}, u_1 \ne 0} \int_{-\infty}^{\infty} \int
M^2(u_1^{-1}(s - u_2^{\top} x); x) \mathrm{d}F_X(x) w(s) \mathrm{d}s \mathrm{%
d}\varsigma(u) < \infty$, where $F_X(\cdot)$ is the cdf of $X$. 
\end{condition}

Condition \ref{condition:Lip-F} is implied by uniform Lipschitz continuity of $F(y|x,\psi )$ with respect to $\psi$, i.e., there exists a finite constant $C$ such that for all $y\in\mathcal{Y}$ and  $x\in\mathcal{X}$,
\begin{equation*}
|F(y|x,\psi )-F(y|x,\psi ^{\prime })|\leq C\Vert \psi -\psi ^{\prime }\Vert.
\end{equation*}
It is easy to see that Condition \ref{condition:Lip-F} is satisfied in the one-sided and two-sided uniform models. 

We now show that it is satisfied in the one-sided parameter-dependent support models in Example 2.1 under very mild conditions.

For one-sided parameter dependent support models, uniform Lipschitz continuity of $F(y|x,\psi )$ with respect to $\psi$ holds if $F(y|x,\psi )$ is absolutely continuous with respect to $\psi $ for
each $y$ and $x$, and
$\sup_{y\neq g(x,\psi )}|\partial F(y|x,\psi )/\partial \psi |$ is uniformly
bounded by a finite absolute constant. It follows from equation (\ref{eq:one-sided}) that 
\begin{equation*}
\frac{\partial F(y|x,\psi )}{\partial \psi }=%
\begin{cases}
0 & \text{ if }y<g(x, \theta) \\ 
-\frac{\partial g(x,\psi )}{\partial \psi }f_{\epsilon }(y-g(X,\psi )|x,\psi
) + \int_{0}^{y-g(x,\psi )}\frac{\partial f_{\epsilon }(\epsilon |x,\psi )}{%
\partial \psi }\mathrm{d}\epsilon & \text{ if }y>g(x,\theta).%
\end{cases}%
\end{equation*}
Here, $\psi = (\theta^{\top}, \gamma^{\top})^{\top}$ and $\partial g(x, \psi)/\partial \psi := [\partial g(x, \theta)/\partial \theta^{\top}, 0]^{\top}$ because $g(x, \theta)$ does not contain $\gamma$.

\begin{lemma}[One-sided model] 
\label{lemma:Lipchitz-one-sided} 
Suppose that $f_{\epsilon }(\epsilon |x,\psi )$ is continuous in $(\epsilon ,\psi ) \in \mathbb{R}_{+} \times \Psi$ for each $x$, 
and $f_{\epsilon }(\epsilon |x,\psi )$ is differentiable with respect to $\psi$ for each $\epsilon$ and $\psi$; 
and $g(x, \psi)$ is continuously differentiable with respect to $\psi$ for each $x$, 
and there exists $\bar{f}_{\epsilon }(\epsilon |x)$
such that 

(i)
$\left\Vert \frac{f_{\epsilon }(\epsilon |x,\psi )}{\partial \psi }%
\right\Vert \leq \bar{f}_{\epsilon }(\epsilon |x)\text{ for all }\epsilon
,x,\psi$, (ii) $\int_{0}^{\infty}\bar{f}_{\epsilon }(\epsilon |x)
\mathrm{d}\epsilon < \infty \text{ for each } x$, 

(iii)
$\sup_{x,\psi }\left\Vert \frac{\partial g(x,\psi )}{\partial \psi }%
\right\Vert <\infty$, (iv) $\sup_{\epsilon .x,\psi }|f_{\epsilon }(\epsilon
|x,\psi )|<\infty$, and 

(v) $\sup_{x, \psi}
\int_{0}^{\infty} \left\Vert \frac{\partial f_{\epsilon }(\epsilon |x,\psi )}{%
\partial \psi } \right\Vert \mathrm{d}\epsilon  <\infty.$

Then $F(y|x, \psi)$ is Lipschitz with respect to $%
\psi $ uniformly in $y$ and $x$. 
\end{lemma}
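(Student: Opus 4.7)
The plan is to control $F(y|x,\psi) - F(y|x,\psi')$ directly by splitting the difference into two pieces: one that captures the change of the integrand $f_\epsilon$ as $\psi$ varies, and one that captures the shift of the upper limit $(y - g(x,\theta))_+$ caused by the parameter-dependent support boundary. Writing $a(\psi) := (y - g(x,\theta))_+$ and assuming without loss of generality that $a(\psi) \le a(\psi')$, I would decompose
\begin{equation*}
F(y|x,\psi) - F(y|x,\psi') = \int_0^{a(\psi)} \bigl[f_\epsilon(u|x,\psi) - f_\epsilon(u|x,\psi')\bigr]\, du \;-\; \int_{a(\psi)}^{a(\psi')} f_\epsilon(u|x,\psi')\, du.
\end{equation*}

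For the first integral, the fundamental theorem of calculus applied along the segment $\psi_s := (1-s)\psi' + s\psi$, $s\in[0,1]$, combined with Fubini, yields
\begin{equation*}
\Bigl|\int_0^{a(\psi)}\bigl[f_\epsilon(u|x,\psi) - f_\epsilon(u|x,\psi')\bigr]\, du\Bigr| \le \|\psi-\psi'\| \int_0^1 \int_0^{\infty} \Bigl\|\frac{\partial f_\epsilon(u|x,\psi_s)}{\partial \psi}\Bigr\|\, du\, ds,
\end{equation*}
which by condition (v) is bounded by a finite absolute constant times $\|\psi-\psi'\|$. For the second integral, the Lipschitz property $|u_+ - v_+|\le |u-v|$ together with the mean value theorem applied to $\psi\mapsto g(x,\psi)$ and condition (iii) gives $|a(\psi)-a(\psi')|\le |g(x,\theta)-g(x,\theta')|\le C_g\|\psi-\psi'\|$; combined with the uniform bound in condition (iv), this bounds the second integral by $C_f C_g\|\psi-\psi'\|$. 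Adding the two bounds produces a Lipschitz constant independent of $y$ and $x$, which is exactly the conclusion of the lemma.

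The main obstacle is that $F(y|x,\psi)$ is not globally $C^1$ in $\psi$: on the ``hitting surface'' $\{\psi:g(x,\theta)=y\}$ the partial derivative jumps from $0$ to a nonzero expression, so a one-shot application of the mean value theorem to $F$ itself fails. The splitting above sidesteps this difficulty by isolating the moving-boundary contribution in a separate integral that is controlled by a Lipschitz bound on $g$ rather than by a (nonexistent) derivative of $F$. Conditions (i) and (ii) are used only to guarantee that $\partial f_\epsilon/\partial\psi$ is well defined and integrable so that the displayed manipulations are valid, while the explicit uniform Lipschitz constant is produced by (iii), (iv), and (v).
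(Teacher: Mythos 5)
Your proof is correct and follows essentially the same route as the paper: both decompose $F(y|x,\psi)-F(y|x,\psi')$ into a boundary-shift term controlled by conditions (iii)--(iv) and an integrand-change term controlled by condition (v). The only cosmetic difference is that you integrate the derivative along the segment $\psi_s$ and invoke Fubini where the paper applies a pointwise mean-value argument, which if anything is the slightly more careful way to reach the same bound.
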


The first two assumptions (i) and (ii) in Lemma \ref{lemma:Lipchitz-one-sided} ensure
validity of interchanging the differentiation operation and integration in equation (\ref{eq:one-sided}). The third
and fourth assumptions (iii) and (iv) are included in Conditions C2 and C3 in \cite%
{Chernozhukov2004}. The fifth condition (v) is stronger than the condition 
$\sup_{\psi} \int
\int \left\Vert \frac{\partial f_{\epsilon }(y-g(x, \psi) |x,\psi )}{%
\partial \psi } \right\Vert \mathrm{d}y \mathrm{d}F_X(x) < \infty$ in C2 in \cite%
{Chernozhukov2004}.
However, the assumption: $f_{\epsilon
}(0|x,\psi )>\eta >0$ in \cite{Chernozhukov2004} (see also (\ref{Jump}) in  Example 2.1) is not required in Lemma \ref{lemma:Lipchitz-one-sided}. Consequently, the asymptotic distribution of our MSCD estimator is normal regardless of the presence/absence of a jump in $f_{\epsilon}(0|x,\psi )$.

Let 
\begin{align*}
R(s; u, \psi, \psi_0) := Q(s, u, \psi) - Q(s; u, \psi_0) - (\psi -
\psi_0)^{\top} D(s; u, \psi_0),
\end{align*}
where $D(\cdot;\cdot,\psi _{0})$ is a $L_{2}(\mathbb{R} \times \mathbb{S}
^{d-1} ,w(s)\mathrm{d}s \mathrm{d}\varsigma)$-measurable function. We impose the analog of 
Assumption \ref{assumption:norm-diff} on the population function $Q$ and use it to verify Assumption \ref{assumption:norm-diff} for the random $\widehat{Q}_T$.
\begin{condition}
\label{condition:norm-diff-true-Q} $Q(\cdot ;u,\psi )$ satisfies: for any $\tau _{T} \rightarrow 0$,
\begin{equation*}
\sup_{\psi \in \Psi ;\;\lVert \psi - \psi_{0} \rVert \leq \tau_{T}} \left| 
\frac{T \int_{\mathbb{S}^{d-1}}\int_{-\infty}^{\infty} \left(
R(s;u,\psi,\psi _{0})\right) ^{2}w(s)\mathrm{d}s \mathrm{d}\varsigma (u)}{(1
+ \lVert \sqrt{T}(\psi-\psi _{0})\rVert) ^{2}} \right| = o(1).
\end{equation*}
\end{condition}

We now verify Condition \ref{condition:norm-diff-true-Q} in the one-sided parameter-dependent support model.

\begin{lemma} \label{lemma:sup-bound-hessian}
Suppose that all the assumptions in Lemma \ref{lemma:Lipchitz-one-sided} hold and $F(y|x, \psi)$ continuously second-order differentiable with respect to $\psi$ except for $y = g(x, \psi)$.
Then Condition \ref{condition:norm-diff-true-Q} is implied by 
\begin{equation*}
\sup_{y \ne g(x, \psi) ,x,\psi }\left\Vert \frac{\partial ^{2}F(y|x, \psi)}{\partial
\psi \partial \psi ^{\prime }}\right\Vert <\infty .
\end{equation*}
\end{lemma}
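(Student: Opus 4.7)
The plan is to produce a uniform second-order Taylor expansion of $G(s;u,\psi)$ in $\psi$ at $\psi_0$ with a Hessian that is bounded in $(s,u,\psi)$, giving the pointwise bound $R(s;u,\psi,\psi_0)^{2}\le C\|\psi-\psi_0\|^{4}$; the statement of Condition \ref{condition:norm-diff-true-Q} then drops out by elementary manipulation of the target ratio after integration against the finite measure $w(s)\,ds\,d\varsigma(u)$.

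First I would define the linearizer $D(s;u,\psi_0)$. On the slice $u_1=0$, formula (\ref{G}) gives $G(s;u,\psi)=\Pr(u_2^{\top}X\le s)$, which does not depend on $\psi$, so I set $D\equiv 0$ and $R\equiv 0$ there. For $u_1\neq 0$, (\ref{G}) writes $G(s;u,\psi)=\pm\mathbb{E}[F(u_1^{-1}(s-u_2^{\top}X)|X,\psi)]$ up to an additive constant. Hypothesis (i) of Lemma \ref{lemma:Lipchitz-one-sided} and the envelope $M$ from Condition \ref{condition:Lip-F} supply an integrable dominating function for $\partial F/\partial\psi$, so a Leibniz-type interchange yields $D(s;u,\psi_0)=\partial_{\psi}G(s;u,\psi)|_{\psi=\psi_0}$ as a member of $L_{2}(\mathbb{R}\times\mathbb{S}^{d-1},w(s)\,ds\,d\varsigma)$.

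Next I would show the Hessian of $G$ is uniformly bounded. By hypothesis $F(y|x,\psi)$ is twice continuously differentiable in $\psi$ whenever $y\neq g(x,\psi)$, with $\|\partial^{2}F/\partial\psi\partial\psi^{\prime}\|$ bounded by some constant $C$. For each fixed $(s,u)$ with $u_1\neq 0$ the exceptional set $\{x:u_1^{-1}(s-u_2^{\top}x)=g(x,\psi)\}$ is the level set of a single smooth equation and is therefore $F_{X}$-null under the mild regularity implicit in the setup, so the integrand is twice differentiable in $\psi$ $X$-a.s.\ and dominated by $C$. Dominated convergence then delivers $\|\partial^{2}G(s;u,\psi)/\partial\psi\partial\psi^{\top}\|\le C$ uniformly in $(s,u,\psi)$. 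Taylor's theorem with remainder yields $|R(s;u,\psi,\psi_0)|^{2}\le (C^{2}/4)\|\psi-\psi_0\|^{4}$ pointwise; integrating against $w(s)\,ds\,d\varsigma(u)$ (finite because $\int w(s)\,ds<\infty$ and $\varsigma$ is a probability measure) bounds the numerator of the ratio in Condition \ref{condition:norm-diff-true-Q} by $C^{\prime}T\|\psi-\psi_0\|^{4}$. Writing $a=\sqrt{T}\|\psi-\psi_0\|$, the ratio is at most $C^{\prime}\|\psi-\psi_0\|^{2}\cdot a^{2}/(1+a)^{2}\le C^{\prime}\tau_T^{2}=o(1)$, uniformly over $\|\psi-\psi_0\|\le\tau_T$, which is exactly Condition \ref{condition:norm-diff-true-Q}.

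The main obstacle is reconciling the kink of $F(y|x,\psi)$ at $y=g(x,\psi)$ with the second-order Leibniz interchange required to produce a uniformly bounded Hessian for $G$. The crucial observation is that the kink locus, viewed in $X$ for each fixed $(s,u,\psi)$ with $u_1\neq 0$, is the zero set of a single smooth equation and is hence $F_{X}$-null under minimal regularity of the conditioning distribution; once this is in place the uniform bound on $\partial^{2}F/\partial\psi\partial\psi^{\prime}$ away from the kink provides a constant envelope, and the rest of the argument is a routine application of dominated convergence and Taylor expansion.
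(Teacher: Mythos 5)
There is a genuine gap, and it sits at the heart of your argument: the claim that $G(s;u,\cdot)$ has a uniformly bounded Hessian obtained by differentiating twice under the expectation over $X$. The one-sided model has $\partial F(y|x,\psi)/\partial\psi=0$ for $y<g(x,\psi)$ but $\partial F(y|x,\psi)/\partial\psi\to-\frac{\partial g}{\partial\psi}f_{\epsilon}(0|x,\psi)$ as $y\downarrow g(x,\psi)$, so the map $\psi\mapsto F(y|x,\psi)$ is Lipschitz but has a \emph{kink} (a jump in its first derivative of size $f_{\epsilon}(0|x,\psi)\,\partial g/\partial\psi$) at every $\psi$ with $g(x,\psi)=y$ — precisely the nonregular case the lemma is meant to cover. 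Consequently the second-order difference quotients of the integrand are of order $1/h$ near the kink and are \emph{not} dominated by any constant, so dominated convergence does not let you pass $\partial^{2}/\partial\psi\partial\psi^{\prime}$ inside $\mathbb{E}[\cdot]$. Your attempted rescue — that the kink locus $\{x:u_1^{-1}(s-u_2^{\top}x)=g(x,\psi)\}$ is $F_X$-null — is an unstated assumption on the distribution of $X$ (it fails for discrete $X$, which the lemma does not exclude), and even when it holds for each fixed $\psi$ it does not control the mass of the set of $x$ whose kink is \emph{crossed} between $\psi$ and $\psi+he$, which is what the interchange actually requires. Without that, the pointwise bound $|R|^{2}\le C\|\psi-\psi_0\|^{4}$ is unavailable, and your final $O(\tau_T^{2})$ rate for the ratio is too optimistic.

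The paper's proof is built around exactly this obstruction. It first bounds $\int R^{2}$ by $\mathbb{E}\bigl[\int R_t^{2}\bigr]$ via Jensen (working with the per-observation remainder $R_t$ rather than with $G$), and then splits the $s$-integral according to whether $u_1^{-1}(s-u_2^{\top}X_t)$ falls in a shrinking window $A_T(X_t)$ around the moving kink, of width $O(\tau_T)$ determined by the Lipschitz constant of $g$. Inside the window only the first-order Lipschitz bound $|R_t|\le C\|\psi-\psi_0\|$ is used, and the smallness comes from the $O(\tau_T)$ Lebesgue measure of the window in $s$, yielding a contribution of order $\tau_T^{3}$ — cubic, not quartic. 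Outside the window the entire segment from $\psi_0$ to $\psi$ avoids the kink, so the uniform Hessian bound on $F$ (not on $G$) legitimately gives the $\|\psi-\psi_0\|^{4}$ term via Taylor's theorem. The denominator $(1+\lVert\sqrt{T}(\psi-\psi_0)\rVert)^{2}$ in Condition \ref{condition:norm-diff-true-Q} is what absorbs the cubic kink contribution (giving a ratio of order $\tau_T$ rather than $\tau_T^{2}$); this is why the condition is stated with the Pollard--Andrews form rather than as plain norm-differentiability. To repair your proof you would need to add this localization-around-the-kink step, or else impose and use an explicit anticoncentration/absolute-continuity condition on the law of $g(X,\psi)-u_1^{-1}(s-u_2^{\top}X)$, which is not among the lemma's hypotheses.
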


In the one-sided model, when $y > g(x, \psi)$, 
\begin{align*}
\frac{\partial^2 F(y|x, \psi)}{\partial \psi \partial \psi^{\prime}} & = - 
\frac{\partial^2 g(x, \psi)}{\partial \psi \partial \psi^{\prime}}
f_{\epsilon}(y - g(x, \psi)|x, \psi) \\
& \quad + \frac{\partial g(x, \psi)}{\partial \psi} \left[\frac{\partial
g(x, \psi)}{\partial \psi^{\prime}} \frac{\partial f_{\epsilon}(y - g(x, \psi)|x,
\psi)}{\partial \epsilon} + \frac{\partial f_{\epsilon}(y - g(x, \psi)|x, \psi) }{\partial
\psi^{\prime}} \right] \\
& \quad - \frac{\partial g(x, \psi)}{\partial \psi} \frac{f_{\epsilon}(y -
g(x, \psi)|x, \psi) }{\partial \psi^{\prime}} + \int_{0}^{y - g(x, \psi)} 
\frac{\partial^2 f_{\epsilon}(\epsilon, x, \psi)}{\partial \psi \partial
\psi^{\prime}} \mathrm{d}\epsilon.
\end{align*}

\begin{lemma}[One-sided model] 
	\label{lemma:norm-diff-one-sided-general} 
 In addition to the conditions in Lemma \ref{lemma:Lipchitz-one-sided}, 
 suppose that  $\partial f_{\epsilon }(\epsilon |x,\psi )$ is continuously differentiable with respect to $(\epsilon, \psi)$ (except for $\epsilon = 0$) for each $x$,
 and $f_{\epsilon }(\epsilon |x,\psi )$ is second-order continuously differentiable with respect to $\psi$ for each $\epsilon$ and $\psi$; 
and $g(x, \psi)$ is continuously second-order differentiable with respect to $\psi$ for each $x$. Further,
assume that there exists $\tilde{f}_{\epsilon}(\epsilon |x)$ such that 

(i) $\left\Vert \frac{\partial f_{\epsilon }(\epsilon |x,\psi )}{\partial \psi
			\partial \psi ^{\prime }}\right\Vert \leq \tilde{f}_{\epsilon}(\epsilon|x)$ and $\int_{0}^{\infty} \tilde{f}_{\epsilon }(\epsilon |x)\mathrm{d}\epsilon <\infty \text{
			for each } x$. 
   
   (ii) Further, suppose $\sup_{x,\psi }\left\Vert \frac{\partial ^{2}g(x,\psi )}{\partial \psi
			\partial \psi ^{\prime }}\right\Vert ,\; \sup_{\epsilon \ge 0,x,\psi }\left\Vert 
		\frac{\partial f_{\epsilon }(\epsilon |x,\psi )}{\partial \psi }\right\Vert
		,\; 
  \sup_{\epsilon > 0, x,\psi }\left\Vert \frac{\partial f_{\epsilon }(\epsilon
			|x,\psi )}{\partial \epsilon }\right\Vert$, and 
   
  $ \sup_{x,\psi}
		\int_{0}^{\infty} \left\| \frac{\partial ^{2}f_{\epsilon }(\epsilon |x,\psi )}{%
			\partial \psi \partial \psi ^{\prime }} \right\| \mathrm{d}\epsilon$
	are bounded above by finite constants. 
 
 Then $\widehat{Q}_{T}(\cdot ;u,\psi
	) $ is norm-differentiable at $\psi =\psi _{0}$ with $D(s;u,\psi )=\mathbb{E}%
	\left[ \frac{\partial F(u_{1}^{-1}(s-u_{2}^{\top }X_{t})|X_{t},\psi )}{%
		\partial \psi }\right] $.
\end{lemma}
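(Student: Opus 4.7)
I will verify Assumption \ref{assumption:norm-diff} for $\widehat Q_T = \widehat G_T$ by taking the deterministic population derivative $\widehat D_T \equiv D$. The analysis splits on the sign of $u_1$: from (\ref{GT}), the case $u_1 < 0$ is handled by the same argument as $u_1 > 0$ up to a sign, and when $u_1 = 0$ the function $\widehat G_T$ is independent of $\psi$ so $\widehat R_T \equiv 0$; I will therefore focus on $u_1 > 0$. Writing $F_t(\psi) := F(u_1^{-1}(s - u_2^\top X_t)\mid X_t, \psi)$, I will use the decomposition
\begin{equation*}
\widehat R_T(s;u,\psi,\psi_0) = \underbrace{\tfrac{1}{T}\sum_{t=1}^T \bigl[F_t(\psi) - F_t(\psi_0) - (\psi-\psi_0)^\top \partial_\psi F_t(\psi_0)\bigr]}_{=:\, I_1} + \underbrace{(\psi-\psi_0)^\top\Bigl[\tfrac{1}{T}\sum_t \partial_\psi F_t(\psi_0) - D(s;u,\psi_0)\Bigr]}_{=:\, I_2},
\end{equation*}
and handle $I_1$ (a deterministic per-datum Taylor remainder, averaged) and $I_2$ (a centered empirical mean) separately. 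The claimed identity $D(s;u,\psi) = \partial_\psi Q(s;u,\psi)$ follows from dominated convergence applied to (\ref{G}) using the uniform bound on $\|\partial_\psi F\|$ implied by Lemma \ref{lemma:Lipchitz-one-sided}(iii)--(v), which also puts $D$ in $L_2(\mathbb{S}^{d-1}\times\mathcal S,\,w\,ds\,d\varsigma)$.

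\textbf{The sampling term $I_2$.} The plan is to invoke a Hilbert-space law of large numbers: the summands $\partial_\psi F_t(\psi_0)$ are i.i.d.\ with finite second moment in $L_2(\mathbb{S}^{d-1}\times\mathcal S,\,w\,ds\,d\varsigma)$, so
\begin{equation*}
\int_{\mathbb{S}^{d-1}}\!\!\int_{\mathcal S}\Bigl\|\tfrac{1}{T}\sum_t \partial_\psi F_t(\psi_0) - D\Bigr\|^2 w\,ds\,d\varsigma = O_p(T^{-1}).
\end{equation*}
Cauchy--Schwarz then gives $\int\!\!\int I_2^2\,w\,ds\,d\varsigma \leq \|\psi-\psi_0\|^2\cdot O_p(T^{-1})$, so the ratio in Assumption \ref{assumption:norm-diff} picks up a contribution of order $\|\psi-\psi_0\|^2\,O_p(1)/(1+\sqrt T\|\psi-\psi_0\|)^2$, which is $o_p(1)$ uniformly on $\|\psi-\psi_0\|\leq \tau_T$ by a quick case split on whether $\sqrt T\|\psi-\psi_0\|$ is bounded.

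\textbf{The Taylor remainder $I_1$ (the hard part).} For fixed $t, u, X_t$, set $y = u_1^{-1}(s - u_2^\top X_t)$; by the Hessian formula displayed just before the present lemma together with hypotheses (i)--(ii) (every term of that formula is uniformly bounded under them), $\psi\mapsto F(y\mid X_t,\psi)$ has a uniformly bounded Hessian off the kink surface $\{y = g(X_t,\psi)\}$. Along the segment $\psi(s') = \psi_0 + s'(\psi-\psi_0)$, $s'\in[0,1]$, the kink is crossed only when $y$ lies between $g(X_t,\psi_0)$ and $g(X_t,\psi)$---a $y$-interval of length at most $\sup_{x,\psi}\|\partial_\psi g\|\cdot\|\psi-\psi_0\|$---which translates to a ``bad'' $s$-set of length $u_1\cdot O(\|\psi-\psi_0\|)\leq O(\|\psi-\psi_0\|)$. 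On the complementary good set, standard second-order Taylor gives $|r_t|\leq C\|\psi-\psi_0\|^2$; on the bad set, only the uniform Lipschitz bound from Lemma \ref{lemma:Lipchitz-one-sided} is available, giving $|r_t|\leq C\|\psi-\psi_0\|$. Consequently (assuming $w$ is locally bounded, so that $\int_B w\,ds = O(\mathrm{meas}\,B)$ for small $B$),
\begin{equation*}
\int_{\mathcal S} r_t^2\,w\,ds \leq C\|\psi-\psi_0\|^2\cdot O(\|\psi-\psi_0\|) + C\|\psi-\psi_0\|^4 = O(\|\psi-\psi_0\|^3),
\end{equation*}
with constants uniform in $X_t, u$. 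Cauchy--Schwarz yields $I_1^2 \leq T^{-1}\sum_t r_t^2$, so $\int\!\!\int I_1^2\,w\,ds\,d\varsigma = O(\|\psi-\psi_0\|^3)$, and the ratio in Assumption \ref{assumption:norm-diff} becomes $O(T\|\psi-\psi_0\|^3/(1+\sqrt T\|\psi-\psi_0\|)^2) = O(\|\psi-\psi_0\|) = o(1)$ uniformly on $\|\psi-\psi_0\|\leq\tau_T$. Combining via $\widehat R_T^2 \leq 2(I_1^2 + I_2^2)$ then establishes Assumption \ref{assumption:norm-diff}.

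\textbf{The main obstacle.} The difficulty is that $\partial_\psi F(y\mid x,\cdot)$ \emph{jumps} across $y = g(x,\psi)$ because $f_\epsilon(0\mid x,\psi)$ is allowed to be strictly positive---this is exactly the one-sided parameter-dependent-support feature that normally breaks normal asymptotics for MLE. As a result, no pointwise second-order Taylor bound on $r_t$ is available over the whole $s$-domain. The rescue, specific to the one-sided-model geometry, is that the bad $s$-region has Lebesgue measure of order $\|\psi-\psi_0\|$: combining the weaker $O(\|\psi-\psi_0\|)$ bound on this small set with the stronger $O(\|\psi-\psi_0\|^2)$ bound on its complement produces the integrated $O(\|\psi-\psi_0\|^3)$ estimate that exactly matches the $(1+\sqrt T\|\psi-\psi_0\|)^2$ denominator after multiplication by $T$---providing the precise amount of smoothing needed for the norm-derivative to agree with the smooth population derivative $D$.
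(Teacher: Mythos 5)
Your proof is correct, and it reaches the lemma's stated conclusion (norm-differentiability of the \emph{empirical} $\widehat{Q}_T$) by a genuinely different decomposition than the paper's. The paper splits $\widehat{R}_T$ into an empirical-minus-population piece $\bigl[\widehat{Q}_T(\psi)-Q(\psi)\bigr]-\bigl[\widehat{Q}_T(\psi_0)-Q(\psi_0)\bigr]$, controlled uniformly in $\psi$ by a degenerate U-process bound (Corollary 8 of Sherman, via Lemma \ref{lemma:proof-norm-diff-approx-Q-SC}), plus a population Taylor remainder $Q(\psi)-Q(\psi_0)-(\psi-\psi_0)^{\top}D$, controlled by the good/bad-region argument in the proof of Lemma \ref{lemma:sup-bound-hessian}; the proof of Lemma \ref{lemma:norm-diff-one-sided-general} itself only supplies the Hessian bound feeding into that second piece. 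You instead Taylor-expand each summand $F_t$ first, so that all the $\psi$-dependence of the stochastic error sits in the explicit factor $(\psi-\psi_0)^{\top}$ multiplying a centered i.i.d.\ average of gradients evaluated at $\psi_0$; that term is then dispatched by an elementary Hilbert-space LLN at rate $O_p(T^{-1})$, with no uniformity in $\psi$ and no U-statistic machinery needed, while your deterministic remainder $I_1$ is handled by essentially the same kink-geometry argument as the paper's Lemma \ref{lemma:sup-bound-hessian} (bad $s$-set of measure $O(\|\psi-\psi_0\|)$ with the Lipschitz bound, complement with the bounded Hessian, yielding the $O(\|\psi-\psi_0\|^3)$ integral that beats the $(1+\sqrt{T}\|\psi-\psi_0\|)^2$ denominator). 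Two small points of care: your "bad" $y$-interval should be the range of $g(X_t,\cdot)$ over the whole segment from $\psi_0$ to $\psi$ (contained in an interval of length $2\sup\|\partial_\psi g\|\,\|\psi-\psi_0\|$ around $g(X_t,\psi_0)$), not merely the interval between the two endpoint values, since $g$ need not be monotone along the segment when $d_\psi>1$ --- your stated length bound already accommodates this; and your bounded-$w$ requirement on small sets is a (mild) condition the paper's own display in the proof of Lemma \ref{lemma:sup-bound-hessian} also uses implicitly, so you are no worse off there.
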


Assumption (i) in Lemma \ref{lemma:norm-diff-one-sided-general} ensures
validity of interchanging the differentiation operation and integration. The first three conditions in Assumption (ii) are included in Conditions C2 and C3 in \cite%
{Chernozhukov2004}. It is important to note that like in Lemma \ref{lemma:Lipchitz-one-sided}, the assumption: $f_{\epsilon
}(0|x,\psi )>\eta >0$ in \cite{Chernozhukov2004} (see also (\ref{Jump}) in  Example 2.1) is not required in Lemma \ref{lemma:norm-diff-one-sided-general}.

\subsection{Consistency and Asymptotic Normality}

We are now ready to state the main result in this section.
\begin{proposition} 
\label{prop:Asy Normality-one-sided}
Suppose that Assumptions \ref{assumption:DGP}, \ref{assumption:convergence} (i), \ref{assumption:identification}, 
\ref{assumption:hessian-pd} and Conditions \ref{condition:Lip-F}- \ref{condition:norm-diff-true-Q} hold. In addition, $\int_{u \in \mathbb{S}^{d-1}} \int_{-\infty}^{\infty} \| D(s, u, \psi_0) \| w(s)  \mathrm{d}s \mathrm{d}\varsigma(u)
< \infty$. Then $\widehat{\psi}_T$ is consistent and asymptotically normally distributed,  i.e., 
\begin{equation*}
\sqrt{n}(\hat{\psi}_{T}-\psi _{0})\xrightarrow{d}N(0,B_{0}^{-1}\Omega
_{0}B_{0}^{-1}),
\end{equation*}
where 
$
B_{0}=\int_{\mathbb{S}^{d-1}}\int_{\mathcal{S}} D(s;u,\psi _{0})D^{\top
}(s;u,\psi _{0})w(s)\mathrm{d}sd\varsigma (u)
$
with $D(s;u,\psi )=\mathbb{E}%
	\left[ \frac{\partial F(u_{1}^{-1}(s-u_{2}^{\top }X_{t})|X_{t},\psi )}{%
		\partial \psi }\right] $, and 
$\Omega _{0}=(e_{1}^{\top }, - e_{1}^{\top})V_{0}%
\begin{pmatrix}
e_{1} \\ 
- e_{1}%
\end{pmatrix}%
$ in which $e_{1}=(1,\dotsc ,1)^{\top }$ is a $d_{\psi}$ by $1$ vector of
ones with 
\begin{align*}
V_0 = \int_{u \in \mathbb{S}^{d-1}} \int_{v \in \mathbb{S}^{d-1}} \int \int 
\begin{pmatrix}
A_{11}(t, s; u, v) & A_{12}(t, s; u, v) \\ 
A_{21}(t, s; u, v) & A_{22}(t,s; u, v)%
\end{pmatrix}
\otimes D(s; u, \psi_0) D(t; u, \psi_0)^{\top} w(s) w(t) \mathrm{d}s \mathrm{%
\ d}t \mathrm{d}\varsigma(u) \mathrm{d}\varsigma(v)
\end{align*}
in which $\otimes$ is the Kronecker product, and 
\begin{align*}
A_{11}(t, s; u, v) & = \mathbb{E}[I(u^{\top} Z \le t) I(v^{\top} Z \le s)] -
G(t; u) G(s, v), \\
A_{22}(t, s; u, v) & = \mathbb{E}\left\{\mathbb{E}[I(u^{\top} Z \le t|X)] 
\mathbb{E}[I(v^{\top} Z \le s)|X] \right\} - G(t; u) G(s, v), \\
A_{12}(t, s; u, v) & = \mathbb{E}\left\{I(u^{\top} Z \le t) \mathbb{E}
[I(v^{\top} Z \le s)|X] \right\} - G(t; u) G(s, v), \\
A_{21}(t, s; u, v) & = \mathbb{E}\left\{\mathbb{E}[I(u^{\top} Z \le t|X)]
I(v^{\top} Z \le s) \right\} - G(t; u) G(s, v).
\end{align*}
\end{proposition}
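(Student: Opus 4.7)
The plan is to verify the remaining high-level hypotheses of Theorems \ref{thm:consistency-general} and \ref{thm:asym-normality-general} — specifically Assumption \ref{assumption:convergence}(ii), Assumption \ref{assumption:norm-diff}, and Assumptions \ref{assumption:big-oh} and \ref{assumption:CLT} — under the primitive Conditions \ref{condition:Lip-F} and \ref{condition:norm-diff-true-Q}, and then invoke Theorem \ref{thm:asym-normality-general} with the explicit $D$ identified in Lemma \ref{lemma:norm-diff-one-sided-general} to pin down $B_0$ and $V_0$. Throughout I would exploit the fact that, for $u_1>0$, $\widehat{G}_T(s;u,\psi) - G(s;u,\psi) = T^{-1}\sum_t h_t(s,u,\psi)$ with $h_t(s,u,\psi) := F(u_1^{-1}(s - u_2^\top X_t)|X_t,\psi) - \mathbb{E}[F(u_1^{-1}(s - u_2^\top X)|X,\psi)]$ (the $u_1\le 0$ cases being analogous), so virtually every step reduces to controlling a sample average uniformly in $(s,u,\psi)$.

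For Assumption \ref{assumption:convergence}(ii), the pointwise $L^2$ convergence at each $\psi$ follows from the LLN applied to $h_t^2$, whose first moment is bounded by Condition \ref{condition:Lip-F}. To pass to uniformity, I would use the stochastic Lipschitz bound on $\widehat{G}_T$ inherited from Condition \ref{condition:Lip-F}, whose bracketing constant is $T^{-1}\sum_t M^2(u_1^{-1}(s-u_2^\top X_t);X_t)$ and is $O_p(1)$ in $L^2(w(s)\,ds\,d\varsigma(u))$ by the integrability in Condition \ref{condition:Lip-F}; a finite $\epsilon$-net of the compact $\Psi$ completes the argument. For Assumption \ref{assumption:big-oh}(i), the integrated squared error of the empirical cdf is $O_p(T^{-1})$ by Fubini and $\mathrm{Var}(Q_T(s;u))\le G(s;u)(1-G(s;u))/T$; (ii) follows identically from $\mathrm{Var}(h_t)\le 1$; and (iii) follows from the LLN applied to the sample average $\widehat{D}_T$ of the bounded derivatives supplied by Lemma \ref{lemma:norm-diff-one-sided-general}, which converges to $D(s;u,\psi_0)=\mathbb{E}[\partial F(u_1^{-1}(s-u_2^\top X)|X,\psi)/\partial\psi|_{\psi_0}]$.

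The main obstacle is Assumption \ref{assumption:norm-diff}, since $\widehat{Q}_T$ is itself a sample object and the remainder must be controlled uniformly on the shrinking neighborhood $\|\psi-\psi_0\|\le \tau_T$. I would decompose
\begin{equation*}
\widehat{R}_T(s;u,\psi,\psi_0) = R(s;u,\psi,\psi_0) + \Delta_T(s;u,\psi) - \Delta_T(s;u,\psi_0) - (\psi-\psi_0)^\top\bigl[\widehat{D}_T(s;u,\psi_0) - D(s;u,\psi_0)\bigr],
\end{equation*}
where $R$ is the deterministic remainder bounded directly by Condition \ref{condition:norm-diff-true-Q} and $\Delta_T := \widehat{G}_T - G$. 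The bracketed empirical-process increment $\Delta_T(\cdot,\cdot,\psi)-\Delta_T(\cdot,\cdot,\psi_0)$ is the delicate piece: using the Lipschitz bound from Condition \ref{condition:Lip-F} together with a maximal inequality for the shifted class $\{h_t(\cdot,\cdot,\psi)-h_t(\cdot,\cdot,\psi_0):\|\psi-\psi_0\|\le\tau_T\}$, its $L^2(w\,ds\,d\varsigma)$-norm is $O_p(\|\psi-\psi_0\|/\sqrt{T})$, so its squared contribution to $T\int\int(\widehat{R}_T)^2 w\,ds\,d\varsigma$ is $O_p(\|\psi-\psi_0\|^2)$, absorbed into $(1+\|\sqrt{T}(\psi-\psi_0)\|)^2$; the final term is handled similarly using Assumption \ref{assumption:big-oh}(iii).

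Assumption \ref{assumption:CLT} reduces to a bivariate CLT applied to $T^{-1/2}\sum_t(\zeta_t^{(1)},\zeta_t^{(2)})$ with $\zeta_t^{(1)} := \int\!\int (I(u^\top Z_t\le s)-G(s;u))\,D(s;u,\psi_0)\,w(s)\,ds\,d\varsigma(u)$ and $\zeta_t^{(2)}$ defined analogously by replacing the indicator with $F(u_1^{-1}(s - u_2^\top X_t)|X_t,\psi_0)$. Both summands are i.i.d.\ with finite second moments — the integrability of $D$ is used here — and the four cross-moments $A_{ij}$ emerge directly from $\mathrm{Cov}(\zeta_t^{(i)},\zeta_t^{(j)})$ via Fubini, yielding the stated $V_0$. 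Once all four assumptions are in place, Theorem \ref{thm:asym-normality-general} delivers the conclusion with the explicit $B_0$ and $V_0$ given in the proposition.
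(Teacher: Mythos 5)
Your proposal is correct and follows essentially the same route as the paper: reduce the proposition to verifying Assumptions \ref{assumption:convergence}(ii), \ref{assumption:norm-diff}, \ref{assumption:big-oh}, and \ref{assumption:CLT} under Conditions \ref{condition:Lip-F}--\ref{condition:norm-diff-true-Q}, using the representation $\widehat{G}_T-G=T^{-1}\sum_t h_t$ and the same decomposition of $\widehat{R}_T$ into the deterministic remainder $R$ plus the empirical-process increment, then invoke Theorems \ref{thm:consistency-general} and \ref{thm:asym-normality-general}. The only cosmetic differences are that the paper controls the increment and the uniform convergence via degenerate U/V-statistic results (Newey's Corollary 4.1, Sherman's Corollary 8) where you invoke generic maximal inequalities and an $\epsilon$-net, and that the paper simply sets $\widehat{D}_T=D$ so that Assumption \ref{assumption:big-oh}(iii) is trivial, whereas you take $\widehat{D}_T$ to be a sample average and verify it by the LLN --- both choices are admissible.
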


By Theorems 3.1 and 3.2, it is sufficient to verify Assumption \ref{assumption:convergence} (ii), Assumption \ref%
{assumption:norm-diff}, Assumption \ref{assumption:big-oh} (ii) and (iii),
and Assumption \ref{assumption:CLT}, given that $\{Z_t\}$ satisfies Assumption \ref{assumption:DGP}. Assumption \ref{assumption:big-oh} (ii) is straightforward to verify, Assumption \ref{assumption:CLT} relies on CLT, and the other assumptions make heavy use of tools for degenerate U-statistics. We discuss details in the rest of this section.

\paragraph{Verification of Assumption \ref{assumption:convergence} (ii)} Noting that $$\widehat{G}_{T}(s;u,\psi ) = \frac{1}{T}\sum_{t=1}^{T} \mathbb{E}[u^{\top} Z_t
\le s | X_t, \psi],$$ we have 
\begin{align*}
\widehat{G}_T(s; u. \psi) - G(s; u, \psi) = \frac{1}{T} \sum_{t=1}^T \bigg( 
\mathbb{E}[I(u^{\top} Z_t \le s)|X_t, \psi] - \mathbb{E}[I(u^{\top} Z_t \le
s)|\psi]\bigg).
\end{align*}
Hence, $\int_{u \in \mathbb{S}^{d-1}} \int_{-\infty}^{\infty} (\widehat{G}%
_T(s; u. \psi) - G(s; u, \psi))^2 w(s) \mathrm{d}s \mathrm{d}\varsigma(u)$
can be represented as a degenerate $V$-statistic of order 2, i.e.,
\begin{align*}
\int_{u \in \mathbb{S}^{d-1}} \int_{-\infty}^{\infty} (\widehat{G}_T(s; u.
\psi) - G(s; u, \psi))^2 w(s) \mathrm{d}s \mathrm{d}\varsigma(u) = \frac{1}{
T^2} \sum_{t=1}^T \sum_{j=1}^T k(X_t, X_j; \psi),
\end{align*}
where $k(\cdot,\cdot;\psi)$ is a degenerate symmetric kernel function indexed by $\psi$:
\begin{align*}
& k(X_t, X_j; \psi) = \int_{u \in \mathbb{S}^{d-1}} \int_{-\infty}^{\infty} \left(\mathbb{E}
[I(u^{\top} Z_t \le s)|X_t, \psi] - \mathbb{E}[I(u^{\top} Z_t \le
s)|\psi]\right)\\
&\times \left(\mathbb{E}[I(u^{\top} Z_j \le s)|X_j, \psi] - \mathbb{%
E }[I(u^{\top} Z_j \le s)|\psi]\right) w(s) \mathrm{d}s \mathrm{d}%
\varsigma(u)
\end{align*}

 Lemma \ref{lemma:Lipchitz-k} in Appendix \ref{appendix:proofs-SC} shows that Lipschitz continuity of $k$ is implied by that of $F$ in Condition \ref{condition:Lip-F}. Under the Lipschitz continuity of $k(x, x^{\prime},\psi )$ with respect to $%
\psi$ for every $x$ and $x^{\prime}$, Corollary 4.1 of \cite{Newey1991} or
Lemma 4 in the appendix of \cite{Briol2019} can be used to verify Assumption \ref%
{assumption:convergence} (ii).

To sum up, the following lemma holds. 
\begin{lemma}
\label{lemma:small-oh-SC} 
Suppose Condition \ref{condition:Lip-F} holds. Then Assumption \protect\ref{assumption:convergence}
(ii) holds, i.e., 
\begin{align*}
\sup_{ \psi \in \Psi} \int_{u \in \mathbb{S}^{d-1}} \int_{-\infty}^{\infty}
( \widehat{G}_T(s; u. \psi) - G(s; u, \psi))^2 w(s) \mathrm{d}s \mathrm{d}
\varsigma(u) = o_p(1).
\end{align*}
\end{lemma}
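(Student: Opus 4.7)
The plan is to exploit the degenerate V-statistic representation recorded in the text and then uniformize the pointwise rate over $\psi\in\Psi$ by a Lipschitz-plus-compactness argument. For each fixed $\psi$, the symmetric kernel $k(\cdot,\cdot;\psi)$ is \emph{degenerate}: $\int k(x,x';\psi)\,\mathrm{d}F_X(x)=0$ for every $x'$, since the factor $\mathbb{E}[I(u^\top Z_t\le s)\mid X_t,\psi]-\mathbb{E}[I(u^\top Z_t\le s)\mid \psi]$ has zero mean under $F_X$. Classical theory for degenerate V-statistics (e.g., Chapter 5 of Serfling) then yields the pointwise rate
\begin{equation*}
T^{-2}\sum_{t,j=1}^{T} k(X_t,X_j;\psi)=O_p(T^{-1})=o_p(1).
\end{equation*}

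The main technical input is a Lipschitz bound on $k(x,x';\psi)$ in $\psi$ with an integrable envelope. Set
\begin{equation*}
h_{x,\psi}(s,u):=\mathbb{E}[I(u^\top Z\le s)\mid X=x,\psi]-G(s;u,\psi),
\end{equation*}
so that on $\{u_1>0\}$ we have $h_{x,\psi}(s,u)=F(u_1^{-1}(s-u_2^\top x)\mid x,\psi)-G(s;u,\psi)$ (with the analogous formula for $u_1<0$, and $\{u_1=0\}$ having $\varsigma$-measure zero), and $k(x,x';\psi)=\int_{\mathbb{S}^{d-1}}\int_{-\infty}^{\infty} h_{x,\psi}(s,u)h_{x',\psi}(s,u) w(s)\,\mathrm{d}s\,\mathrm{d}\varsigma(u)$. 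Using the identity $hh'-\tilde h\tilde h'=(h-\tilde h)h'+\tilde h(h'-\tilde h')$, the uniform bound $|h|\le 1$, and Condition \ref{condition:Lip-F} applied to $F$ directly and (via Jensen) to $G$, I would obtain $|k(x,x';\psi)-k(x,x';\psi')|\le B(x,x')\,\|\psi-\psi'\|$ with
\begin{equation*}
B(x,x')=\int_{\mathbb{S}^{d-1}}\int_{-\infty}^{\infty} \bigl[M(u_1^{-1}(s-u_2^\top x),x)+M(u_1^{-1}(s-u_2^\top x'),x')+2\,\mathbb{E}\,M(u_1^{-1}(s-u_2^\top X),X)\bigr]w(s)\,\mathrm{d}s\,\mathrm{d}\varsigma(u).
\end{equation*}
Two applications of Cauchy-Schwarz, combined with integrability of $w$ and the second-moment condition on $M$ in Condition \ref{condition:Lip-F}, give $\mathbb{E}[B(X_1,X_2)]<\infty$. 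This is the content of the cited Lemma \ref{lemma:Lipchitz-k}.

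Finally, I would combine (i) pointwise $o_p(1)$ with (ii) the Lipschitz bound and (iii) compactness of $\Psi$: for any $\eta>0$ cover $\Psi$ by a finite $\eta$-net $\{\psi_1,\ldots,\psi_N\}$, bound the supremum by $\max_j |V_T(\psi_j)|+\eta\cdot T^{-2}\sum_{t,j} B(X_t,X_j)$ where the last average is $O_p(1)$ by a law of large numbers for V-statistics, and let $T\to\infty$ then $\eta\to 0$. This is precisely the conclusion of Corollary 4.1 of \cite{Newey1991} (or Lemma 4 of \cite{Briol2019}), whose hypotheses are exactly (a) pointwise convergence and (b) a Lipschitz modulus with integrable envelope. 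The chief obstacle will be step (ii): one must carefully track the sign-based decomposition of $\mathbb{S}^{d-1}$ via the formulas in (\ref{GT}) and verify that the change of variables $y\mapsto u_1^{-1}(s-u_2^\top x)$ preserves integrability of $M$ under $w\,\mathrm{d}s\,\mathrm{d}\varsigma$, which is exactly what the moment condition on $M^2$ in Condition \ref{condition:Lip-F} secures.
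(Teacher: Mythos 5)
Your proposal is correct and follows essentially the same route as the paper: the degenerate $V$-statistic representation of the squared $L_2$ distance, Lipschitz continuity of the kernel $k(\cdot,\cdot;\psi)$ in $\psi$ with an integrable envelope deduced from Condition \ref{condition:Lip-F} (the paper's Lemma \ref{lemma:Lipchitz-k}), and the uniform law of large numbers for degenerate $U$-statistics via Corollary 4.1 of \cite{Newey1991} or Lemma 4 of \cite{Briol2019}. The only cosmetic difference is that the paper disposes of the diagonal term $T^{-2}\sum_t k(X_t,X_t;\psi)$ separately using the uniform bound $|k|\le\int w(s)\,\mathrm{d}s$, whereas your $\eta$-net argument absorbs it into the full double sum.
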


\paragraph{Verification of Assumption \ref{assumption:big-oh}
(ii)} Note that under i.i.d assumption, we have 
\begin{align*}
& T \mathbb{E}\left[\int_{-\infty}^{\infty} (\widehat{G}_T(s; u. \psi_0) -
G(s; u, \psi_0))^2 w(s) \mathrm{d}s \mathrm{d}\varsigma(u) \right] \\
&\qquad =  \int_{-\infty}^{\infty} \mathbb{E} \left[T (\widehat{G}_T(s; u, \psi_0)
- G(s; u, \psi_0))^2\right] w(s) \mathrm{d}s \mathrm{d}\varsigma(u) \\
& \qquad =  \int_{-\infty}^{\infty} \mathbb{E} \left[\bigg(\mathbb{E}[I(u^{\top} Z_t
\le s)|X_t, \psi_0] - G(s; u, \psi_0)\bigg)^2\right] w(s) \mathrm{d}s 
\mathrm{d}\varsigma(u) \\
& \qquad =  \int_{-\infty}^{\infty} \left\{\mathbb{E} \left[\bigg(\mathbb{E}
[I(u^{\top} Z_t \le s)|X_t, \psi_0] \bigg)^2\right] - G^2(s; u,
\psi_0)\right\} w(s) \mathrm{d}s \mathrm{d}\varsigma(u) \\
& \qquad \le  \int_{-\infty}^{\infty} \left\{\mathbb{E} \left[\bigg(\mathbb{E}
[I(u^{\top} Z_t \le s)^2|X_t, \psi_0] \bigg)\right] - G^2(s; u,
\psi_0)\right\} w(s) \mathrm{d}s \mathrm{d}\varsigma(u) \\
& \qquad =  \int_{-\infty}^{\infty} G(s; u, \psi_0) \bigg(1 - G(t; u, \psi_0)\bigg) %
w(s) \mathrm{d}s \mathrm{d}\varsigma(u).
\end{align*}
So Assumption \ref{assumption:big-oh} (ii) holds.

\paragraph{Verification of Assumptions \ref{assumption:norm-diff} and \ref{assumption:big-oh} (iii)}

Under Condition \ref{condition:norm-diff-true-Q}, it is sufficient to show that
\begin{align*}
\sup_{|\psi - \psi_0| \le \tau_T} \frac{T \left(\int_{u \in \mathbb{S}
^{d-1}} \int_{-\infty}^{\infty} \left( \big[\widehat{Q}_{T}(s;u,\psi ) -
Q(s;u,\psi )\big] - \big[\widehat{Q}_{T}(s;u,\psi_0) - Q(s;u,\psi _{0})\big] %
\right)^2 w(s) \mathrm{d}s \mathrm{d}\varsigma(u) \right)}{(1 +  \|\sqrt{T}(\psi -
\psi_0)\|)^2} & = o_p(1).
\end{align*}
Note that the integral in the numerator on the left hand side of the above equation can be represented as a degenerate
V-statistic of order 2: 
\begin{align*}
& \left(\int_{u \in \mathbb{S}^{d-1}} \int_{-\infty}^{\infty} \left(\big[%
\widehat{Q}_{T}(s;u,\psi ) - Q(s;u,\psi )\big] - \big[\widehat{Q}%
_{T}(s;u,\psi_0) - Q(s;u,\psi _{0}) \big] \right)^2 w(s) \mathrm{d}s \mathrm{%
d}\varsigma(u) \right) \\
& = \frac{1}{ T^2} \sum_{t=1}^T \sum_{j=1}^{T} k_2(X_t, X_j, \psi, \psi_0),
\end{align*}
where $ k_2(X_t, X_j, \psi, \psi_0)$ is a degenerate symmetric kernel given by 
\begin{align*}
&  \int_{u \in \mathbb{S}^{d-1}} \int_{-\infty}^{\infty} \Bigg\{ \bigg[\Big(%
\mathbb{E} [I(u^{\top} Z_t \le s)|X_t, \psi] - G(s; u, \psi)\Big) - \Big(%
\mathbb{E} [I(u^{\top} Z_t \le s)|X_t, \psi_0] - G(s; u, \psi_0)\Big)\bigg]
\\
& \quad \quad \times \bigg[\Big(\mathbb{E}[I(u^{\top} Z_j \le s)|X_j, \psi]
- G(s; u, \psi)\Big) - \Big(\mathbb{E}[I(u^{\top} Z_j \le s)|X_j, \psi_0] -
G(s; u, \psi_0)\Big)\bigg] \Bigg\} w(s) \mathrm{d}s \mathrm{d}\varsigma(u).
\end{align*}
When $k_{2}(x, x^{\prime},\psi ,\psi _{0})$ is Lipschitz continuous with
respect to $\psi $ for each $x$, $x^{\prime}$, we can use Corollary 8 in 
\cite{Sherman1994} and the proof of Lemma 4 in the Appendix of \cite{Briol2019}
to verify Assumptions \ref{assumption:norm-diff} and \ref{assumption:big-oh}
(iii). Furthermore, Lipchitz continuity of $k_2$ is implied by that of $F(\cdot|x, \psi)$
in Condition \ref{condition:Lip-F}, see Lemma \ref{lemma:Lipchitz-k} in Appendix \ref{appendix:proofs-SC}. 

Summing up, we obtain the following result. 
\begin{lemma}
\label{lemma:proof-norm-diff-approx-Q-SC}

Suppose Conditions \ref{condition:Lip-F} and \ref{condition:norm-diff-true-Q} hold. Then $\widehat{Q}%
_T(\cdot; \cdot, \psi)$ satisfies Assumptions %
\ref{assumption:norm-diff} and \ref{assumption:big-oh} (iii) with $%
\widehat{D}_T(s; u, \psi) = D(s; u, \psi)$ in the definition of $R$ in Condition \ref{condition:norm-diff-true-Q}.

\end{lemma}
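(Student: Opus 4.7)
The plan is to isolate the two sources of error contributing to $\widehat R_T(s;u,\psi,\psi_0)$ and treat them separately. Writing
\[
\widehat R_T(s;u,\psi,\psi_0)=R(s;u,\psi,\psi_0)+\Delta_T(s;u,\psi),
\]
with $\Delta_T(s;u,\psi):=[\widehat Q_T-Q](s;u,\psi)-[\widehat Q_T-Q](s;u,\psi_0)$, and applying $(a+b)^2\le 2a^2+2b^2$, the integrated squared residual appearing in Assumption \ref{assumption:norm-diff} is bounded by twice the corresponding integral of $R^2$ plus twice that of $\Delta_T^2$. The population piece is handled directly by Condition \ref{condition:norm-diff-true-Q}; note that the choice $\widehat D_T=D$ specified in the lemma makes the linearisation inside $R$ identical to the one inside $\widehat R_T$, so the two definitions of the remainder match. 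It therefore remains to verify that
\[
\sup_{\|\psi-\psi_0\|\le \tau_T}\frac{T\int_{\mathbb S^{d-1}}\int_{\mathcal S}\Delta_T(s;u,\psi)^2\,w(s)\,ds\,d\varsigma(u)}{\bigl(1+\|\sqrt T(\psi-\psi_0)\|\bigr)^2}=o_p(1).
\]

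Following the calculation laid out above the lemma, I would rewrite the numerator integral as a $V$-statistic of order two, $\tfrac{1}{T^2}\sum_{t,j}k_2(X_t,X_j;\psi,\psi_0)$. With $\xi_t(s,u,\psi):=\mathbb E[I(u^\top Z_t\le s)\mid X_t,\psi]-G(s;u,\psi)$, the kernel $k_2$ takes the product form $\int\!\int[\xi_t(\cdot;\psi)-\xi_t(\cdot;\psi_0)][\xi_j(\cdot;\psi)-\xi_j(\cdot;\psi_0)]\,w\,ds\,d\varsigma$; because $\mathbb E[\xi_t(s,u,\psi)]=0$ for every $\psi$, this kernel is degenerate, and it vanishes identically at $\psi=\psi_0$, so the $V$-statistic is exactly zero there. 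Lemma \ref{lemma:Lipchitz-k} in the appendix upgrades the pointwise Lipschitz continuity of $F$ supplied by Condition \ref{condition:Lip-F} into pointwise Lipschitz continuity of $k_2$ in $\psi$ with an integrable envelope in $(x,x')$. I would then invoke Corollary 8 of \cite{Sherman1994}, in combination with the chaining argument used in the proof of Lemma 4 in the appendix of \cite{Briol2019}, to obtain a maximal inequality for this Lipschitz-indexed degenerate $V$-statistic. Together with $k_2(x,x';\psi_0,\psi_0)=0$, this gives a uniform bound on the numerator of the display above of order $\|\psi-\psi_0\|^2 \cdot O_p(1)$, which, after division by $(1+\sqrt T\|\psi-\psi_0\|)^2$, is $o_p(1)$ over the shrinking ball. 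Finally, Assumption \ref{assumption:big-oh}(iii) holds trivially because setting $\widehat D_T=D$ makes the integrated squared discrepancy identically zero.

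The main obstacle is turning the pointwise degenerate-$V$-statistic bound into one that is uniform over a shrinking neighbourhood in a way that exploits both the degeneracy (which supplies the $T^{-1}$ decay) and the quadratic vanishing of $k_2$ at $\psi_0$ (which supplies the $\|\psi-\psi_0\|^2$ decay). The particular denominator $(1+\sqrt T\|\psi-\psi_0\|)^2$ in Assumption \ref{assumption:norm-diff} is calibrated so that both the regime $\sqrt T\|\psi-\psi_0\|\lesssim 1$ (where degeneracy alone provides the needed $O_p(1)$ control and the tolerance in the numerator is $o_p(1)$) and the regime $\sqrt T\|\psi-\psi_0\|\gg 1$ (where the Lipschitz-in-$\psi$ bound dominates) can be absorbed under the same Sherman-type maximal inequality after a dyadic peeling of the shrinking neighbourhood.
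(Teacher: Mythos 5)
Your proposal is correct and follows essentially the same route as the paper: split $\widehat R_T$ into the population remainder $R$ (handled by Condition \ref{condition:norm-diff-true-Q}) plus the difference $\Delta_T$ of centred empirical errors, write $\int\Delta_T^2$ as a degenerate second-order $V$-statistic with the Lipschitz kernel $k_2$ from Lemma \ref{lemma:Lipchitz-k}, and control it uniformly via Corollary 8 of Sherman combined with the Euclidean/chaining argument from Briol et al. One caveat: your claimed uniform bound of order $\lVert\psi-\psi_0\rVert^2\cdot O_p(1)$ on $T\int\Delta_T^2$ is stronger than what Sherman's corollary directly delivers and is not needed --- since the denominator $(1+\lVert\sqrt T(\psi-\psi_0)\rVert)^2\ge 1$, it suffices (as the paper does) to show $\sup_{\lVert\psi-\psi_0\rVert\le\tau_T} T\int\Delta_T^2\,w\,\mathrm{d}s\,\mathrm{d}\varsigma=o_p(1)$, which follows from the shrinking $L_2$ envelope of $k_2$ over the ball; note also that the diagonal ($t=j$) terms of the $V$-statistic are not covered by the $U$-process bound and must be disposed of separately by a ULLN together with $k_2(x,x;\psi_0,\psi_0)=0$.
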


\paragraph{Verification of Assumption \ref{assumption:CLT}} Note that 
\begin{align*}
\sqrt{T}(\widehat{Q}_{T}(s; u)-Q(s;u )) & = \frac{1}{\sqrt{n}} \sum_{i=1}^n
\left(I(u^{\top} Z_t \le s) - G(s; u) \right) \text{ and } \\
\sqrt{T}(\widehat{Q}_{T}(s; u, \psi_0)-Q(s;u, \psi_0)) & = \frac{1}{\sqrt{T}}
\sum_{i=1}^n \left(\mathbb{E}[I(u^{\top} Z_t \le s)|X_t, \psi_0] - G(s; u,
\psi_0)\right).
\end{align*}

By applying CLT, we obtain the following result.
\begin{lemma}
Suppose $\int_{u \in \mathbb{S}^{d-1}} \int_{-\infty}^{\infty} \| D(s, u, \psi_0) w(s) \| \mathrm{d}s \mathrm{d}\varsigma(u)
< \infty$. Then, Assumption \ref{assumption:CLT} holds: 
\begin{align*}
& \sqrt{T} 
\begin{pmatrix}
\int_{u \in \mathbb{S}^{d-1}} \int_{-\infty}^{\infty} (\widehat{Q}_{T}(s;
u)-Q(s;u )) D(s; u, \psi_0) w(s) \mathrm{d}s \mathrm{d}u \\ 
\int_{u \in \mathbb{S}^{d-1}} \int_{-\infty}^{\infty} (\widehat{Q}_T(s;u,
\psi_0) -Q(s;u, \psi_0)) D(s; u, \psi_0) w(s) \mathrm{d}s \mathrm{d}u%
\end{pmatrix}
\\
& \qquad =  \frac{1}{\sqrt{T}} \sum_{t=1}^T 
\begin{pmatrix}
\int_{u \in \mathbb{S}^{d-1}} \int_{-\infty}^{\infty} \left(I(u^{\top} Z_t
\le s) - G(s; u)\right) D(s; u, \psi_0) w(s) \mathrm{d}s \mathrm{d}u \\ 
\int_{u \in \mathbb{S}^{d-1}} \int_{-\infty}^{\infty} (\mathbb{E}[I(u^{\top}
Z_t \le s)|X_t, \psi_0] -Q(s;u; \psi_0)) D(s; u, \psi_0) w(s) \mathrm{d}s 
\mathrm{d}u%
\end{pmatrix}
\\
& \qquad \xrightarrow{d} N(0, V_0).
\end{align*}

\end{lemma}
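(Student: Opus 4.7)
The plan is to represent each of the two coordinates of the stacked vector as a normalized sum of i.i.d.\ random vectors and then apply the multivariate Lindeberg--L\'evy CLT. Using the decompositions displayed just above the lemma, define
\[
\xi_{t} := \int_{\mathbb{S}^{d-1}} \int_{-\infty}^{\infty} \bigl(I(u^{\top}Z_{t}\leq s)-G(s;u)\bigr) D(s;u,\psi_{0}) w(s)\, \mathrm{d}s\, \mathrm{d}\varsigma(u),
\]
\[
\eta_{t} := \int_{\mathbb{S}^{d-1}} \int_{-\infty}^{\infty} \bigl(\mathbb{E}[I(u^{\top}Z_{t}\leq s)\mid X_{t},\psi_{0}]-G(s;u,\psi_{0})\bigr) D(s;u,\psi_{0}) w(s)\, \mathrm{d}s\, \mathrm{d}\varsigma(u),
\]
so that the target vector equals $T^{-1/2}\sum_{t=1}^{T}(\xi_{t}^{\top},\eta_{t}^{\top})^{\top}$. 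Since $\{Z_{t}\}$ is i.i.d.\ under Assumption \ref{assumption:DGP}, the summands $\{(\xi_{t}^{\top},\eta_{t}^{\top})^{\top}\}$ are i.i.d.\ in $\mathbb{R}^{2d_{\psi}}$.

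Next, I would check the CLT hypotheses. Mean zero follows immediately: $\mathbb{E}[I(u^{\top}Z_{t}\leq s)]=G(s;u)=G(s;u,\psi_{0})$ and $\mathbb{E}[\mathbb{E}[I(u^{\top}Z_{t}\leq s)\mid X_{t},\psi_{0}]]=G(s;u,\psi_{0})$, so exchanging expectation with the outer integrals (justified by the integrability hypothesis on $\|D(s;u,\psi_{0})\|w(s)$ together with $|I(\cdot)-G(\cdot)|\leq 1$) gives $\mathbb{E}\xi_{t}=\mathbb{E}\eta_{t}=0$. For finite variance, observe that both integrands are pathwise bounded in absolute value by one, so
\[
\|\xi_{t}\|\vee\|\eta_{t}\| \leq \int_{\mathbb{S}^{d-1}}\int_{-\infty}^{\infty} \|D(s;u,\psi_{0})\| w(s)\, \mathrm{d}s\, \mathrm{d}\varsigma(u) < \infty
\]
by the hypothesis of the lemma, so each summand is a.s.\ uniformly bounded and in particular has finite second moment. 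The classical multivariate CLT then yields
\[
\frac{1}{\sqrt{T}}\sum_{t=1}^{T}\begin{pmatrix}\xi_{t}\\ \eta_{t}\end{pmatrix} \xrightarrow{d} N(0,V_{0}), \qquad V_{0}=\operatorname{Var}\begin{pmatrix}\xi_{1}\\ \eta_{1}\end{pmatrix}.
\]

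Finally, I would identify $V_{0}$ with the expression stated in Proposition \ref{prop:Asy Normality-one-sided}. Applying Fubini (again justified by the uniform boundedness and the integrability of $\|D\|w$) to each of the four $d_{\psi}\times d_{\psi}$ blocks, the covariance in the $(\xi,\xi)$ block is
\[
\int\!\int\!\int\!\int \bigl\{\mathbb{E}[I(u^{\top}Z\leq t)I(v^{\top}Z\leq s)]-G(t;u)G(s;v)\bigr\} D(t;u,\psi_{0})D(s;v,\psi_{0})^{\top} w(s)w(t)\,\mathrm{d}s\,\mathrm{d}t\,\mathrm{d}\varsigma(u)\,\mathrm{d}\varsigma(v),
\]
which is exactly the $A_{11}$ block in the Kronecker-product formula for $V_{0}$. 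The $(\xi,\eta)$, $(\eta,\xi)$, and $(\eta,\eta)$ blocks follow by the same computation after using the tower property to produce the inner conditional expectations defining $A_{12}$, $A_{21}$, and $A_{22}$.

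The only real subtlety will be the repeated appeals to Fubini to exchange $\mathbb{E}$ with integrals over $(s,u)$ and $(t,v)$; each is handled by the same estimate, namely $|I-G|\leq 1$ combined with $\int\int\|D\|w\,\mathrm{d}s\,\mathrm{d}\varsigma<\infty$. Everything else is a direct application of the classical CLT and bookkeeping of the covariance decomposition.
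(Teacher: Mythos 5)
Your proposal is correct and follows essentially the same route as the paper, which simply writes the two coordinates as normalized sums of i.i.d.\ centered terms and invokes the CLT under the integrability condition on $\|D\|w$. You merely make explicit the routine steps the paper leaves implicit (uniform boundedness of the summands via $|I-G|\leq 1$, Fubini for mean zero, and the block-by-block identification of $V_{0}$), so there is nothing to add.
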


\section{Numerical Results}

\label{sec:numerical-results}

In this section, we report simulation results on the accuracy of the asymptotic normal distribution of MSCD estimator in the auction model in Example \ref{example:auction}.

In the simulation, we follow \cite{Li2010}, where $h(x, \psi) = \exp(\psi_1 + \psi_2 x)$ and $X$ follows the square of $U[0, 2]$. We set $m = 6$ and  $\psi_0 = (1, 0.5)$, $(1, 3)$.  The number of Monte-Carlo simulations is $2000$.

In addition to our MSCD estimator, we also apply \cite{Li2010}'s indirect inference estimator. Both estimators are asymptotically normally distributed leading to simple Wald-type inference.
For the MSCD estimator, we choose $w(s) = \mathds{1}(s \in [-50000, 50000])$ and 100 projection vectors. The algorithm is similar to the Simulation Algorithm in \citet{Deshpande2018} except that we only draw projection vectors during optimization. A description of our algorithm is given in Algorithm \ref{algorithm}. 
\begin{algorithm} 
    \caption{Algorithm for MSCD Estimator}
    \label{algorithm}
    \KwData{$\{Z_t\}_{t=1}^{T}$ where $Z_t = (Y_t, X_t')'$. 
    $n$: the number of observation, $n_{epoch}: $ the number of epochs (iteration), $m: $ the number of projection vectors.}
    \For{$i\gets0$ \KwTo $n_{epoch}$}{
    Initial Loss: $L \to 0$; \\
    Draw $U:=\{u_i\}_{i=1}^{m}$ uniformly from $\mathds{S}^{d-1}$;\\
    Calculate sliced distance; \\
    \For{each $u \in U$}{
        Construct $Q_{T}(s;u)$ using $\{Z_t\}_{t=1}^{T}$; \\
        $L$ \KwTo $L + \int_{\mathcal{S}}(Q_{T}(s;u)-%
\widehat{Q}_{T}(s;u,\psi ))^{2}w(s)\mathrm{d}s$
     }
     $L$ \KwTo $L/m$; \\
     Update $\psi$ via Adam algorithm.
    }
\end{algorithm}

For \cite{Li2010}'s estimator, we draw 100 synthetic samples and  run the auxiliary linear regression with regressor $(1, x)$. 
Also, we use the optimal weighting matrix for \cite{Li2010}'s estimator. 

For both estimators, optimization is done via Adam optimizer, where the learning rate is 0.1 and tuning parameter $(\beta_1, \beta_2)$ is $(0.9, 0.999)$ in options. We run 1000 epochs for all optimization. The code is implemented in Pytorch.

We consider and summarize results in Case 1 and Case 2 below. 

\paragraph{Case 1: $\psi_0 = (1, 0.5)$ and $T = 100$}

Figures \ref{fig:qqplot-auction-theta_10_05_init_00} and \ref{fig:hist-auction-theta_10_05_init_00} present respectively the qqplots and histograms of the normalized values of our MSCD estimator and \cite{Li2010}'s estimator. 
They imply that when $\psi_0 = (1, 0.5)$, the normal distribution approximates the finite sample distributions of both our estimator and \cite{Li2010}'s  estimator well even when $T=100$.

\paragraph{Case 2: $\psi_0 = (1, 3)$ and $T = 100, 200$}

Figures \ref{fig:qqplot-auction-theta_10_30_n100} and \ref{fig:hist-auction-theta_10_30_n100} show respectively the qqplots and histograms of normalized values of our MSCD estimator and \cite{Li2010}'s estimator when $T=100$. 
The normal approximation for both estimators is less accurate than in Case 1. Furthermore, we find that \cite{Li2010}'s estimator could be sensitive to the initial value when $T=100$. We then increased the sample size to $T=200$ and experimented with different initial values for \cite{Li2010}'s estimator. 
Figures \ref{fig:qqplot-auction-theta_10_30-n200} and \ref{fig:hist-auction-theta_10_30-n200} show respectively the qqplots and histograms of
normalized values of our MSCD estimator and \cite{Li2010}'s estimator when $T=200$. As expected, the accuracy of the normal approximation improves for both estimators as $T$ increases. Moreover, \cite{Li2010}'s estimator with initial value close to the true value performs comparably with the MSCD estimator.

\section{Concluding Remarks}
\label{sec:conclusion} 

Motivated by simple inference in nonregular structural models in economics, we have proposed the method of minimum sliced distance estimation based on empirical and model-induced measures of the true distribution. We have
developed a unified asymptotic theory under high level assumptions and
verified them for the parameter-dependent support model. The numerical illustration on an auction model confirms the efficacy of the proposed methodology. 

\begin{figure}[h]
    \centering
    \begin{subfigure}[h]{\linewidth}
    	\centering
    	\includegraphics[width=0.6\linewidth]{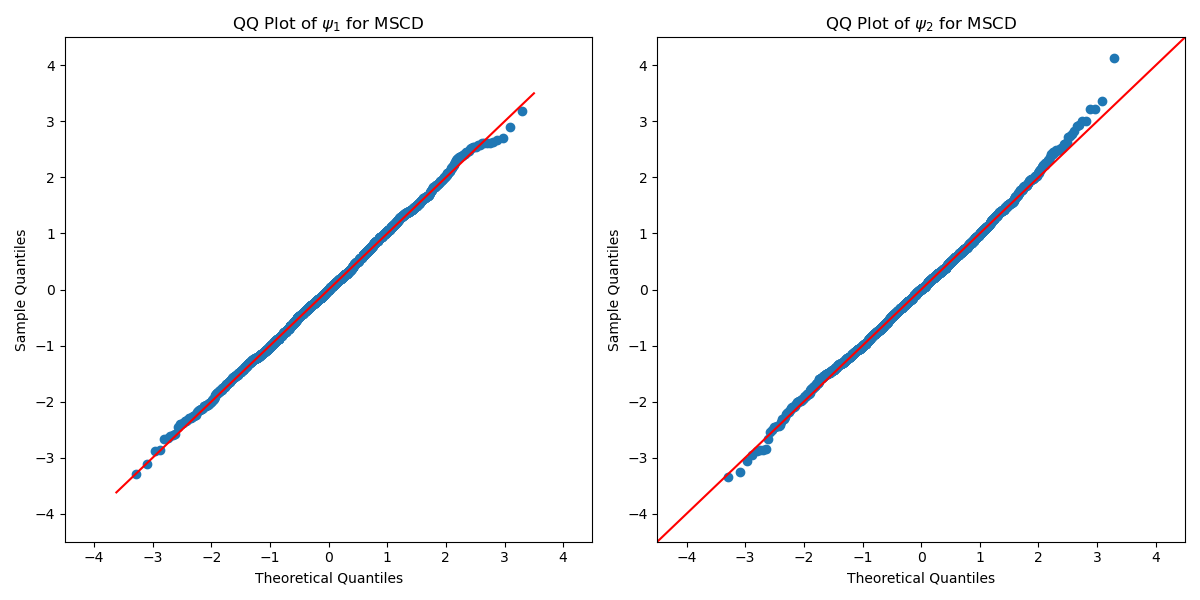}
    	\caption{QQ plots of normalized values of our MSCD estimator.}
    	\label{fig:qqplot-auction-MSCD-theta_10_05_init_00}
    \end{subfigure}
    \begin{subfigure}[h]{\linewidth}
    	\centering
    	\includegraphics[width=0.6\linewidth]{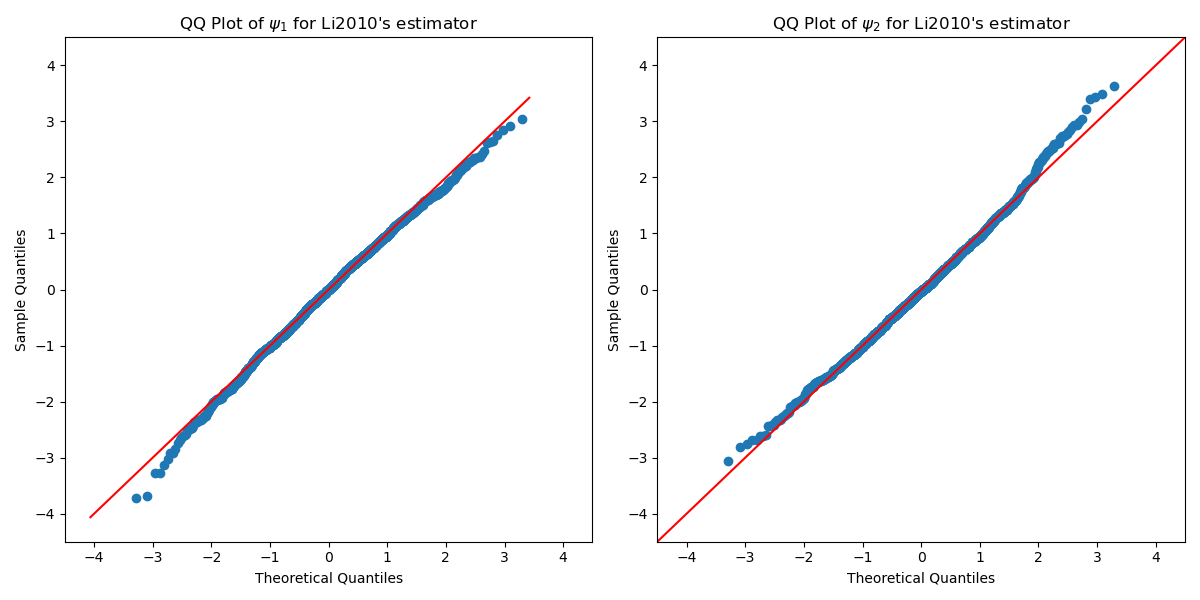}
    	\caption{QQ plots of normalized values of \cite{Li2010}'s estimator.}
    	\label{fig:qqplot-auction-Li2010-theta_10_05_init_00}
    \end{subfigure}
    \caption{QQ plots of normalized values of our MSCD estimator and \cite{Li2010}'s estimator when $\psi = (1, 0.5)$ and initial value is set to $(0, 0)$.}
    \label{fig:qqplot-auction-theta_10_05_init_00}
\end{figure}

\begin{figure}[h]
    \centering
    \begin{subfigure}[h]{\textwidth}
    \centering
   \includegraphics[width=0.6\linewidth]{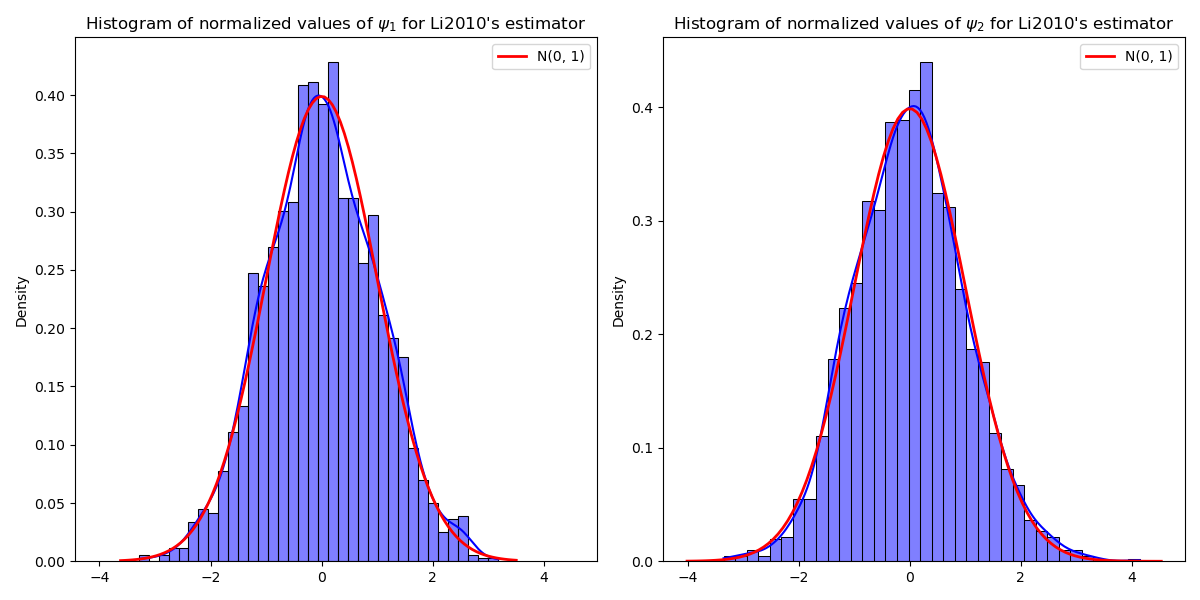}
	\caption{Histograms of normalized values of our MSCD estimator.}
 \label{fig:hist-auction-MSCD-theta_10_05_init_00}
    \end{subfigure} \\
    \begin{subfigure}[h]{\textwidth}
    \centering
	\includegraphics[width=0.6\linewidth]{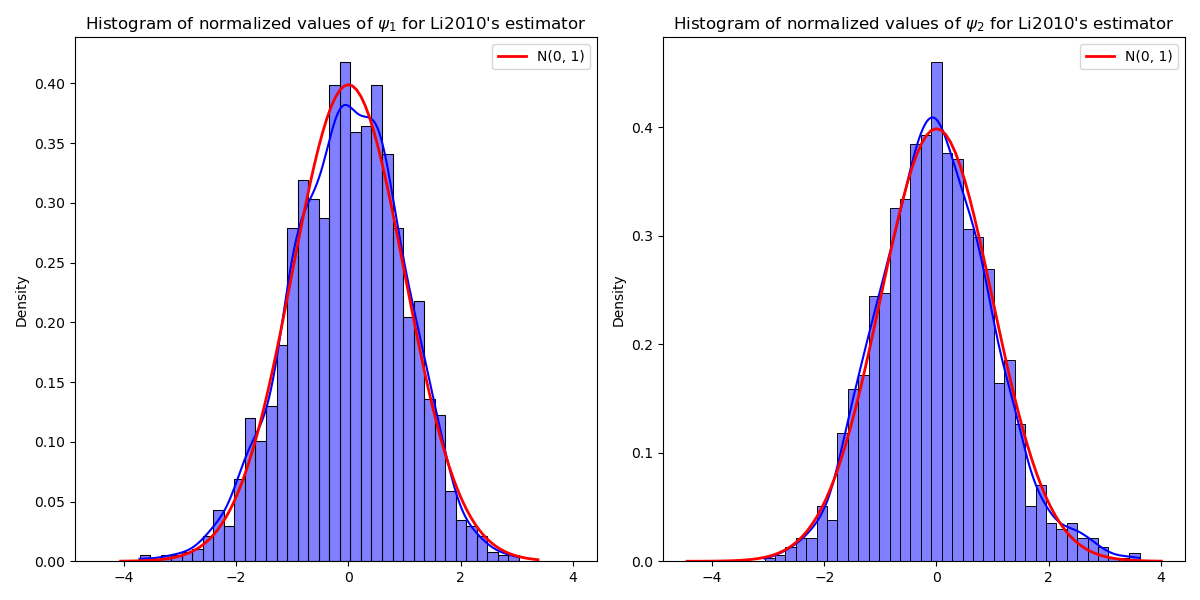}
	\caption{Histograms of normalized values of \cite{Li2010}'s estimator.}
    \label{fig:hist-auction-Li2010-theta_10_05_init_00}
    \end{subfigure}
    \caption{Histograms of normalized values of our MSCD estimator and \cite{Li2010}'s estimator when $\psi = (1, 0.5)$ and initial value is set to $(0, 0)$. The red line shows the density of standard normal distribution.}
    \label{fig:hist-auction-theta_10_05_init_00}
\end{figure}

\begin{figure}[h]
    \centering
    \begin{subfigure}[h]{\textwidth}
	\centering
	\includegraphics[width=0.6\linewidth]{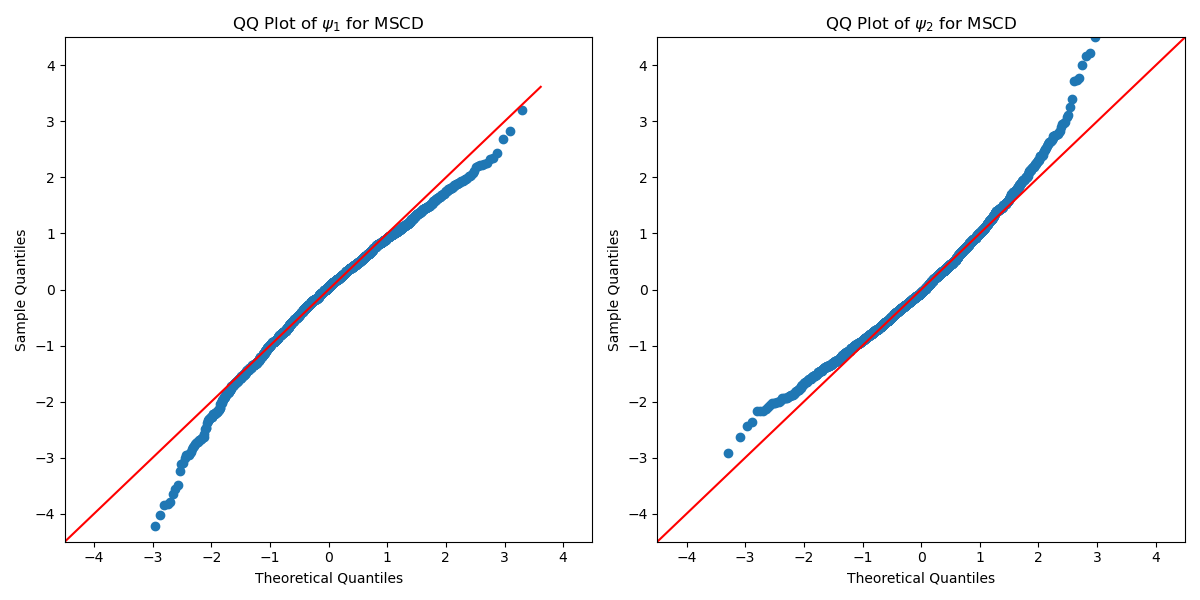}
    \caption{QQ plots of normalized values of our MSCD estimator.}
	\label{fig:qqplot-auction-MSCD-theta_10_30_init_00}
    \end{subfigure} \\
    \begin{subfigure}[h]{\textwidth}
	\centering
	\includegraphics[width=0.6\linewidth]{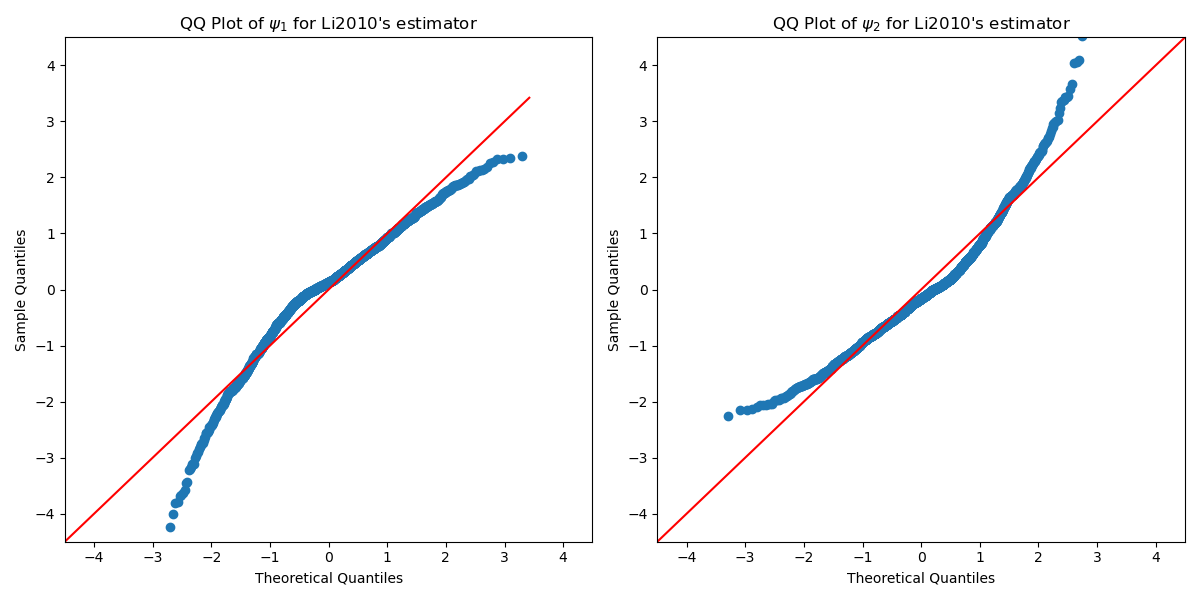}
    \caption{QQ plots of normalized values of \cite{Li2010}'s estimator.}
	\label{fig:qqplot-auction-Li2010-theta_10_30_init_true_value}
    \end{subfigure}
    \caption{QQ plots of normalized values of our MSCD estimator and \cite{Li2010}'s estimator when $T=100$, $\psi = (1, 3)$. For \cite{Li2010}'s estimator, initial value is set to true value.}
    \label{fig:qqplot-auction-theta_10_30_n100}
\end{figure}

\begin{figure}
    \centering
   \begin{subfigure}[h]{\textwidth}
	\centering
	\includegraphics[width=0.6\linewidth]{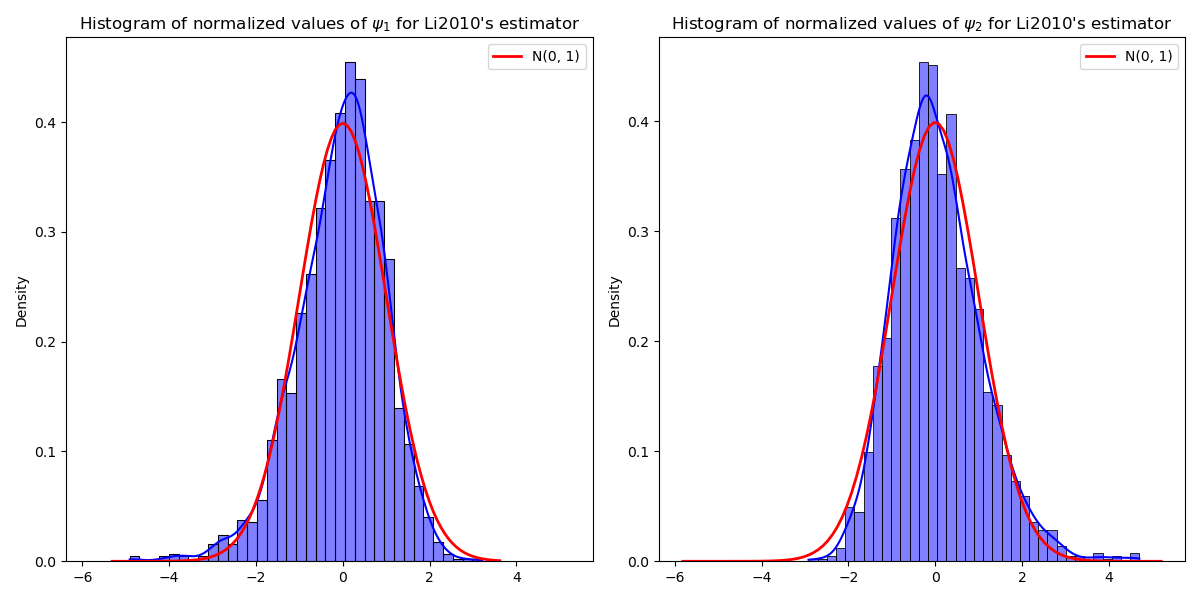}
    \caption{Histograms of normalized values of our MSCD estimator.}
	\label{fig:hist-auction-MSCD-theta_10_30_init_00}
    \end{subfigure} \\
    \begin{subfigure}[h]{\textwidth}
	\centering
	\includegraphics[width=0.6\linewidth]{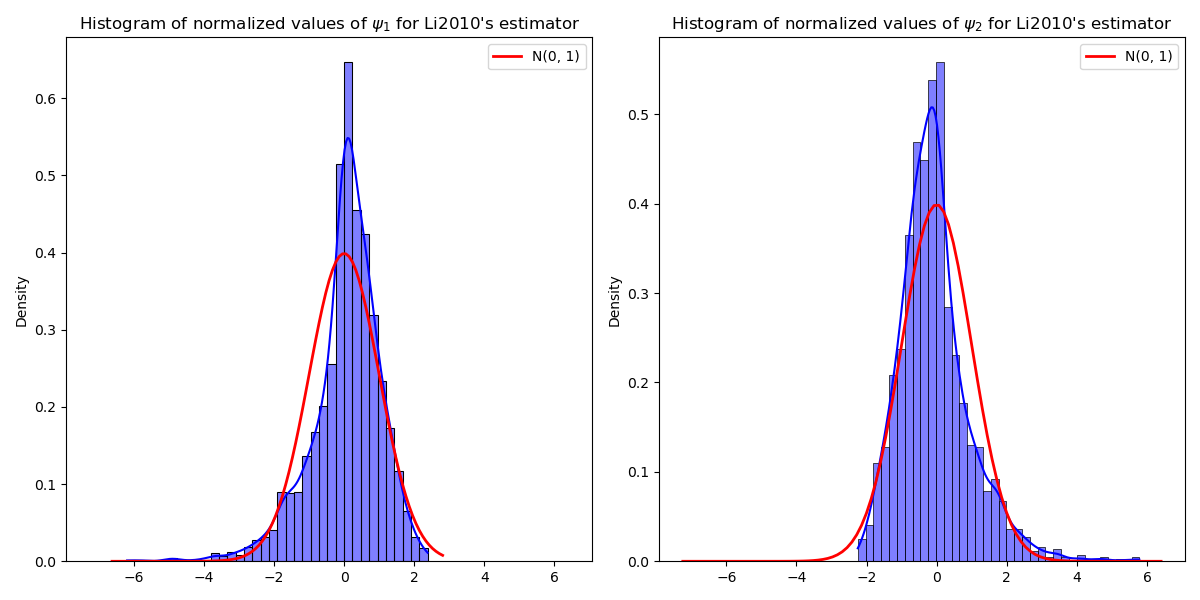}
     \caption{Histograms of normalized values of \cite{Li2010}'s estimator.}
	\label{fig:hist-auction-Li2010-theta_10_30_init_true_value}
    \end{subfigure}
    \caption{Histograms of normalized values of our MSCD estimator and \cite{Li2010}'s estimator when $T=100$, $\psi = (1, 3)$. The red line shows the density of standard normal distribution. For \cite{Li2010}'s estimator, the initial value is set to true value.}
    \label{fig:hist-auction-theta_10_30_n100}
\end{figure}

\begin{figure}[h]
    \centering
    \begin{subfigure}[h]{\textwidth}
	\centering
	\includegraphics[width=0.6\linewidth]{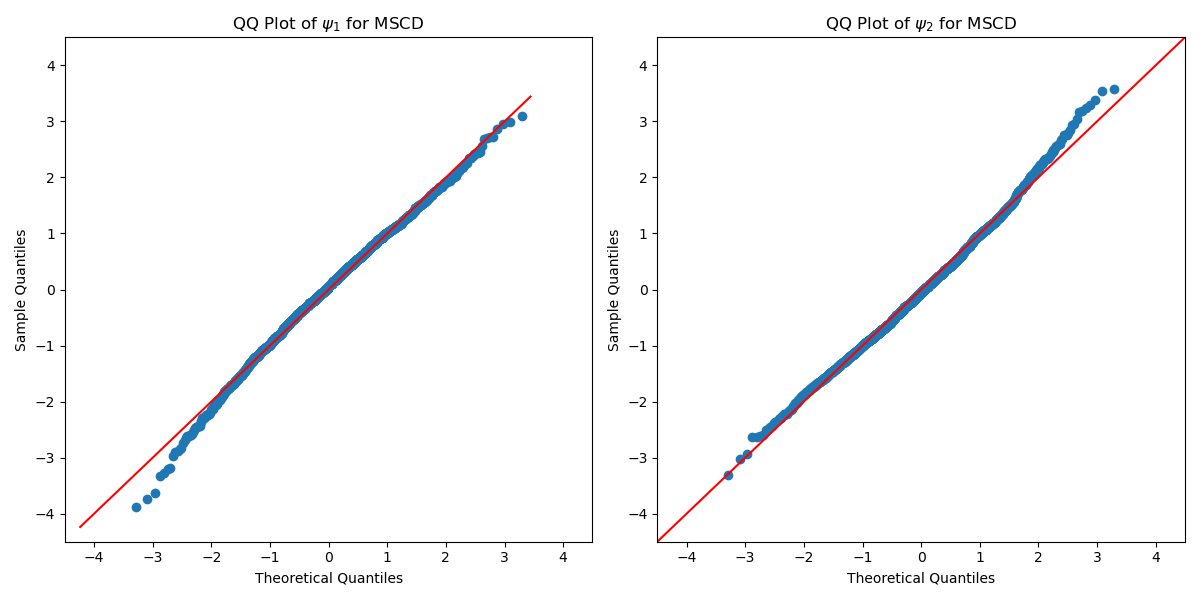}
    \caption{QQ plots of normalized values of our MSCD estimator.}
	\label{fig:qqplot-auction-MSCD-theta_10_30_init_00-n200}
    \end{subfigure} 
    \\
    \begin{subfigure}[h]{\textwidth}
	\centering
	\includegraphics[width=0.6\linewidth]{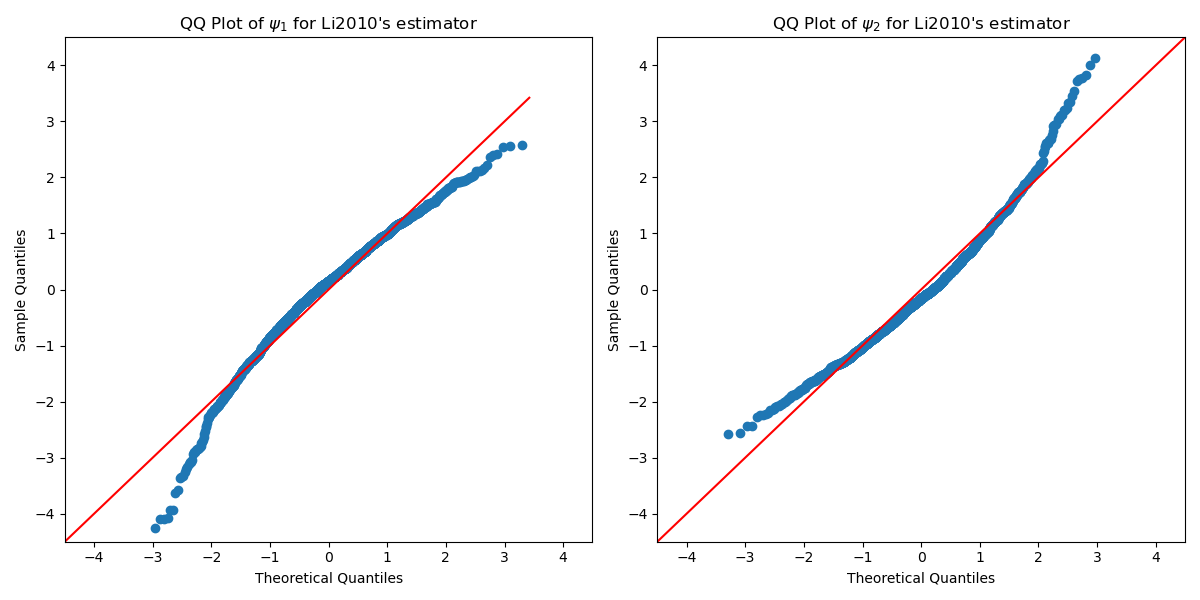}
     \caption{QQ plots of normalized values of \cite{Li2010}'s estimator.}
	\label{fig:qqplot-auction-Li2010-theta_10_30_init_true_value-n200}
    \end{subfigure}
    \caption{QQ plots of normalized values of our MSCD estimator and \cite{Li2010}'s estimator when $T=200$ and $\psi = (1, 3)$. For \cite{Li2010}'s estimator, the initial value is set to true value.}
    \label{fig:qqplot-auction-theta_10_30-n200}
\end{figure}

\begin{figure}
    \centering
    \begin{subfigure}[h]{\textwidth}
	\centering
	\includegraphics[width=0.6\linewidth]{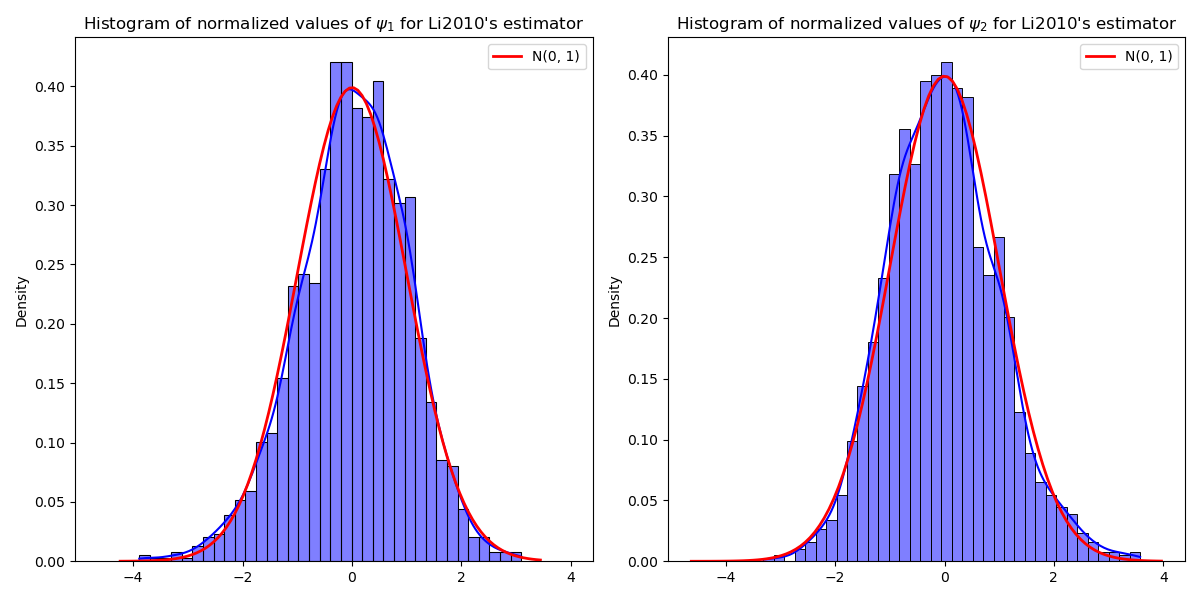}
	\caption{Histograms of normalized values of our MSCD estimator.}
	\label{fig:hist-auction-MSCD-theta_10_30_init_00-n200}
    \end{subfigure} 
    \\
    \begin{subfigure}[h]{\textwidth}
	\centering
	\includegraphics[width=0.6\linewidth]{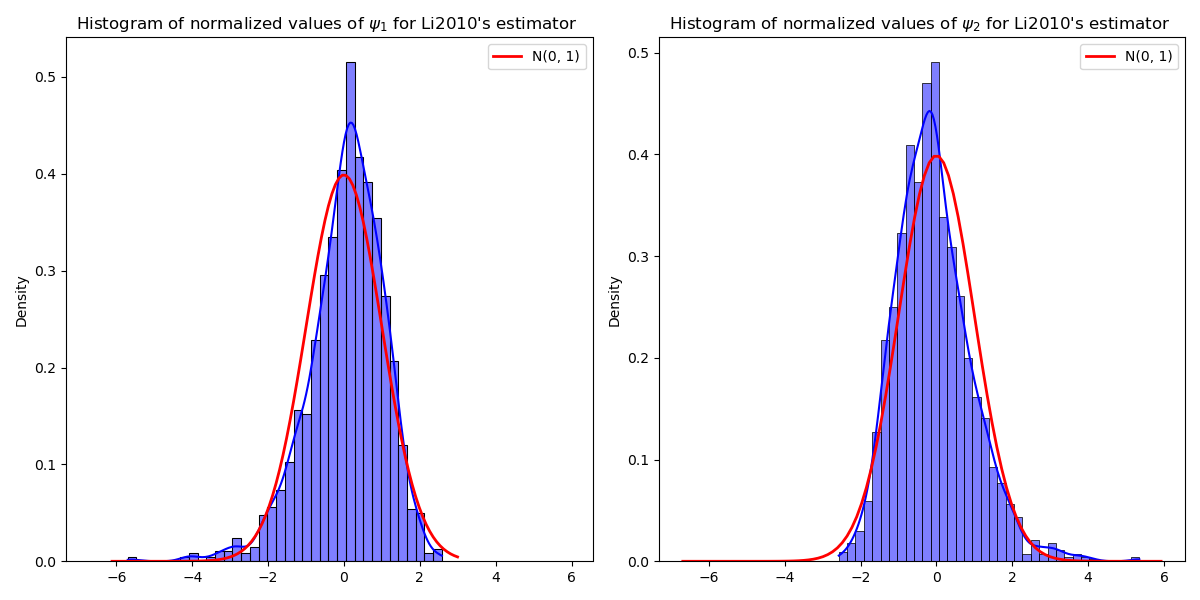}
    \caption{Histograms of normalized values of \cite{Li2010}'s estimator.}
	\label{fig:hist-auction-Li2010-theta_10_30_init_true_value-n200}
    \end{subfigure} 
    \caption{Histograms of normalized values of our MSCD estimator and \cite{Li2010}'s estimator when $T=200$ and $\psi = (1, 3)$. The red line shows the density of standard normal distribution. For \cite{Li2010}'s estimator, the initial value is set to true value.}
    \label{fig:hist-auction-theta_10_30-n200}
\end{figure}

\clearpage

\bibliographystyle{plainnat}
\bibliography{SW-PDS-reference.bib}

\begin{appendices}

\section{Verification of the Remaining Assumptions in One-sided and Two-Sided Uniform Models}
\label{appendix:MSWD-MSCD}

In this section, we will verify that the assumptions for consistency and
asymptotic normality of MSWD and MSCD estimators in Section \ref{section:asym-theory} are satisfied in
one-sided and two-sided uniform models.

We will verify Assumptions \ref{assumption:norm-diff}, \ref%
{assumption:identification}, \ref{assumption:big-oh} (i), and \ref%
{assumption:CLT}, and \ref{assumption:hessian-pd} because all two examples
are unconditional models, and Assumption \ref{assumption:convergence} (i) is
implied by Assumption \ref{assumption:big-oh}(i).

Since we deal with unconditional models for the univariate variable $Y_t$, we can take $\widehat{Q}(s; u, \psi) := Q(s; \psi)$, and $\widehat{D}(\cdot, u, \psi_0) = D(\cdot, \psi_0)$ where $D(\cdot, \psi_0)$ is a deterministic $L_2(\mathcal{S}, w(s)\mathrm{d}s)$-measurable function. $Q(s; \psi)$ will be the parametric distribution function of $Y_t$ for MSCD estimators, and the parametric quantile function for MSWD estimators.
In one-sided and two-sided models, we will consider the case where $w(s) = 1$.

We note here that when $\widehat{Q}_{T}(\cdot ;u,\psi )=Q(\cdot ;u,\psi )$
such as in unconditional models, we have $\widehat{D}_{T}(s;u,\psi
_{0})=D(s;u,\psi _{0})$ so Assumptions \ref{assumption:big-oh} (ii) and
(iii) automatically hold. \

\subsection{One-sided Uniform Model}

\label{sec:one-sided}

We deal with real-valued random variable, and do not require projection. We can take $\widehat{Q}_T(\cdot, u, \psi) = Q(\cdot, \psi)$. We assume that $\psi_0 > 0$.

\paragraph{The MSCD Estimator}

In this case, $Q(s, \psi_0) = F(s, \psi_0)$. Let $F_T(s) = I(Z_t \le
s)$. Some algebra can show 
\begin{align*}
\mathcal{C}_2^2(\mu_{\psi}, \mu_0) = \int_{-\infty}^{\infty} (F(s ;
\psi) - F(s; \psi_0))^2 \mathrm{d}s = 
\begin{cases}
\frac{(\psi - \psi_0)^2}{3\psi_0} & \text{ if } \psi \le \psi_0,
\\ 
\frac{(\psi - \psi_0)^2}{3\psi} & \text{ if } \psi > \psi_0.%
\end{cases}%
\end{align*}
Since 
\begin{align*}
\frac{\partial \mathcal{C}_2^2(\mu_{\psi}, \mu_0) }{\partial \psi} = 
\begin{cases}
\frac{2 (\psi - \psi_0)}{3 \psi_0} < 0 & \text{ if } \psi < \psi_0,
\\ 
0 & \text{ if } \psi = \psi_0, \\ 
\frac{\psi^2 - \psi_0^2}{3 \psi^2} > 0 & \text{ if } \psi > \psi_0.%
\end{cases}%
\end{align*}
Assumption \ref{assumption:identification} holds when $\psi_0 > 0$.

Because $\{Y_t\}_{t=1}^T$ is a random sample, 
$\{\sqrt{T}(F_T(s) - F(s, \psi_0)): s \in \mathbb{R}\}$
is P-Donsker, and 
\begin{align*}
T \int_{-\infty}^{\infty} ((F_T(s) - F(s, \psi_0)))^2 \mathrm{d}s 
& = T
\int_{0}^{\psi_0} ((F_T(s) - F(s, \psi_0)))^2 \mathrm{d}s \\
& \le \psi_0 %
\left[\sup_{s \in \mathbb{R}} (\sqrt{T}(F_T(s) - F(s, \psi_0))) \right]^2
= O_p(1).
\end{align*}
Therefore, Assumption \ref{assumption:big-oh} (i) holds.

Example \ref{example:uniform-models-norm-diff} (i) shows that $F(\cdot, \psi)$ is
norm-differentiable at $\psi = \psi_0$, and $D(s; \psi_0)$ is $L_2(%
\mathbb{R}, \mathrm{d}s)$%
-measurable, and
\begin{align*}
\int_{-\infty}^{\infty} D^2(s; \psi_0) 
\mathrm{d}s  = \int_{0}^{\psi_0} \frac{s^2}{%
\psi_0^4} \mathrm{d}s = \frac{1}{3\psi_0} < \infty.
\end{align*}
Therefore, Assumptions \ref{assumption:norm-diff} and \ref%
{assumption:hessian-pd} hold.

Since $\{\sqrt{T}(F_T(s) - F(s, \psi_0)): s \in \mathbb{R}\}$ is P-Donsker and
$
	\int_{u \in \mathbb{S}^{d-1}}
	\int_{-\infty}^{\infty} D^2(s; \psi_0) \mathrm{d}s \mathrm{d}\varsigma(u) < \infty, 
$
Assumption \ref{assumption:CLT} is satisfied. %
That is, $\sqrt{T} \int_{-\infty}^{\infty} (F_T(s) - F(s, \psi_0)) D(s; \psi_0) d s$ follows normal distribution.

\paragraph{The MSWD Estimator}

In this case, $Q(s, \psi_0) = F^{-1}(s, \psi_0)$, and $Q_T(s) = F_T^{-1}(s)$, where $F_T(s) = \frac{1}{T}\sum_{t=1}^T I(Z_t \le s)$.

Since 
\begin{align*}
\int_{0}^{1} ((F^{-1}(s, \psi_0) - F^{-1}(s, \psi)))^2 \mathrm{d}s =
(\psi - \psi_0)^2,
\end{align*}
Assumption \ref{assumption:identification} holds.

Assumptions \ref{assumption:norm-diff} and \ref{assumption:hessian-pd} hold
with $D(s, u, \psi_0) = s$ because 
$F^{-1}(s, \psi) - F^{-1}(s, \psi_0) = s (\psi - \psi_0)$.

Since $\{\sqrt{T}(F_T(s) - F(s, \psi_0)): s \in \mathbb{R}\}$ is
P-Donsker, and $0 < \frac{\partial F^{-1}(s, \psi_0)}{\partial s} = \psi_0 <
\infty$ and the support is compact, $\sqrt{T}(F_T^{-1}(\cdot) - F^{-1}(\cdot, \psi_0)) \Rightarrow \psi_0\mathbb{G}(\cdot)$, where $\mathbb{G}$ is the standard Brownian bridge by Section 3.9.4.2 of \cite{Vaart1996}. This implies that Assumptions \ref{assumption:big-oh} (i) and \ref{assumption:CLT} hold as $D(s, \psi_0)$ is bounded.

\subsection{Two-sided Uniform Model}

\label{sec:two-sided-uniform}

We deal with real-valued random variable, and do not require projection. We can take $\widehat{Q}_T(\cdot, u, \psi) = Q(\cdot, \psi)$. We assume that $0 < \psi_0 < 1$.

\paragraph{The MSCD Estimator}

In this case, $Q(s, \psi_0) = F(s, \psi_0)$. Let $F_T(s) = I(Z_t \le
s)$. Because $\{Z_t\}_{t=1}^T$ is a random sample, 
$
\{\sqrt{T}(F_T(s) - F(s, \psi_0)): s \in \mathbb{R}\}
$
is P-Donsker, and 
\begin{align*}
T \int_{-\infty}^{\infty} ((F_T(s) - F(s, \psi_0)))^2 \mathrm{d}s = T
\int_{0}^{1} ((F_T(s) - F(s, \psi_0)))^2 \mathrm{d}s \le \left[\sup_{s \in 
\mathbb{R}} (\sqrt{T}(F_T(s) - F(s, \psi_0))) \right]^2 = O_p(1).
\end{align*}
Therefore, Assumption \ref{assumption:big-oh} (i) holds.

Example \ref{sec:two-sided-uniform} (ii)
 shows that $Q(s; \psi)$ is
norm-differentiable at $\psi = \psi_0$, and $D(s; \psi_0)$ is $L_2(%
\mathbb{R} \times \mathbb{S}^{d-1}, \mathrm{d}s\mathrm{d}\varsigma(u))$%
-measurable. Therefore, Assumption \ref{assumption:norm-diff} is satisfied. Assumption \ref{assumption:CLT} is satisfied
\begin{align*}
    \int_{0}^{1} (F_T(s) - F(s, \psi_0)) D(s, \psi_0) \mathrm{d}s = \frac{1}{\sqrt{T}}\sum_{t=1}^{T} \int_{0}^{1} (\mathds{1}(Z_t \le s) - F(s, \psi_0)) D(s, \psi_0) \mathrm{d}s \xrightarrow{d} N(0, \sigma^2)
\end{align*}
for some $\sigma^2 \ge 0$ by the central limit theorem as $\int D^2(s, \psi_0) ds$ is bounded.

\paragraph{The MSWD Estimator}

In MSWD estimator, $Q(s, \psi_0) = F^{-1}(s, \psi_0)$.

Since 
\begin{align*}
\int_0^1 (F^{-1}(s; \psi) - F^{-1}(s; \psi_0))^2 \mathrm{d}s = \frac{1}{3%
} (\psi - \psi_0)^2,
\end{align*}
Assumption \ref{assumption:identification} holds.

Since $F^{-1}(s; \psi)$ is linear in $\psi$, $Q(s, u,
\psi)$ is norm-differentiable at $\psi = \psi_0$ with 
\begin{align*}
D(s, \psi_0) = 
\begin{cases}
4 s & \text{ if } 0 \le s \le 1/4, \\ 
\frac{4}{3}(1 - s) & \text{ if } 1/4 \le s \le 1. 
\end{cases}%
\end{align*}
Therefore, Assumptions \ref{assumption:norm-diff} and \ref%
{assumption:hessian-pd} hold.
 
Following Proposition A.18 of \cite{Bobkov2019}, 
\begin{align*}
|F_T^{-1}(s) - F^{-1}(s)| & = |F^{-1}(F(F_T^{-1}(s))) - F^{-1}(s)| \\
& = \left|\int_{s}^{F(F_T^{-1}(s))} \frac{1}{f(F^{-1}(z))} \mathrm{d}z\right| \\
& \le \frac{1}{\min f(s)} |
F(F_T^{-1}(s)) - s| \\
& \le C |%
F(F_T^{-1}(s)) - s|. 
\end{align*}
where $C = \max\{4\psi_0, 4(1-\psi_0)/3\}$. The last inequality holds because $f(\cdot) = f(\cdot; \psi_0)$ under Assumption \ref{assumption:DGP}, and $1/f(\cdot; \psi_0) \le \max\{4\psi_0, 4(1-\psi_0)/3\}$.
This implies that
\begin{align*}
    T \int_0^1 (F_T^{-1}(s; \psi) - F^{-1}(s; \psi_0))^2 \mathrm{d}s \le C^2 \left( \sup_{s \in [0, 1]} |%
\sqrt{T}(F(F_T^{-1}(s), \psi_0) - s)|\right)^2.
\end{align*}
Equations (1.1.6) and (1.4.5) in \cite{Csorgo1983} imply that
\begin{align*}
    \sup_{s \in [0, 1]} |%
\sqrt{T}(F(F_T^{-1}(s), \psi_0) - s)| = \sup_{s \in [0, 1]} |%
\sqrt{T}(F_T(F^{-1}(s; \psi_0)) - s)| = \sup_{s \in \mathbb{R}} |%
\sqrt{T}(F_T(s) - F(s; \psi_0))| = O_p(1).
\end{align*}
The last equality holds because $\{\sqrt{T}(F_T(s) - F(s, \psi_0)): s \in \mathbb{R}\}$ is
P-Donsker. Therefore, Assumption \ref{assumption:big-oh} (i) holds.

Since $D(s, \psi_0)$ and $F^{-1}(s, \psi_0)$ are bounded by some absolute constants and $D(s, \psi_0)$ is continuous except $s = 1/4$, we can show Assumption \ref{assumption:CLT} by Theorem 1 (i) and Remark 2 in Chapter 19 of \cite{Shorack2009}.

\section{Verification of Assumptions for Two-sided Parameter-Dependent Support Model in Example 2.1}
\label{appendix:two-sided}

It follows from equation (\ref{eq:two-sided}) that when $y<g(x,\psi )$, 
\begin{equation*}
\frac{\partial F(y|X,\psi )}{\partial \psi }=-\frac{\partial g(x,\psi )}{%
\partial \psi }f_{L,\epsilon }(y-g(x,\psi )|x,\psi )+\int_{-\infty
}^{y-g(x,\psi )}\frac{\partial f_{L,\epsilon }(\epsilon |x,\psi )}{\partial
\psi }\mathrm{d}\epsilon ;
\end{equation*}%
when $y>g(x,\psi )$, 
\begin{equation*}
\frac{\partial F(y|X,\psi )}{\partial \psi }=-\frac{\partial g(X,\psi )}{%
\partial \psi }f_{U,\epsilon }(y-g(X,\psi )|X,\psi )+\int_{-\infty }^{0}%
\frac{\partial f_{L,\epsilon }(\epsilon |X,\psi )}{\partial \psi }\mathrm{d}%
\epsilon +\int_{0}^{y-g(x,\psi )}\frac{\partial f_{U,\epsilon }(\epsilon
|x,\psi )}{\partial \psi }\mathrm{d}\epsilon .
\end{equation*}

\begin{lemma}[two-sided model]
\label{lemma:Lipchitz-two-sided} Assume that $%
f_{L,\epsilon }(\epsilon |x.\psi )$ and $f_{U,\epsilon }(\epsilon |x,\psi )$
are continuous in $(\epsilon ,\psi )$ for each $x$, and
and $f_{L, \epsilon }(\epsilon |x,\psi )$ and $f_{R, \epsilon }(\epsilon |x,\psi )$ are differentiable with respect to $\psi$ for each $\epsilon$ and $\psi$; 
and $g(x, \psi)$ is continuously differentiable with respect to $\psi$ for each $x$, 
and there exist $\bar{f}%
_{L,\epsilon }(y|x)$ and $\bar{f}_{L,\epsilon }(y|x)$ such that 
\begin{enumerate}
    \item[(i)] $\left\Vert \frac{f_{L,\epsilon }(y|x,\psi )}{\partial \psi }\right\Vert \leq 
\bar{f}_{L,\epsilon }(y|x)$, $\left\Vert \frac{f_{U,\epsilon }(\epsilon
|x,\psi )}{\partial \psi }\right\Vert \leq \bar{f}_{U,\epsilon }(\epsilon
|x)$, \\
     \item[(ii)] $\int_{-\infty }^{0}\bar{f}_{L,\epsilon }(\epsilon |x)\mathrm{d}%
\epsilon <\infty$ and $\int_{0}^{\infty}\bar{f}_{U,\epsilon
}(\epsilon |x)\mathrm{d}\epsilon <\infty$ for each $x$.
\end{enumerate}
In addition, we assume that 
\begin{enumerate}
    \item[(iii)] $\sup_{x,\psi }\left\Vert \frac{\partial g(x,\psi )}{\partial \psi }%
\right\Vert <\infty$, $\quad \sup_{\epsilon ,x,\psi }|f_{L,\epsilon }(\epsilon
|x,\psi )|<\infty$, $\sup_{\epsilon ,x,\psi }|f_{U,\epsilon }(\epsilon
|x,\psi )|<\infty$, \\
     \item[(iv)] $\sup_{x,\psi } \int_{-\infty }^{0}\left\|\frac{%
\partial f_{L,\epsilon }(\epsilon |x,\psi )}{\partial \psi } \right\|\mathrm{d}%
\epsilon  <\infty$ ,$\quad \sup_{x,\psi
}\int_{0}^{\infty}\left\|\frac{\partial f_{U,\epsilon }(\epsilon
|x,\psi )}{\partial \psi } \right\| \mathrm{d}\epsilon  <\infty$.
\end{enumerate}
Then $F(y|x, \psi)$ is Lipschitz continuous with respect to 
$\psi $ uniformly in $y,x$. 
\end{lemma}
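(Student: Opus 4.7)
The plan is to mirror the argument for the one-sided model in Lemma \ref{lemma:Lipchitz-one-sided}: show that $F(y|x,\psi)$ is absolutely continuous in $\psi$ on line segments (with at most one kink point at $\psi$'s satisfying $y = g(x,\psi)$) and that its gradient in $\psi$ is uniformly bounded on the complement of the kink set, then conclude by the fundamental theorem of calculus.

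First I would split into the two regions given in \eqref{eq:two-sided} and justify interchanging differentiation with the integrals defining $F(y|x,\psi)$. Using the assumed continuity of $f_{L,\epsilon}$ and $f_{U,\epsilon}$ in $(\epsilon,\psi)$, the differentiability of these densities and of $g(x,\psi)$ in $\psi$, together with dominating envelopes $\bar f_{L,\epsilon}, \bar f_{U,\epsilon}$ from condition (i) that are integrable by (ii), Leibniz's rule applies. This yields the two explicit expressions for $\partial F/\partial\psi$ displayed just above the lemma, valid for $y<g(x,\psi)$ and $y>g(x,\psi)$ respectively.

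Next I would bound $\|\partial F(y|x,\psi)/\partial\psi\|$ uniformly in $(y,x,\psi)$ off the kink set. On the region $y<g(x,\psi)$, the boundary term is at most $\sup\|\partial g/\partial\psi\|\cdot\sup|f_{L,\epsilon}|$, both finite by (iii); the integral term is dominated by $\int_{-\infty}^{0}\|\partial f_{L,\epsilon}/\partial\psi\|\mathrm{d}\epsilon\leq\sup_{x,\psi}\int_{-\infty}^{0}\|\partial f_{L,\epsilon}/\partial\psi\|\mathrm{d}\epsilon$, finite by (iv). On the region $y>g(x,\psi)$, the three pieces $\int_{-\infty}^{0}\|\partial f_{L,\epsilon}/\partial\psi\|\mathrm{d}\epsilon$, $\sup\|\partial g/\partial\psi\|\cdot\sup|f_{U,\epsilon}|$, and $\int_{0}^{\infty}\|\partial f_{U,\epsilon}/\partial\psi\|\mathrm{d}\epsilon$ are bounded analogously by (iii) and (iv). Call the resulting absolute constant $C$.

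Finally I would upgrade the pointwise derivative bound to the Lipschitz claim. The key observation is that although $\partial F/\partial\psi$ generally has a jump at the kink set $\{\psi:y=g(x,\psi)\}$ (because $f_{L,\epsilon}(0|x,\psi)\neq f_{U,\epsilon}(0|x,\psi)$), $F(y|x,\psi)$ itself is continuous in $\psi$ for each $(y,x)$: both branches agree at $y=g(x,\psi)$ with value $\int_{-\infty}^{0}f_{L,\epsilon}(u|x,\psi)\mathrm{d}u$, and continuity in $\psi$ of this common value follows from (i)-(ii) via dominated convergence. Along any line segment connecting $\psi$ and $\psi'$, $t\mapsto g(x,(1-t)\psi+t\psi')$ is continuous, so the kink set in $t$ is a (generically finite) measure-zero set; $t\mapsto F(y|x,\cdot)$ is then absolutely continuous with derivative of norm at most $C\|\psi-\psi'\|$ almost everywhere, giving $|F(y|x,\psi)-F(y|x,\psi')|\leq C\|\psi-\psi'\|$ uniformly in $(y,x)$.

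The only delicate step is the third paragraph: verifying that the kink at $y=g(x,\psi)$ does not obstruct integrating the pointwise derivative bound into a Lipschitz inequality. I expect to handle this cleanly by writing $F = F_L + F_U$ with $F_L(y|x,\psi)=\int_{-\infty}^{\min(y-g(x,\psi),0)}f_{L,\epsilon}\,\mathrm{d}u$ and $F_U(y|x,\psi)=\int_{0}^{\max(y-g(x,\psi),0)}f_{U,\epsilon}\,\mathrm{d}u$ and applying the one-sided argument of Lemma \ref{lemma:Lipchitz-one-sided} separately to $F_L$ and $F_U$, since each is a one-sided CDF-type expression whose hypotheses transfer directly from (i)-(iv).
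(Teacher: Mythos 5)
Your proposal is correct, and its final step --- writing $F = F_L + F_U$ with $F_L(y|x,\psi)=\int_{-\infty}^{\min(y-g(x,\psi),0)}f_{L,\epsilon}\,\mathrm{d}u$ and $F_U(y|x,\psi)=\int_{0}^{\max(y-g(x,\psi),0)}f_{U,\epsilon}\,\mathrm{d}u$ and running the one-sided argument on each summand --- is precisely the right way to cash out the paper's one-line proof, which simply declares the argument ``identical to that of Lemma \ref{lemma:Lipchitz-one-sided}.'' Note, though, that the paper's one-sided argument is \emph{not} the ``differentiate $F$, bound the gradient, integrate along segments'' route of your second and third paragraphs: it bounds $|F(y|x,\psi')-F(y|x,\psi)|$ directly by splitting into a boundary-shift term $\bigl|\int_{y-g(x,\psi)}^{y-g(x,\psi')}f_{\epsilon}(u|x,\psi')\,\mathrm{d}u\bigr|\le \sup|f_{\epsilon}|\cdot\sup\lVert\partial g/\partial\psi\rVert\,\lVert\psi'-\psi\rVert$ and a density-change term controlled by the mean value theorem applied to $f_{\epsilon}(u|x,\cdot)$ under the integral and then by condition (iv); this never requires differentiating $F$ itself and so never meets the kink set. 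Your third paragraph, taken literally, is the weak point: a continuous function whose derivative is bounded off a closed null set need not be Lipschitz (the Cantor function is the standard counterexample), and the set $\{t: y=g(x,(1-t)\psi+t\psi')\}$ is merely closed, not ``generically finite,'' so you cannot integrate the pointwise gradient bound without first establishing the absolute continuity you are trying to prove. Since you flag this and fall back on the $F_L+F_U$ decomposition --- where each piece admits the direct two-term bound verbatim, using conditions (iii)--(iv) for the $L$ and $U$ envelopes respectively --- the proof goes through; I would simply make that decomposition the proof and drop the segment-integration detour.
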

\begin{proof}
    The proof is identical to that of Lemma \ref{lemma:Lipchitz-one-sided}.
\end{proof}

Like for one-sided model, the first two assumptions in lemma \ref%
{lemma:Lipchitz-two-sided} ensure validity of interchanging the limit\
operation and integration. The third and assumptions holds under
Conditions C2 and C3 in \cite{Chernozhukov2004}. The fourth and sixth conditions is stronger than the conditions $\sup_{\psi } \mathbb{E} \int_{-\infty }^{0}\left\|\frac{%
\partial f_{L,\epsilon }(\epsilon |x,\psi )}{\partial \psi } \right\|\mathrm{d}%
\epsilon  <\infty$, $\sup_{\psi } \mathbb{E} \int_{0}^{\infty}\left\|\frac{\partial f_{U,\epsilon }(\epsilon
|x,\psi )}{\partial \psi } \right\| \mathrm{d}\epsilon  <\infty$ which are similar to Conditions C2 in \cite{Chernozhukov2004}. 
But we do not
require that 
\begin{equation*}
	\lim_{\epsilon \downarrow 0}f_{U,\epsilon }(\epsilon |x,\psi
)-\lim_{\epsilon \uparrow 0}f_{L,\epsilon }(\epsilon |x,\psi )>\eta >0
\end{equation*}
as in \cite{Chernozhukov2004}.

In the two-sided model, when $y < g(x, \psi)$, 
\begin{align*}
\frac{\partial^2 F(y|x, \psi)}{\partial \psi \partial \psi^{\prime}} & = - 
\frac{\partial^2 g(x, \psi)}{\partial \psi \partial \psi^{\prime}} f_{L,
\epsilon}(y - g(x, \psi)|x, \psi) \\
& \quad + \frac{\partial g(x, \psi)}{\partial \psi} \left[\frac{\partial
g(x, \psi)}{\partial \psi^{\prime}} \frac{\partial f_{L, \epsilon}(y - g(x,
\psi)|x, \psi)}{\partial \epsilon} + \frac{\partial f_{L, \epsilon}(y - g(x, \psi)|x, \psi) }{%
\partial \psi^{\prime}} \right] \\
& \quad - \frac{\partial g(x, \psi)}{\partial \psi} \frac{f_{L, \epsilon}(y
- g(x, \psi)|x, \psi) }{\partial \psi^{\prime}} + \int_{-\infty}^{y - g(x,
\psi)} \frac{\partial^2 f_{L, \epsilon}(\epsilon, x, \psi)}{\partial \psi
\partial \psi^{\prime}} \mathrm{d}\epsilon.
\end{align*}

When $y > g(x, \psi)$, 
\begin{align*}
\frac{\partial^2 F(y|x, \psi)}{\partial \psi \partial \psi^{\prime}} & = - 
\frac{\partial^2 g(x, \psi)}{\partial \psi \partial \psi^{\prime}} f_{U,
\epsilon}(y - g(x, \psi)|x, \psi) \\
& \quad + \frac{\partial g(x, \psi)}{\partial \psi} \left[\frac{\partial
g(x, \psi)}{\partial \psi^{\prime}} \frac{\partial f_{U, \epsilon}(y - g(x,
\psi)|x, \psi)}{\partial \epsilon} + \frac{\partial f_{U, \epsilon}(y - g(x, \psi)|x, \psi) }{%
\partial \psi^{\prime}} \right] \\
& \quad - \frac{\partial g(x, \psi)}{\partial \psi} \frac{f_{U, \epsilon}(y
- g(x, \psi)|x, \psi) }{\partial \psi^{\prime}} + \int_{0}^{y - g(x,
\psi)} \frac{\partial^2 f_{U, \epsilon}(\epsilon, x, \psi)}{\partial \psi
\partial \psi^{\prime}} \mathrm{d}\epsilon \\
& \quad + \int_{-\infty}^{0} \frac{\partial^2 f_{L,
\epsilon}(\epsilon, x, \psi)}{\partial \psi \partial \psi^{\prime}} \mathrm{d%
}\epsilon.
\end{align*}

\begin{lemma}[two-sided model]
For two-sided model, in addition to all conditions in Lemma \ref%
{lemma:Lipchitz-two-sided}, let us assume that $f_{L,\epsilon
}(\epsilon |x,\psi )$ and $ f_{U,\epsilon }(y|x,\psi
)$ are continuously differentiable with respect to $(\epsilon, \psi)$ except for $\epsilon = 0$ for each $x$, and $f_{L, \epsilon }(\epsilon |x,\psi )$ and $f_{U, \epsilon }(\epsilon |x,\psi )$ are second-order continuously differentiable with respect to $\psi$ for each $\epsilon$ and $\psi$; 
and $g(x, \psi)$ is continuously second-order differentiable with respect to $\psi$ for each $x$, 
Also, there exist $\tilde{f}%
_{L,\epsilon }(\epsilon |x)$ and $\tilde{f}_{U,\epsilon }(\epsilon |x)$ such
that 
\begin{enumerate}
    \item[(i)] $\left\Vert \frac{\partial f_{L,\epsilon }(\epsilon |x,\psi )}{\partial \psi
\partial \psi ^{\prime }}\right\Vert \leq \tilde{f}_{L,\epsilon }(\epsilon
|x)$, $\int_{-\infty }^{0}\tilde{f}_{L,\epsilon }(\epsilon |x)\mathrm{d}%
\epsilon <\infty$ for all $x$, 
    \item[(ii)] $\left\Vert \frac{\partial f_{U,\epsilon }(\epsilon |x,\psi )}{\partial \psi
\partial \psi ^{\prime }}\right\Vert \leq \tilde{f}_{U,\epsilon }(\epsilon
|x)$, $\int_{0}^{\infty} \tilde{f}_{U}(\epsilon |x)\mathrm{d}\epsilon
<\infty$  for all $x$,
    \item[(iii)] 
    $\sup_{x,\psi }\left\Vert \frac{\partial^{2}g(x,\psi )}{\partial \psi \partial \psi ^{\prime }}\right\Vert$, $\sup_{\epsilon \le 0 ,x_{i},\psi
}\left\Vert \frac{\partial f_{L,\epsilon }(\epsilon |x,\psi )}{\partial \psi}\right\Vert$, 
$\sup_{\epsilon \ge 0 ,x_{i},\psi }\left\Vert \frac{\partial f_{U,\epsilon }(\epsilon |x,\psi )}{\partial \psi }\right\Vert$, 
$\sup_{\epsilon < 0, x,\psi }\left\Vert \frac{\partial f_{L,\epsilon}(\epsilon |x,\psi )}{\partial \epsilon }\right\Vert$, \\
$\sup_{\epsilon > 0, 
,x,\psi }\left\Vert \frac{\partial f_{U,\epsilon }(\epsilon |x,\psi )}{\partial \epsilon }\right\Vert$,  
$\sup_{x,\psi }\int_{-\infty }^{0 }\left\Vert  \frac{\partial ^{2}f_{L,\epsilon }(\epsilon |x,\psi )}{\partial \psi \partial \psi^{\prime }}\right\Vert \mathrm{d}\epsilon$, 
$\sup_{x,\psi} 
\int_{0}^{\infty} \left\Vert \frac{\partial ^{2}f_{U,\epsilon }(\epsilon |x,\psi )}{\partial \psi \partial \psi^{\prime }}\right\Vert \mathrm{d}\epsilon$ are bounded above by finite constants. 
\end{enumerate}
Then, $\widehat{Q}_{T}(\cdot ;u,\psi
) $ is norm-differentiable at $\psi =\psi _{0}$ with $D(s;u,\psi )=\mathbb{E}%
\left[ \frac{\partial F(u_{1}^{-1}(s-u_{2}^{\top }X_{t})|X_{t},\psi )}{%
\partial \psi }\right] $.
\end{lemma}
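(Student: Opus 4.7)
The plan is to mirror the argument used for the one-sided case in Lemma \ref{lemma:norm-diff-one-sided-general}. In two stages: (1) establish Condition \ref{condition:norm-diff-true-Q} for the population function $Q(\cdot;u,\psi) = \mathbb{E}[F(u_1^{-1}(s - u_2^\top X)|X,\psi)]$ by bounding the second $\psi$-derivative of the conditional cdf uniformly; (2) transfer norm-differentiability from the population $Q$ to the random $\widehat{Q}_T$ through Lemma \ref{lemma:proof-norm-diff-approx-Q-SC}, whose Condition \ref{condition:Lip-F} hypothesis is supplied by the just-proved Lemma \ref{lemma:Lipchitz-two-sided}.

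First, I would verify that twice differentiating the piecewise integral expression in \eqref{eq:two-sided} under the integral sign is legitimate. The dominating functions $\tilde f_{L,\epsilon}(\epsilon|x)$ on $(-\infty,0]$ and $\tilde f_{U,\epsilon}(\epsilon|x)$ on $[0,\infty)$ from assumptions (i) and (ii), together with the first-order dominating functions from Lemma \ref{lemma:Lipchitz-two-sided}, let dominated convergence produce exactly the $\partial^2 F/\partial\psi\partial\psi^\top$ formulas displayed immediately before the lemma statement, valid for all $(y,x,\psi)$ with $y \neq g(x,\psi)$.

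Second, I would show the uniform bound
\[
\sup_{y \neq g(x,\psi),\,x,\,\psi} \left\| \frac{\partial^2 F(y|x,\psi)}{\partial \psi \partial \psi^\top} \right\| < \infty.
\]
Each constituent of the two branch formulas is controlled by a piece of assumption (iii): the Hessian $\partial^2 g/\partial\psi\partial\psi^\top$, the density levels $|f_{L,\epsilon}|,|f_{U,\epsilon}|$, the one-sided $\epsilon$-partials on their respective half-lines, the gradient $\|\partial f_{L/U,\epsilon}/\partial\psi\|$ on the appropriate half-line, and the integrated Hessians $\int_{-\infty}^0 \|\partial^2 f_{L,\epsilon}/\partial\psi\partial\psi^\top\|\mathrm{d}\epsilon$ and $\int_0^\infty \|\partial^2 f_{U,\epsilon}/\partial\psi\partial\psi^\top\|\mathrm{d}\epsilon$. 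Summing these finite bounds gives the uniform bound, which is the obvious two-sided analog of the hypothesis in Lemma \ref{lemma:sup-bound-hessian}. A second-order Taylor expansion of $F(u_1^{-1}(s-u_2^\top x)|x,\psi)$ around $\psi_0$, followed by squaring, integrating against $w(s)\,\mathrm{d}s\,\mathrm{d}\varsigma(u)$, and taking expectation over $X$, then yields Condition \ref{condition:norm-diff-true-Q} for $Q(\cdot;u,\psi)$ with $D(s;u,\psi)=\mathbb{E}[\partial F(u_1^{-1}(s-u_2^\top X_t)|X_t,\psi)/\partial \psi]$.

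Third, Lemma \ref{lemma:Lipchitz-two-sided} already secures Condition \ref{condition:Lip-F} for the two-sided model, so the hypotheses of Lemma \ref{lemma:proof-norm-diff-approx-Q-SC} are in force; applying it delivers the norm-differentiability of $\widehat{Q}_T$ at $\psi_0$ with $\widehat{D}_T(s;u,\psi_0) = D(s;u,\psi_0)$ as claimed. The main obstacle is the bookkeeping in Step 2: compared to the one-sided case there are two half-line regimes ($y<g(x,\psi)$ versus $y>g(x,\psi)$) and two constituent densities ($f_{L,\epsilon}$ versus $f_{U,\epsilon}$), so care is needed to match each term in the Hessian expressions to the correct one-sided supremum in assumption (iii); crucially, the limits $\lim_{\epsilon\uparrow 0}f_{L,\epsilon}$ and $\lim_{\epsilon\downarrow 0}f_{U,\epsilon}$ need \emph{not} agree, because the kink at $y=g(x,\psi)$ has Lebesgue measure zero in $s$ and is excluded from the $\sup$ just as in Lemma \ref{lemma:sup-bound-hessian}, so the jump assumption \eqref{Jump} is never invoked.
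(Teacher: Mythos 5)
Your proposal is correct and follows essentially the same route as the paper, whose proof of this lemma simply defers to the one-sided argument: establish the validity and uniform boundedness of $\partial^2 F(y|x,\psi)/\partial\psi\partial\psi^{\prime}$ off the kink $y=g(x,\psi)$ from assumptions (i)--(iii) together with the conditions of Lemma \ref{lemma:Lipchitz-two-sided}, then invoke Lemma \ref{lemma:sup-bound-hessian} (with the two-sided substitution noted in its remark) and Lemma \ref{lemma:proof-norm-diff-approx-Q-SC} to pass from the population $Q$ to $\widehat{Q}_T$. Your additional observations---that the transfer step needs Condition \ref{condition:Lip-F} via Lemma \ref{lemma:Lipchitz-two-sided} and that the jump condition \eqref{Jump} is never used---are consistent with, and slightly more explicit than, what the paper writes.
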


\begin{proof}
    The proof is similar to that of Lemma \ref{lemma:norm-diff-one-sided-general}.
\end{proof}

\section{Proofs of the Results in Section \ref{section:asym-theory}}

For simplicity of notation, we denote
\begin{align*}
	\mathcal{M}_{T}(\psi ) & = \left( \int_{\mathbb{S}^{d-1}}
	\int_{\mathcal{S}}(Q_{T}(s;u)-\widehat{Q}_{T}(s;u,\psi))^{2}w(s)\mathrm{d}
	sd\varsigma (u)\right) ^{1/2}, \\
	\mathcal{M}(\psi ) & =\left(\int_{\mathbb{S}^{d-1}}
	\int_{\mathcal{S}}(Q(s;u)-Q(s;u,\psi))^{2}w(s)\mathrm{d}sd\varsigma (u)\right)
	^{1/2}, \\
	\widehat{\mathcal{SW}}_{T} & = \left(\int_{\mathbb{S}^{d-1}} \int_{\mathcal{S}}(Q_{T}(s; u)-Q(s;
	u))^{2}w(s)\mathrm{d}s \mathrm{d}\varsigma(u) \right) ^{1/2}, \\
	\widehat{\mathcal{SW}}_{T} (\psi) & = \left(\int_{\mathbb{S}^{d-1}} \int_{\mathcal{S}}(\widehat{Q}_{T}(s,u, \psi
	)-Q(s,u, \psi))^{2}w(s)\mathrm{d}s\mathrm{d}\varsigma(u)\right) ^{1/2}, \\
	\overline{B}_T & = \int_{\mathbb{S}
	^{d-1}}\int_{\mathcal{S}} \lVert \widehat{D}_{T}(s;u, \psi _{0})\rVert^2 w(s) 
	\mathrm{d}s \mathrm{d}\varsigma(u), \\
	\overline{B}_0 & = \int_{\mathbb{S}
	^{d-1}}\int_{\mathcal{S}} \lVert D(s;u, \psi _{0})\rVert^2 w(s) 
	\mathrm{d}s \mathrm{d}\varsigma(u).
\end{align*}

\subsection{Proof of Theorem \protect\ref{thm:consistency-general}
(consistency)}

We can show the consistency by following standard approaches in the extremum
estimator.
Since $\hat{\psi}_{T}$ satisfies $(\mathcal{M}_{T}(\hat{\psi}_{T}))^{2}\leq
\inf_{\psi \in \Psi }(\mathcal{M}_{T}(\psi ))^{2}+T^{-1}\epsilon _{T}$, we have
\begin{align*}
	\mathcal{M}_{T}(\hat{\psi}_{T}) 
	\leq \inf_{\psi \in \Psi }\mathcal{M}_{T}(\psi )+T^{-1/2}\epsilon _{T}^{1/2} =\inf_{\psi \in \Psi }\mathcal{M}_{T}(\psi )+o_{p}(1) \le \mathcal{M}(\psi_0 )+o_{p}(1).
\end{align*}

By the triangle inequality of sliced Wasserstein
distance, and Minkowski inequality (c.f., \cite{Nadjahi2020SDProperty}), we can show that $\sup_{\psi \in \Psi }|\mathcal{M}_{T}(\psi )-\mathcal{M}(\psi )|$ under Assumption \ref{assumption:convergence} as follows:
\begin{align*}
& \sup_{\psi \in \Psi }|\mathcal{M}_{T}(\psi )-\mathcal{M}(\psi )| \\
& =\sup_{\psi \in \Psi }\left\vert \left( \int_{\mathbb{S}
^{d-1}}\int_{\mathcal{S}}(Q_{T}(s;u)-\widehat{Q}_{T}(s;u,\psi))^{2}w(s)\mathrm{d}
sd\varsigma (u)\right) ^{1/2} - \left(\int_{\mathbb{S}^{d-1}}
\int_{\mathcal{S}}(Q(s;u)-Q(s;u,\psi))^{2}w(s)\mathrm{d}sd\varsigma (u)\right)
^{1/2} \right\vert \\
& \leq \sup_{\psi \in \Psi} \left(\int_{\mathbb{S}^{d-1}}
\int_{S}(Q_{T}(s; u)- Q(s; u) - \widehat{Q}_{T}(s; u, \psi)+Q(s; u, \psi
))^{2}w(s)\mathrm{d}s \mathrm{d} \varsigma(u) \right)^{1/2} \\
& \leq \widehat{SW}_{T} + \sup_{\psi
\in \Psi } \widehat{SW}_{T} (\psi) \xrightarrow{p}0.
\end{align*}
Assumption \ref{assumption:identification} implies that for
each $\epsilon >0$, there exists $\delta >0$ such that for any $\psi \notin
B(\psi _{0},\epsilon )\Rightarrow \mathcal{M}(\psi )-\mathcal{M}(\psi
_{0})\geq \delta >0.$

Using these results, we have 
\begin{align*}
\Pr (\hat{\psi}_{T}\notin B(\psi _{0},\epsilon ))& \leq \Pr (\mathcal{M}(%
\hat{\psi}_{T})-\mathcal{M}(\psi _{0})\geq \delta ) \\
& = \Pr(\mathcal{M}(\hat{\psi}_{T})-\mathcal{M}_{T}(\hat{\psi}_{T})+\mathcal{%
\ \ M}_{T}(\hat{\psi}_{T})-\mathcal{M}(\psi _{0})\geq \delta ) \\
& \leq \Pr (\mathcal{M}(\hat{\psi}_{T})-\mathcal{M}_{T}(\hat{\psi}_{T})+ 
\mathcal{M}_{T}(\psi _{0})+o_{p}(1)-\mathcal{M}(\psi _{0})\geq \delta ) \\
& \leq \Pr (2\sup_{\psi \in \Psi }|\mathcal{M}_{T}(\psi)-\mathcal{M}(\psi
)|+o_{p}(1)\geq \delta ) \\
& \rightarrow 0.
\end{align*}
Therefore, $\hat{\psi}_{T}\xrightarrow{p}\psi _{0}$.

\subsection{Proof of Theorem \protect\ref{thm:asym-normality-general}
(asymptotic distribution)}

Here, we will verify Assumptions 1 to 6 in \cite{Andrews1999}. When $\psi_0$
is in the interior of $\Psi$, Assumptions 5 and 6 in \cite{Andrews1999}
hold. Therefore, it is enough to prove Assumptions 1 to 4 in \cite{Andrews1999}.

First, Assumption
1 holds in \cite{Andrews1999} are satisfied under Assumptions \ref{assumption:convergence} and \ref%
{assumption:identification} by Theorem 4.1 in \cite{Andrews1999}.

Let us remind that 
\begin{equation*}
\widehat{R}_{T}(s;u,\psi ,\psi _{0}):=\widehat{Q}_{T}(s;u,\psi )- \widehat{Q}
_{T}(s;u,\psi _{0})-(\psi -\psi _{0})^{\top }\widehat{D}_{T}(s;u,\psi _{0})
\end{equation*}
with 
\begin{equation*}
\sup_{\psi \in \Psi ;\;\lVert \psi - \psi_{0} \rVert \leq \tau_{T}} \left| 
\frac{T \int_{\mathbb{S}^{d-1}} \int_{\mathcal{S}} \left( \widehat{R}_{T}(s;u,\psi,\psi
_{0})\right) ^{2}w(s)dsd\varsigma (u)}{(1 + \lVert \sqrt{T} (\psi-\psi
_{0})\rVert )^{2}} \right| = o_{p}(1)
\end{equation*}
for any $\tau_{T} \rightarrow 0$.

The objective function can be decomposed as follows:
\begin{align*}
	& \int_{\mathbb{S}^{d-1}}\int_{\mathcal{S}}(Q_{T}(s;u)- \widehat{Q}_{T}(s;u, \psi
	))^{2}w(s)\mathrm{d}s \mathrm{d}\varsigma (u) \\
	&\quad  = \int_{\mathbb{S}^{d-1}}\int_{\mathcal{S}} (Q_{T}(s;u)- \widehat{Q}
	_{T}(s;u,\psi_0))^{2}w(s)\mathrm{d}s \mathrm{d}\varsigma(u) - 2(\psi -
	\psi_0)^{\top} A_T /\sqrt{T} + (\psi - \psi_0)^{\top} B_T (\psi - \psi_0) +
	R_T,
\end{align*}
where 
\begin{align*}
	A_T & = \int_{\mathbb{S}^{d-1}}\int_{\mathcal{S}} (Q_{T}(s;u)- \widehat{Q}
	_{T}(s;u,\psi_0)) \widehat{D}_{T}(s;u,\psi _{0}) w(s)\mathrm{d}s \mathrm{d}
	\varsigma(u); \\
	B_T & = \int_{\mathbb{S}^{d-1}} \int_{\mathcal{S}} \widehat{D}_{T}(s;u,\psi _{0}) 
	\widehat{D}_{T}^{\top}(s ;u,\psi _{0}) w(s)\mathrm{d}sd\varsigma(u); \\
	R_T & = \int_{\mathbb{S}^{d-1}}\int_{\mathcal{S}} \widehat{R}_{T}^2(s;u,\psi ,\psi
	_{0}) w(s)\mathrm{d}s \mathrm{d}\varsigma(u) - 2 \int_{\mathbb{S}
		^{d-1}}\int_{\mathcal{S}} (Q_{T}(s;u)- \widehat{Q}_{T}(s;u,\psi_0)) \widehat{R}
	_{T}(s;u,\psi ,\psi _{0}) w(s)\mathrm{d}s \mathrm{d}\varsigma(u) \\
	& \quad + 2(\psi - \psi_0)^{\top} \int_{\mathbb{S}^{d-1}}\int_{\mathcal{S}} 
	\widehat{D}_{T}(s;u,\psi_0) \widehat{R}_{T}(s;u,\psi ,\psi _{0}) w(s)\mathrm{d}s \mathrm{d}\varsigma(u).
\end{align*}

Then, it is enough to show 
\begin{align*}
A_{T} \xrightarrow{d} N(0,V_{0}), \;
B_{T}  \xrightarrow{p} B_{0}, \text{ and } 
\sup_{\psi \in \Psi ;\;\lVert \psi - \psi_{0}\rVert \leq \tau_{T}}\frac{
T\lvert R_{T}\rvert }{(1+\lVert \sqrt{T}(\psi -\psi _{0})\rVert)^{2}}& %
\xrightarrow{p} 0
\end{align*}
for any $\tau _{T}\rightarrow 0$ as $T \rightarrow \infty $.

This is because the first two conditions satisfy Assumption 3 in \cite%
{Andrews1999}, and the third condition implies Assumption 2* in \cite%
{Andrews1999}. Assumptions 1, 2*, and 3 \cite{Andrews1999} imply Assumption 4 in \cite{Andrews1999} by Theorem 1 in \cite{Andrews1999}.

In Steps 1 to 3 below, we will analyze the behaviors of $A_T$, $B_T$, and $%
R_T$, respectively.

\textbf{Step 1: } We will analyze the behavior of $A_T$.

Note that $Q(\cdot, \cdot) = Q(\cdot, \cdot, \psi_0)$. 
\begin{align*}
	A_T & = \sqrt{T} \int_{\mathbb{S}^{d-1}}\int_{\mathcal{S}} (Q_{T}(s;u)- \widehat{Q}
	_{T}(s;u,\psi_0)) \widehat{D}_{T}(s;u,\psi _{0}) w(s)\mathrm{d}s \mathrm{d}
	\varsigma(u) \\
	& = \sqrt{T} \int_{\mathbb{S}^{d-1}}\int_{\mathcal{S}} (Q_{T}(s;u)-Q(s; u)) 
	\widehat{D}_{T}(s;u,\psi _{0}) w(s)\mathrm{d}s \mathrm{d}\varsigma(u) \\
	& \quad + \sqrt{T} \int_{\mathbb{S}^{d-1}}\int_{\mathcal{S}} (Q(s; u, \psi_0) - 
	\widehat{Q}_{T}(s;u,\psi_0)) \widehat{D}_{T}(s;u,\psi _{0}) w(s)\mathrm{d}s 
	\mathrm{d}\varsigma(u) 
\end{align*}
Under Assumption \ref{assumption:big-oh}, we have 
\begin{align*}
\sqrt{T} \int_{\mathbb{S}^{d-1}}\int_{\mathcal{S}} (Q_{T}(s;u)-Q(s; u)) (\widehat{D}
_{T}(s;u,\psi _{0}) - D(s;u,\psi _{0})) w(s)\mathrm{d}s \mathrm{d}
\varsigma(u) & = o_p(1) \\
\sqrt{T} \int_{\mathbb{S}^{d-1}}\int_{\mathcal{S}} (Q(s; u, \psi_0) - \widehat{Q}
_{T}(s;u,\psi_0)) (\widehat{D}_{T}(s;u,\psi _{0}) - D(s;u,\psi _{0})) w(s) 
\mathrm{d}s \mathrm{d}\varsigma(u) & = o_p(1).
\end{align*}
Therefore, we have 
\begin{align*}
A_T & = \sqrt{T} \int_{\mathbb{S}^{d-1}}\int_{\mathcal{S}} (Q_{T}(s;u)-Q(s; u))
D(s;u,\psi _{0}) w(s)\mathrm{d}s \mathrm{d}\varsigma(u) \\
& \quad + \sqrt{T} \int_{\mathbb{S}^{d-1}}\int_{\mathcal{S}} (Q(s; u, \psi_0) - 
\widehat{Q}_{T}(s;u,\psi_0)) D(s;u,\psi _{0}) w(s)\mathrm{d}s \mathrm{d}
\varsigma(u) + o_p(1),
\end{align*}
which is asymptotically normal under Assumption \ref{assumption:CLT}.

\textbf{Step 2: } We will analyze the behavior of $B_T$.

Note that 
\begin{align*}
B_T & = \int_{\mathbb{S}^{d-1}}\int_{\mathcal{S}} \widehat{D}_{T}(s;u,\psi _{0}) 
\widehat{D}_{T}^{\top}(s ;u,\psi _{0}) w(s)\mathrm{d}s \mathrm{d}\varsigma(u)
\end{align*}
Let us denote 
\begin{align*}
B_0 & = \int_{\mathbb{S}^{d-1}}\int_{\mathcal{S}} D(s;u,\psi _{0})D^{\top}(s
;u,\psi _{0}) w(s)\mathrm{d}s \mathrm{d}\varsigma(u).
\end{align*}
Then, we have 
\begin{align*}
B_T - B_0 & = \int_{\mathbb{S}^{d-1}}\int_{\mathcal{S}} \widehat{D}_{T}(s;u,\psi
_{0})\widehat{D}_{T}^{\top}(s ;u,\psi _{0}) w(s)\mathrm{d}sd\varsigma(u) -
\int_{\mathbb{S}^{d-1}}\int_{\mathcal{S}} D(s;u,\psi _{0}) D^{\top}(s ;u,\psi
_{0}) w(s)\mathrm{d}sd\varsigma(u) \\
& = \int_{\mathbb{S}^{d-1}}\int_{\mathcal{S}} (\widehat{D}_{T}(s;u,\psi _{0}) -
D(s;u,\psi _{0})) \widehat{D}_{T}^{\top}(s ;u,\psi _{0}) w(s)\mathrm{d}
sd\varsigma(u) \\
& \quad + \int_{\mathbb{S}^{d-1}}\int_{\mathcal{S}} D(s ;u,\psi _{0}) (\widehat{D}
_{T}(s;u,\psi _{0}) - D(s;u,\psi _{0}))^{\top} w(s)\mathrm{d}s \mathrm{d}%
\varsigma(u),
\end{align*}
which is $o_p(1)$ when Assumption \ref{assumption:big-oh} (iii) holds.

\textbf{Step 3: } We will analyze the behavior of $R_T$.

Here, we will show 
\begin{equation*}
\sup_{\psi \in \Psi ;\;\lVert \psi -\psi _{0}\rVert \leq \tau _{T}}\frac{
T\lvert R_{T}\rvert }{(1+\lVert \sqrt{T}(\psi -\psi _{0})\rVert)^{2}}
=o_{p}(1).
\end{equation*}

Let us remind that 
\begin{align*}
R_T 
& = 
\int_{\mathbb{S}^{d-1}}\int_{\mathcal{S}} \left(\widehat{R}_{T}(s;u,\psi
,\psi _{0})\right)^2 w(s)\mathrm{d}s \mathrm{d}\varsigma(u) \\
& \quad 
- 
2 \int_{\mathbb{S}^{d-1}} \int_{\mathcal{S}} (Q_{T}(s;u)- \widehat{Q}_{T}(s;u,\psi_0)) 
\widehat{R}_{T}(s;u,\psi ,\psi _{0}) w(s)\mathrm{d}s \mathrm{d}\varsigma(u)
\\
& \quad + 
2(\psi - \psi_0)^{\top} \int_{\mathbb{S}^{d-1}} \int_{\mathcal{S}} 
\widehat{D}_{T}(s;u,\psi_0) \widehat{R}_{T}(s;u,\psi ,\psi _{0}) 
w(s) \mathrm{d}s \mathrm{d}\varsigma(u),
\end{align*}
where 
\begin{equation*}
\sup_{\psi \in \Psi ;\;\lVert \psi - \psi_{0} \rVert \leq \tau_{T}} \left| 
\frac{T \int_{\mathbb{S}^{d-1}}\int \left( \widehat{R}_{T}(s;u,\psi,\psi
_{0})\right) ^{2}w(s)dsd\varsigma (u)}{(1 + \lVert \sqrt{T}(\psi-\psi _{0})
\rVert )^{2}} \right| = o_{p}(1)
\end{equation*}
for any $\tau_{T} \rightarrow 0$.

Note that 
\begin{align*}
	|R_T| 
	& \le 
	\int_{\mathbb{S}^{d-1}} \int_{\mathcal{S}} \left(\widehat{R}_{T} (s;u,\psi ,\psi _{0})\right)^2 w(s)\mathrm{d}s \mathrm{d}\varsigma(u) 
	+ 
	2 \mathcal{M}_T(\psi_0)
	\left( \int_{\mathbb{S}^{d-1}}\int_{\mathcal{S}} \left(\widehat{R}_{T}(s;u,\psi
	,\psi _{0})\right)^2 w(s)\mathrm{d}s \mathrm{d}\varsigma(u) \right)^{1/2} \\
	& \quad + 2 \lVert \psi - \psi_0 \rVert_{\Psi} 
	\left( \overline{B}_T \right)^{1/2} 
	\left( \int_{\mathbb{S}^{d-1}} \int_{\mathcal{S}} \left(\widehat{R}_{T}(s;u,\psi ,\psi _{0})\right)^2 w(s) 
	\mathrm{d}s \mathrm{d}\varsigma(u) \right)^{1/2}.
	\end{align*}
This implies that 
\begin{align*}
& \sup_{\psi \in \Psi ;\;\lVert \psi -\psi _{0}\rVert \leq \tau _{T}}\frac{
T\lvert R_{T}\rvert }{(1+\lVert \sqrt{T}(\psi -\psi _{0})\rVert)^{2}} \\
& \qquad \le \sup_{\psi \in \Psi ;\;\lVert \psi -\psi _{0}\rVert \leq \tau_{T}} 
\frac{T \int_{\mathbb{S}^{d-1}}\int_{\mathcal{S}} \left(\widehat{R}_{T}(s;u,\psi
,\psi _{0})\right)^2 w(s)\mathrm{d}s \mathrm{d}\varsigma(u) }{(1+\lVert 
\sqrt{T}(\psi -\psi_{0})\rVert )^{2}} \\
& \qquad \quad + 
2 \sup_{\psi \in \Psi ;\;\lVert \psi -\psi _{0}\rVert \leq
\tau_{T}} \frac{T \mathcal{M}_T(\psi_0) \left( \int_{\mathbb{S}^{d-1}}\int_{\mathcal{S}} \left(\widehat{R}
_{T}(s;u,\psi ,\psi _{0})\right)^2 w(s)\mathrm{d}s \mathrm{d}\varsigma(u)
\right)^{1/2} }{(1+\lVert \sqrt{T}(\psi -\psi _{0 })\rVert)^{2}} \\
& \qquad \quad + 2 \sup_{\psi \in \Psi ;\;\lVert \psi -\psi _{0}\rVert \leq
\tau_{T}} \frac{T \lVert \psi - \psi_0 \rVert \left( \overline{B}_T \right)^{1/2} \left( \int_{\mathbb{S}
^{d-1}}\int_{\mathcal{S}} \left(\widehat{R}_{T}(s;u,\psi ,\psi _{0})\right)^2 w(s) 
\mathrm{d}s \mathrm{d}\varsigma(u) \right)^{1/2}}{(1+\lVert \sqrt{n}(\psi
-\psi _{0 })\rVert)^{2}}.
\end{align*}

First, we have 
\begin{align*}
\sup_{\psi \in \Psi ;\;\lVert \psi -\psi _{\ast }\rVert \leq \tau_{T}}\frac{
T \int_{\mathbb{S}^{d-1}}\int_{\mathcal{S}} \left(\widehat{R} _{T}(s;u,\psi ,\psi
_{0})\right)^2 w(s)\mathrm{d}s \mathrm{d}\varsigma(u) }{(1 + \lVert \sqrt{T}
(\psi -\psi _{\ast})\rVert)^{2}} & = o_p(1)
\end{align*}
by Assumption \ref{assumption:norm-diff}.

Second, we have 
\begin{align*}
& \sup_{\psi \in \Psi ;\;\lVert \psi -\psi_{0} \rVert \leq \tau_{T}} 
\frac{T \mathcal{M}_T(\psi_0)
\left( \int_{\mathbb{S}^{d-1}}\int_{\mathcal{S}}\left(\widehat{R}_{T}(s;u,\psi
,\psi _{0})\right)^2 w(s)\mathrm{d}s \mathrm{d}\varsigma(u) \right)^{1/2} }{
(1+\lVert \sqrt{T}(\psi - \psi_{0})\rVert)^{2}} \\
& \qquad \le \sqrt{T} \mathcal{M}_T(\psi_0) \sup_{\psi \in \Psi ;\;\lVert \psi -\psi_{0} \rVert \leq
\tau_{T}} \frac{\sqrt{T} \left( \int_{\mathbb{S}^{d-1}}\int_{\mathcal{S}}\left( 
\widehat{R}_{T}(s;u,\psi ,\psi _{0})\right)^2 w(s)\mathrm{d}s \mathrm{d}
\varsigma(u) \right)^{1/2} }{(1+\lVert \sqrt{T}(\psi - \psi_{0})\rVert)^{2}}
\\
& \qquad \le \sqrt{T} \mathcal{M}_T(\psi_0) \sup_{\psi \in \Psi ;\;\lVert \psi -\psi_{0} \rVert \leq
\tau_{T}} \frac{\sqrt{T} \left( \int_{\mathbb{S}^{d-1}}\int_{\mathcal{S}}\left( 
\widehat{R}_{T}(s;u,\psi ,\psi _{0})\right)^2 w(s)\mathrm{d}s \mathrm{d}
\varsigma(u) \right)^{1/2} }{1+\lVert \sqrt{T}(\psi - \psi_{0})\rVert} \\
& \qquad = \sqrt{T} \mathcal{M}_T(\psi_0)
\left(\sup_{\psi \in \Psi ;\;\lVert \psi -\psi_{0} \rVert \leq \tau_{T}} 
\frac{ T \int_{\mathbb{S}^{d-1}}\int_{\mathcal{S}}\left( \widehat{R}_{T}(s;u,\psi
,\psi _{0})\right)^2 w(s)\mathrm{d}s \mathrm{d} \varsigma(u) }{(1 + \lVert 
\sqrt{T} (\psi - \psi_{0})\rVert)^2}\right)^{1/2} \\
& \qquad = O_p(1) o_p(1) = o_p(1).
\end{align*}
Here, the second last equality holds because Assumptions \ref%
{assumption:big-oh} implies 
\begin{align*}
\sqrt{T} \mathcal{M}_T(\psi_0) 
\le \sqrt{T} \widehat{\mathcal{SW}}_T + \sqrt{T} \widehat{\mathcal{SW}}_T(\psi_0) 
= O_p(1),
\end{align*}
and Assumption \ref{assumption:norm-diff} implies 
\begin{align*}
\sup_{\psi \in \Psi ;\;\lVert \psi -\psi_{0 }\rVert \leq \tau_{T}} \frac{ T
\int_{\mathbb{S}^{d-1}}\int_{\mathcal{S}}\left(\widehat{R}_{T}(s;u,\psi ,\psi
_{0})\right)^2 w(s)\mathrm{d}s \mathrm{d}\varsigma(u) }{(1 + \lVert \sqrt{T}
(\psi -\psi _{0})\rVert)^{2}} = o_p(1).
\end{align*}

Third, we have 
\begin{align*}
& \sup_{\psi \in \Psi ;\;\lVert \psi - \psi_{0} \rVert \leq \tau_{T}} 
\frac{ T \lVert \psi - \psi_0 \rVert 
\left( \overline{B}_T \right)^{1/2} 
\left( \int_{\mathbb{S}^{d-1}}\int_{\mathcal{S}}\left( 
\widehat{R}_{T}(s;u,\psi ,\psi _{0})\right)^2 w(s)\mathrm{d}s \mathrm{d}
\varsigma(u) \right)^{1/2}}{(1+\lVert \sqrt{T}(\psi -\psi _{0 })\rVert)^{2}}
\\
& \qquad = \left( \overline{B}_T \right)^{1/2}
\sup_{\psi \in \Psi ;\;\lVert \psi - \psi_{0} \rVert \leq \tau_{T}} \frac{T
\lVert \psi - \psi_0 \rVert \left( \int_{\mathbb{S}^{d-1}}\int_{\mathcal{S}}\left( 
\widehat{R}_{T}(s;u,\psi ,\psi _{0})\right)^2 w(s)\mathrm{d}s \mathrm{d}
\varsigma(u) \right)^{1/2}}{(1+\lVert \sqrt{T}(\psi -\psi _{0 })\rVert)^{2}}
\\
& \qquad = \left( \overline{B}_T \right)^{1/2}
\sup_{\psi \in \Psi ;\;\lVert \psi - \psi_{0} \rVert \leq \tau_{T}} \frac{ 
\sqrt{T} \lVert \psi - \psi_0 \rVert \left( T \int_{\mathbb{S}
^{d-1}}\int_{\mathcal{S}}\left( \widehat{R}_{n}(s;u,\psi ,\psi _{0})\right)^2 w(s) 
\mathrm{d}s \mathrm{d} \varsigma(u) \right)^{1/2}}{(1+\lVert \sqrt{T}(\psi
-\psi _{0 })\rVert)(1+\lVert \sqrt{n}(\psi -\psi _{0 })\rVert)} \\
& \qquad = O_p(1) o_p(1) = o_p(1).
\end{align*}

Therefore, we have 
\begin{equation*}
\sup_{\psi \in \Psi ;\;\lVert \psi - \psi_{0}\rVert \leq \tau _{T}}\frac{
T\lvert R_{T}\rvert }{(1+\lVert \sqrt{T}(\psi -\psi _{0})\rVert)^{2}}
=o_{p}(1).
\end{equation*}

Therefore, Assumptions 1 to 6 in \cite{Andrews1999} hold, and we can apply
the Theorem 3 in \cite{Andrews1999}. This implies 
\begin{align*}
\sqrt{T}(\hat{\psi}_T - \psi_0) 
 = B_T^{-1} A_T + o_p(1) 
\xrightarrow{d} B_0^{-1} N(0, \Omega_0),
\end{align*}
where $\Omega _{0}=(e_{1}^{\top }, - e_{1}^{\top})V_{0}%
\begin{pmatrix}
e_{1} \\ 
- e_{1}%
\end{pmatrix}%
$ in which $e_{1}=(1,\dotsc ,1)^{\top }$ is a $d_{\psi}$
by $1$ vector of ones.

\section{Proofs of the Results in Section \ref{section:SC-conditional}} \label{appendix:proofs-SC}

\subsection{Preliminary Lemma}

\begin{lemma}
\label{lemma:Lipchitz-k}	
	
Suppose $F(y|x, \psi)$ is Lipschitz with respect to $\psi$ in the sense that
there exists a function $M(y, x)$ for any $\psi, \psi^{\prime}\in \Psi$, 
\begin{align*}
|F(y|x, \psi) - F(y|x, \psi^{\prime})| \le M(y, x) \|\psi - \psi^{\prime}\|,
\end{align*}
and $\int_{u \in \mathbb{S}^{d-1}, u_1 \ne 0} \int_{-\infty}^{\infty} \int
M^2(u_1^{-1}(s - u_2^{\top} x); x) \mathrm{d}F_X(x) w(s) \mathrm{d}s \mathrm{%
d }\varsigma(u) < \infty$, where $F_X(\cdot)$ is the CDF of $X_t$. Then, $%
k(x, x^{\prime}, \psi)$ and $k_2(x, x^{\prime}, \psi, \psi_0) $ are
Lipschitz in $\psi$.
\end{lemma}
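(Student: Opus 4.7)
The plan is to reduce the Lipschitz-in-$\psi$ property of the kernels $k$ and $k_2$ to the analogous property of the conditional integrand
\[
h_\psi(s,u,x) := \mathbb{E}[I(u^{\top}Z \le s) \mid X = x, \psi] - G(s;u,\psi),
\]
and then read the bound for $h_\psi$ off the hypothesis by using the representations in (\ref{GT}) and (\ref{G}).

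First, I would observe that (\ref{GT}) and (\ref{G}) express $\mathbb{E}[I(u^{\top}Z \le s)|X=x,\psi]$ as $F(u_1^{-1}(s-u_2^{\top}x)|x,\psi)$ for $u_1 > 0$ and as $1-F(u_1^{-1}(s-u_2^{\top}x)|x,\psi)$ for $u_1 < 0$, while $G(s;u,\psi)$ is the $F_X$-expectation of the same quantity. The Lipschitz hypothesis on $F$ therefore yields for $u_1 \ne 0$
\[
|h_\psi(s,u,x) - h_{\psi'}(s,u,x)| \le \bigl[M(u_1^{-1}(s-u_2^{\top}x),x) + \mathbb{E}_{X}M(u_1^{-1}(s-u_2^{\top}X),X)\bigr] \|\psi-\psi'\|,
\]
whereas for $u_1 = 0$ the integrand reduces to the $\psi$-independent $I(u_2^{\top}x \le s)$ so the left-hand side is zero. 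I would also record the crude bound $|h_\psi| \le 2$.

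Next, writing $k(x,x',\psi) = \int h_\psi(s,u,x) h_\psi(s,u,x') w(s)\,ds\,d\varsigma(u)$ and applying the identity $ab-a'b' = (a-a')b + a'(b-b')$ together with $|h_\psi| \le 2$ gives
\[
|k(x,x',\psi) - k(x,x',\psi')| \le 2 \int \bigl[|h_\psi - h_{\psi'}|(s,u,x) + |h_\psi - h_{\psi'}|(s,u,x')\bigr] w(s)\,ds\,d\varsigma(u).
\]
Substituting the pointwise bound of the previous step and applying Cauchy--Schwarz in $w(s)\,ds\,d\varsigma(u)$ (using $\int w(s)\,ds < \infty$) plus Jensen's inequality on the $\mathbb{E}_X$ term produces a coefficient controlled by $\bigl(\int M^2(u_1^{-1}(s-u_2^{\top}x),x) w\,ds\,d\varsigma\bigr)^{1/2}$ and $\bigl(\iint M^2\,dF_X\,w\,ds\,d\varsigma\bigr)^{1/2}$. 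The latter is finite by assumption; Fubini applied to the same assumption makes the former finite for $F_X$-a.e.\ $x$. This gives $F_X\otimes F_X$-a.e.\ Lipschitz-in-$\psi$ for $k$, which is precisely what is needed to invoke Corollary 4.1 of Newey (1991) for the $V$-statistic in Lemma \ref{lemma:small-oh-SC}.

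Finally, setting $\Delta_\psi := h_\psi - h_{\psi_0}$ gives $k_2(x,x',\psi,\psi_0) = \int \Delta_\psi(s,u,x) \Delta_\psi(s,u,x') w\,ds\,d\varsigma(u)$ with $|\Delta_\psi| \le 4$ and $\Delta_\psi - \Delta_{\psi'} = h_\psi - h_{\psi'}$; re-running the product-splitting argument of the previous paragraph reduces the bound on $|k_2(x,x',\psi,\psi_0) - k_2(x,x',\psi',\psi_0)|$ to the quantity already bounded for $k$, so the same coefficient (up to an absolute factor) works. The only step that requires care is the case split by the sign of $u_1$ and the use of Jensen for the $F_X$-average of $M$; both are routine given the hypothesis, so I do not anticipate a substantive obstacle.
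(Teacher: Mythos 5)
Your proposal is correct and follows essentially the same route as the paper: the same product-splitting identity with the crude bound $|h_\psi|\le 2$, the same reduction of the centered conditional integrand to the pointwise Lipschitz bound $M(u_1^{-1}(s-u_2^{\top}x),x)\|\psi-\psi'\|$ via the case split on the sign of $u_1$, and the same observation that $k_2$ reduces to the quantity already bounded for $k$. If anything, you are slightly more explicit than the paper about the Cauchy--Schwarz/Jensen step needed to pass from the assumed $L^2$-integrability of $M$ to a finite (a.e.) Lipschitz coefficient, which the paper leaves implicit.
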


\begin{proof}
Note that 
\begin{align*}
\mathbb{E}[I(u^{\top} Z_t \le s)|X_t, \psi] = 
\begin{cases}
F(u_1^{-1}(s - u_2^{\top} X_t) | X_t, \psi) & \text{if } u_1 > 0 \\ 
I(u_2^{\top} X_t \le s) & \text{if } u_2 < 0 \\ 
1 - F(u_1^{-1}(s - u_2^{\top} X_t) | X_t, \psi) & \text{if } u_1 > 0%
\end{cases}%
\end{align*}
Then, 
\begin{align*}
& |k(X_t, X_j, \psi) - k(X_t, X_j, \psi^{\prime})| \\
&\qquad \le 2 \int_{u \in \mathbb{S}^{d-1}, u_1 \ne 0} \int_{-\infty}^{\infty} %
\Bigg| \left(F(u_1^{-1}(s - u_2^{\top} X_t) | X_t, \psi) - \mathbb{E}
[F(u_1^{-1}(s - u_2^{\top} X_t) | X_t, \psi)]\right) \\
& \qquad \quad \quad \quad \quad \quad \quad \quad \quad \quad - \left(F(u_1^{-1}(s
- u_2^{\top} X_t) | X_t, \psi') - \mathbb{E}[F(u_1^{-1}(s - u_2^{\top} X_t) |
X_t, \psi')]\right) \Bigg| w(s) \mathrm{d}s \mathrm{d}\varsigma(u) \\
&\qquad  \quad + 2 \int_{u \in \mathbb{S}^{d-1}, u_1 \ne 0} \int_{-\infty}^{\infty} %
\Bigg| \left(F(u_1^{-1}(s - u_2^{\top} X_j) | X_j, \psi) - \mathbb{E}
[F(u_1^{-1}(s - u_2^{\top} X_j) | X_j, \psi)]\right) \\
&\qquad \quad \quad \quad \quad \quad \quad \quad \quad \quad - \left(F(u_1^{-1}(s
- u_2^{\top} X_j) | X_j, \psi') - \mathbb{E}[F(u_1^{-1}(s - u_2^{\top} X_j) |
X_j, \psi')]\right) \Bigg| w(s) \mathrm{d}s \mathrm{d}\varsigma(u).
\end{align*}
Since we have 
\begin{align*}
|F(u_1^{-1}(s - u_2^{\top} x) |x, \psi) - F(u_1^{-1}(s - u_2^{\top} x) |x,
\psi^{\prime})| \le M(u_1^{-1}(s - u_2^{\top} x); x) \| \psi -
\psi^{\prime}\|,
\end{align*}
where $\int_{u \in \mathbb{S}^{d-1}, u_1 \ne 0} \int_{-\infty}^{\infty} \int
M^2(u_1^{-1}(s - u_2^{\top} x); x) \mathrm{d}F_X(x) w(s) \mathrm{d }s 
\mathrm{d}\varsigma(u) < \infty$, $k(x, x^{\prime}, \psi)$ is Lipchitz
continuous with respect to $\psi$. With a similar calculation, we can show
that $k_2(x, x^{\prime}, \psi, \psi_0)$ is Lipchitz continuous with respect
to $\psi$ when $F(y|x, \psi)$ is Lipchitz continuous as well.
\end{proof}

\subsection{Proof of Lemma \ref{lemma:Lipchitz-one-sided}}

     Because $f_{\epsilon}(u|x, \psi) = 0$ when $u < 0$, we can represent $F(Y; \psi)$ as follows.
     \begin{align*}
         F(y|x, \psi) = \int_{-\infty}^{y-g(x, \psi)} f_{\epsilon}(u |x, \psi) d u.
     \end{align*}
     Then, 
     \begin{align*}
         |F(y|x, \psi') - F(y|x, \psi)| 
         & = 
         \left|\int_{-\infty}^{y-g(x, \psi')} f_{\epsilon}(u |x, \psi') d u - \int_{-\infty}^{y-g(x, \psi)} f_{\epsilon}(u |x, \psi) d u \right| \\
         & \le
         \left|\int_{y-g(x, \psi)}^{y-g(x, \psi')} f_{\epsilon}(u |x, \psi') d u \right| + 
        \left|\int_{-\infty}^{y-g(x, \psi)} (f_{\epsilon}(u |x, \psi') - f_{\epsilon}(u |x, \psi)) d u \right| \\
        & \le \sup_{\epsilon, x, \psi} | f_{\epsilon}(\epsilon|x, \psi)| |g(x, \psi') - g(x, \psi)|
        +
        \left|\int_{-\infty}^{y-g(x, \psi)} (\psi' - \psi)^{\top} \frac{\partial}{\partial \psi} f_{\epsilon}(u | x, \tilde{\psi}) d u \right| \\
        & \le
        \sup_{\epsilon, x, \psi} | f_{\epsilon}(\epsilon|x, \psi)| \sup_{x, \psi} \left\| \frac{\partial}{\partial \psi} g(x, \psi)  \right\| \| \psi' - \psi\|  \\
        & \quad +
       \left( \sup_{\psi, x} \int_0^{\infty} \left\| \frac{\partial}{\partial \psi} f_{\epsilon}(u | x, \psi)\right\| d\epsilon \right)  \| \psi' - \psi \| \\
       & \le  C \| \psi' - \psi \|,
     \end{align*}
     where $\tilde{\psi}$ is between $\psi$ and $\psi'$, and $C$ is some absolute positive constant.

\subsection{Proof of Lemma \ref{lemma:sup-bound-hessian}}

For simplicity of notation, let 
\begin{align*}
G(u,s,\psi )& =\mathbb{E}[F(u_{1}^{-1}(s-u_{2}^{\top }X_{t})|X_{t},\psi )],
\\
G_{t}(u,s,\psi )& =F(u_{1}^{-1}(s-u_{2}^{\top }X_{t})|X_{t},\psi ), \\
D_{t}(u,s,\psi )& =\frac{\partial }{\partial \psi }F(u_{1}^{-1}(s-u_{2}^{%
\top }X_{t})|X_{t},\psi ), \\
D(u,s,\psi )& =\mathbb{E}\left[ \frac{\partial }{\partial \psi }%
F(u_{1}^{-1}(s-u_{2}^{\top }X_{t})|X_{t},\psi )\right] , \\
R(s,u,\psi )& =G(u,s,\psi )-G(u,s,\psi _{0})-(\psi -\psi _{0})^{\top
}D(u,s,\psi ), \\
R_{t}(s,u,\psi )& =G_{t}(u,s,\psi )-G_{t}(u,s,\psi _{0})-(\psi -\psi
_{0})^{\top }D_{t}(u,s,\psi ).
\end{align*}%
Here, we set $D_{t}(s,u,\psi )=0$ at kink point $s$. Note that $R(s,u,\psi
)=\mathbb{E}[R_{t}(s,u,\psi )]$.

Our main purpose is to show that for any $\tau_T \rightarrow 0$, 
\begin{align*}
\sup_{\|\psi - \psi_0\| \le \tau_T} \frac{T \int_{u \in \mathbb{S}^{d-1}}
\int_{-\infty}^{\infty} |R(s ,u, \psi)|^2 w(s) \mathrm{d}s \mathrm{d}%
\varsigma(u) }{(1 + \|\sqrt{T}(\psi - \psi_0)\|^2) } = o(1).
\end{align*}
It is enough to show
\begin{align}
\sup_{\|\psi - \psi_0\| \le \tau_T} \frac{T \mathbb{E}\left[\int_{u \in 
\mathbb{S}^{d-1}} \int_{-\infty}^{\infty} |R_t(s ,u, \psi)|^2 w(s) \mathrm{d}%
s \mathrm{d}\varsigma(u) \right] }{(1 + \|\sqrt{T}(\psi - \psi_0)\|^2) } =
o(1) \label{eq:cond-norm-diff-ver2}
\end{align}
for any $\tau_T \rightarrow 0$. This is because $R(s,u,\psi
)=\mathbb{E}[R_{t}(s,u,\psi )]$ implies that 
\begin{align*}
     \int_{u \in \mathbb{S}^{d-1}}
\int_{-\infty}^{\infty} |R(s ,u, \psi)|^2 w(s) \mathrm{d}s \mathrm{d}%
\varsigma(u) \le \mathbb{E}\left[\int_{u \in 
\mathbb{S}^{d-1}} \int_{-\infty}^{\infty} |R_t(s ,u, \psi)|^2 w(s) \mathrm{d}%
s \mathrm{d}\varsigma(u) \right].
\end{align*}
Therefore, we will analyze equation (\ref{eq:cond-norm-diff-ver2}).  
Assume that all assumptions in Lemmas \ref{lemma:Lipchitz-one-sided} or
\ref{lemma:Lipchitz-two-sided} are satisfied. Then, $g(x,\psi )$ is Lipshitz continuous at $\psi _{0}$
uniformly in $x:$ 
\begin{equation*}
|g(x,\psi )-g(x,\psi ^{\prime })|\leq C_{g}\Vert \psi -\psi ^{\prime }\Vert
\end{equation*}%
and, $|D_{t}(s,u,\psi )|$ is uniformly bounded
by an absolute constant, and for any $\psi ,\psi ^{\prime }\in \Psi $, 
\begin{equation*}
|F(y|x,\psi )-F(y|x,\psi ^{\prime })|\leq C\Vert \psi -\psi ^{\prime }\Vert .
\end{equation*}

When $u^{-1}(s-u_{2}^{\top }X_{t})\in A_{T}(X_t):=[g(X_t,\psi )\wedge g(X_t,\psi
_{0})-C_{g}\tau _{T},g(X_t,\psi )\vee g(X_t,\psi _{0})+C_{g}\tau _{T}]$, 
\begin{equation*}
|R_{t}(s,u,\psi )|\leq C\Vert \psi -\psi ^{\prime }\Vert .
\end{equation*}%
for some absolute constant $C$. This implies that when $\Vert \psi -\psi
_{0}\Vert \leq \tau _{T}$, 
\begin{align*}
& \int_{u\in \mathbb{S}^{d-1}}\int_{u^{-1}(s-u_{2}^{\top }X_{t})\in
A_{T}(X_t)}|R_{t}(s,u,\psi )|^{2}w(s)\mathrm{d}s\mathrm{d}\varsigma (u) \\
& \leq
C\Vert \psi -\psi _{0}\Vert ^{2}|u_{1}|\Bigg[(|g(X_{i},\psi )-g(X_{i},\psi
_{0})|+2C\tau _{T}\bigg] =O(\tau _{T}^{3}).
\end{align*}

When $\Vert \psi -\psi _{0}\Vert \leq \tau _{T}$ and $u^{-1}(s-u_{2}^{\top
}X_{t})\notin A_{T}(X_t)$, by Taylor theorem, we have 
\begin{align*}
& F(u^{-1}(s-u_{2}^{\top }X_{t})|x,\psi )-F(u^{-1}(s-u_{2}^{\top
}X_{t})|x,\psi _{0})-(\psi -\psi )^{\top }\left( \frac{\partial
F(u^{-1}(s-u_{2}^{\top }X_{t})|x,\psi _{0})}{\partial \psi }\right) \\
& \qquad =(\psi -\psi _{0})^{\top }\frac{\partial ^{2}F(u^{-1}(s-u_{2}^{\top
}X_{t})|X_{t},\tilde{\psi})}{\partial \psi \partial \psi ^{\prime }}(\psi
-\psi _{0})
\end{align*}%
for some $\tilde{\psi}$. When the weight function is integrable, 
\begin{equation*}
\int_{u\in \mathbb{S}^{d-1}}\int_{u^{-1}(s-u_{2}^{\top }X_{t})\notin
A_{T}(X_t)}|R_{t}(s,u,\psi )|^{2}w(s)\mathrm{d}s\mathrm{d}\varsigma (u)\leq C
\sup_{y \ne g(x, \psi),x,\psi }\left\Vert \frac{\partial ^{2}F(y|x,\psi)}{\partial \psi \partial \psi ^{\prime }}\right\Vert
^{2}\Vert \psi -\psi _{0}\Vert ^{4}.
\end{equation*}%
for some positive constant $C$.
Therefore, it is enough to get
\begin{equation*}
\sup_{y \ne g(x, \psi),x,\psi }\left\Vert \frac{\partial ^{2}F(y|x, \psi)}{\partial
\psi \partial \psi ^{\prime }}\right\Vert <\infty .
\end{equation*} 
to achieve equation (\ref{eq:cond-norm-diff-ver2}).

\begin{remark}
    For two-sided model, we replace Lemma \ref{lemma:Lipchitz-one-sided} by  Lemma \ref{lemma:Lipchitz-two-sided}. In fact, Lemma \ref{lemma:Lipchitz-one-sided} or Lemma \ref{lemma:Lipchitz-two-sided} imply uniform Lipchitz continuity of $F$ and $g$ with respect to $\psi$.
\end{remark}

\subsection{Proof of Lemma \ref{lemma:norm-diff-one-sided-general}}

First, conditions in Lemma \ref{lemma:norm-diff-one-sided-general} implies that $F(y|x, \psi)$ is continuously seond-order differentiable with respect to $\psi$, and In the one-sided model, when $y > g(x, \psi)$, 
\begin{align*}
\frac{\partial^2 F(y|x, \psi)}{\partial \psi \partial \psi^{\prime}} & = - 
\frac{\partial^2 g(x, \psi)}{\partial \psi \partial \psi^{\prime}}
f_{\epsilon}(y - g(x, \psi)|x, \psi) \\
& \quad + \frac{\partial g(x, \psi)}{\partial \psi} \left[\frac{\partial
g(x, \psi)}{\partial \psi^{\prime}} \frac{\partial f_{\epsilon}(y - g(x, \psi)|x,
\psi)}{\partial \epsilon} + \frac{\partial f_{\epsilon}(y - g(x, \psi)|x, \psi) }{\partial
\psi^{\prime}} \right] \\
& \quad - \frac{\partial g(x, \psi)}{\partial \psi} \frac{f_{\epsilon}(y -
g(x, \psi)|x, \psi) }{\partial \psi^{\prime}} + \int_{0}^{y - g(x, \psi)} 
\frac{\partial^2 f_{\epsilon}(\epsilon, x, \psi)}{\partial \psi \partial
\psi^{\prime}} \mathrm{d}\epsilon.
\end{align*}
When $y < g(x, \psi)$, $\frac{\partial^2 F(y|x, \psi)}{\partial \psi \partial \psi^{\prime}} = 0$. 

Second, Conditions (iii) and (iv) in Lemma \ref{lemma:Lipchitz-one-sided} and Condition (ii) in Lemma \ref{lemma:norm-diff-one-sided-general} imply that
$\sup_{y \ne g(x, \psi),x,\psi }\left\Vert \frac{\partial ^{2}F(y|x, \psi)}{\partial
\psi \partial \psi ^{\prime }}\right\Vert <\infty$. Therefore, we have the desired result by Lemma  \ref{lemma:sup-bound-hessian}.

\subsection{Proof of Lemma \protect\ref{lemma:small-oh-SC}}

We will investigate the properties of $\left\{\sup_{\psi} \frac{1}{T^2}
\sum_{t=1}^n \sum_{j=1}^T k(X_t, X_j; \psi) \right\}$ to verify Assumption %
\ref{assumption:convergence} (ii). Since $k(X_t, X_j; \psi)$ is symmetric
with respect to $X_t$ and $X_j$, we have 
\begin{align*}
\frac{1}{T^2} \sum_{t=1}^T \sum_{j=1}^T k(X_t, X_j; \psi) = \frac{1}{T^2}
\sum_{t=1}^T k(X_t, X_t; \psi) + \frac{2}{T^2} \sum_{1 \le t < j \le T}
k(X_t, X_j, \psi)
\end{align*}
When $w(s)$ is integrable, $k(X_t, X_t, \psi)$ is uniformly bounded by an
absolute positive constant, which implies $\frac{1}{T^2} \sum_{t=1}^T k(X_t,
X_t; \psi) = o(1)$. Therefore, it is enough to show 
\begin{align}
\sup_{\psi \in \Psi} \left| \frac{2}{T^2} \sum_{1 \le t < j \le T} k(X_t,
X_j, \psi) \right| = o_p(1).  \label{eq:u-process-1}
\end{align}

Since $w(s)$ is integrable, we can interchange the expectation and integral
when we evaluate $\mathbb{E}[ k(X_i, X_j; \psi)]$, and 
\begin{align*}
\mathbb{E}[ k(X_t, X_j; \psi)] = 
\begin{cases}
\mathbb{E}[ k(X_t, X_t; \psi)] & \text{ if } t = j, \\ 
0 & \text{ if } t \ne j.%
\end{cases}%
\end{align*}
Then, we can verify Equation (\ref{eq:u-process-1}) under the Lipschitz
continuity of $k(X_t, X_j, \psi)$ with respect to $\psi$.

\begin{lemma}
\label{lemma:convergence-SC-cond-k}
Suppose $\Psi$ is compact and $k(X_t, X_j, \psi)$ is Lipshitz in $\psi$ in
the sense that for any $\psi, \tilde{\psi} \in \Psi$, we have 
\begin{align*}
|k(X_t, X_j, \psi) - k(X_t, X_j, \psi_0)| \le M(X_t, X_j) \|\psi - \tilde{%
\psi}\|
\end{align*}
and $\int \int M(x, \tilde{x}) d P_x d P_{\tilde{x}} < \infty$ where $P_x =
P_{\tilde{x}}$. Then, 
\begin{align*}
\sup_{\psi} \left\|\frac{2}{T^2} \sum_{1 \le t < j \le n} k(X_t, X_j, \psi)
\right\| = \sup_{\psi} \left\|\frac{2}{T^2} \sum_{1 \le t < j \le T} k(X_t,
X_j, \psi) - \frac{2}{T^2} \sum_{1 \le t < j \le n} \mathbb{E}[k(X_t, X_j,
\psi)] \right\| \xrightarrow{p} 0.
\end{align*}
\end{lemma}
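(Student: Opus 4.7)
The strategy is a standard compactness-plus-stochastic-equicontinuity argument, exploiting degeneracy of the kernel (so the pointwise mean is zero) together with the uniform Lipschitz control supplied by $M(\cdot,\cdot)$. Because $k(x,x';\psi)$ is a symmetric degenerate kernel with $\mathbb{E}[k(X_t,X_j;\psi)]=0$ for $t\ne j$, the centering statement in the lemma is automatic, and the task reduces to showing uniform convergence of $U_T(\psi):=\tfrac{2}{T^2}\sum_{1\le t<j\le T}k(X_t,X_j;\psi)$ to zero.

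\emph{Step 1: Pointwise convergence.} For each fixed $\psi\in\Psi$, $U_T(\psi)$ differs from the degenerate U-statistic $\binom{T}{2}^{-1}\sum_{t<j}k(X_t,X_j;\psi)$ by the factor $(1-1/T)$. Using independence across $(t,j)$ pairs and the degeneracy $\mathbb{E}[k(X_t,X_j;\psi)\mid X_t]=0$, a direct variance computation gives $\mathrm{Var}(U_T(\psi))=O(T^{-2})\,\mathbb{E}[k(X_1,X_2;\psi)^2]$, which is finite because the Lipschitz bound and compactness of $\Psi$ imply $\mathbb{E}[k^2]<\infty$. Chebyshev then yields $U_T(\psi)=o_p(1)$ pointwise.

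\emph{Step 2: Equicontinuity via the Lipschitz envelope.} For any $\psi,\tilde\psi\in\Psi$, the hypothesis gives
\begin{equation*}
|U_T(\psi)-U_T(\tilde\psi)|\le \Bigl(\tfrac{2}{T^2}\sum_{1\le t<j\le T} M(X_t,X_j)\Bigr)\,\lVert\psi-\tilde\psi\rVert.
\end{equation*}
The non-degenerate U-statistic $\tfrac{2}{T^2}\sum_{t<j}M(X_t,X_j)$ converges in probability to $\mathbb{E}[M(X_1,X_2)]<\infty$ by the standard LLN for U-statistics (or by dominating it by the corresponding V-statistic). Hence there is a sequence $L_T=O_p(1)$ such that $\sup_{\psi,\tilde\psi}|U_T(\psi)-U_T(\tilde\psi)|\le L_T\,\lVert\psi-\tilde\psi\rVert$.

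\emph{Step 3: Chaining via compactness.} Fix $\epsilon>0$. By compactness of $\Psi$, choose a finite $\epsilon$-net $\{\psi_1,\dots,\psi_{N_\epsilon}\}\subset\Psi$. Then
\begin{equation*}
\sup_{\psi\in\Psi}|U_T(\psi)|\le \max_{1\le\ell\le N_\epsilon}|U_T(\psi_\ell)|+\epsilon\, L_T.
\end{equation*}
The maximum over the finite net is $o_p(1)$ by Step 1 and a union bound, while $\epsilon L_T=O_p(\epsilon)$ by Step 2. Sending $\epsilon\downarrow 0$ along a suitable subsequence gives $\sup_{\psi}|U_T(\psi)|=o_p(1)$, which is the desired conclusion.

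\emph{Expected obstacle.} There is no serious technical difficulty here; the main care needed is in Step 1, where one must exploit the conditional-mean-zero structure of $k$ (inherited from the way $k$ was constructed earlier from $\mathbb{E}[I(u^\top Z_t\le s)|X_t,\psi]-G(s;u,\psi)$) to obtain the $O(T^{-2})$ variance rather than the slower $O(T^{-1})$ rate one would get without degeneracy; this is what makes the chaining argument close at any fixed $\epsilon$, rather than requiring an $\epsilon$ that shrinks with $T$. The argument mirrors Corollary 4.1 of \cite{Newey1991} and the appendix lemma of \cite{Briol2019} already invoked in the main text.
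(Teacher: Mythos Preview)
Your proposal is correct and is essentially the detailed version of the paper's own proof, which simply cites Corollary~4.1 of \cite{Newey1991} and Lemma~4 in the appendix of \cite{Briol2019}; those references contain precisely the compactness-plus-pointwise-convergence-plus-Lipschitz-equicontinuity argument you have written out. One minor remark: your emphasis in the ``Expected obstacle'' paragraph on degeneracy being needed to close the chaining is slightly overstated---pointwise $o_p(1)$ convergence (from the ordinary LLN for U-statistics with mean-zero kernel) already suffices for the finite-net-plus-$\epsilon$ argument of Step~3, so degeneracy is convenient but not essential here.
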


\begin{proof}
See Corollary 4.1 of \cite{Newey1991} or Lemma 4 in Appendix of \cite%
{Briol2019}.
\end{proof}
Under the Lipschitz continuity of $F(\cdot|x, \theta)$ in $\psi$, $k(\cdot, \dot, \psi)$ is also Lipchitz continuous in $\psi$ by Lemma \ref{lemma:Lipchitz-k}. This concludes the proof.

\subsection{Proof of Lemma \protect
\ref{lemma:proof-norm-diff-approx-Q-SC}}

Let us remind that 
\begin{equation*}
\widehat{R}_{T}(s;u,\psi ,\psi _{0}):=\widehat{Q}_{T}(s;u,\psi )- \widehat{Q}
_{T}(s;u,\psi _{0})-(\psi -\psi _{0})^{\top }\widehat{D} _{T}(s;u,\psi _{0}).
\end{equation*}
We would like to show 
\begin{align*}
\sup_{\psi \in \Psi ;\;\lVert \psi - \psi_{0} \rVert \leq \tau_{T}} \left| 
\frac{T \int_{\mathbb{S}^{d-1}}\int_{\mathcal{S}} \left( \widehat{R}%
_{T}(s;u,\psi,\psi _{0})\right) ^{2}w(s)\mathrm{d}s \mathrm{d}\varsigma (u)}{%
(1 + \lVert \sqrt{T} (\psi-\psi _{0})\rVert) ^{2}}\right| = o_{p}(1)
\end{align*}
for any $\tau _{T} \rightarrow 0$.

Here, we will show it with $\widehat{D}_T(s; u, \psi_0) = D(s; u, \psi_0)$
which are defined in Condition \ref{condition:norm-diff-true-Q}.

Note that 
\begin{align*}
& \widehat{Q}_{T}(s;u,\psi ) - \widehat{Q} _{T}(s;u,\psi _{0})-(\psi -\psi
_{0})^{\top }\widehat{D} _{T}(s;u,\psi _{0}) \\
& = \big[\widehat{Q}_{T}(s;u,\psi ) - Q(s;u,\psi )\big] - \big[\widehat{Q}%
_{T}(s;u,\psi_0) - Q(s;u,\psi _{0})\big] + \big[Q(s;u,\psi ) -
Q(s;u,\psi_{0}) -(\psi -\psi _{0})^{\top }\widehat{D} _{T}(s;u,\psi _{0})%
\big].
\end{align*}

Under Condition \ref{condition:norm-diff-true-Q}, it is enough to show 
\begin{align*}
\sup_{|\psi - \psi_0| \le \tau_T} \frac{T \left(\int_{u \in \mathbb{S}
^{d-1}} \int \left( \big[\widehat{Q}_{T}(s;u,\psi ) - Q(s;u,\psi )\big] - %
\big[\widehat{Q}_{T}(s;u,\psi_0) - Q(s;u,\psi _{0})\big] \right)^2 w(s) 
\mathrm{d}s \mathrm{d}\varsigma(u) \right)}{(1 + \|\sqrt{T}(\psi -
\psi_0)\|)^2} & = o_p(1),
\end{align*}
where 
\begin{align*}
\widehat{Q}_{T}(s; u, \psi)-Q(s;u, \psi) & = \frac{1}{T} \sum_{t=1}^T \left( 
\mathbb{E}[I(u^{\top} Z_t \le s)|X_t, \psi] - G(s; u, \psi)\right).
\end{align*}

Because $\frac{1}{1 + \|\sqrt{T}(\psi - \psi_0)\|^2} \le 1$, it is enough to
show 
\begin{align*}
\sup_{|\psi - \psi_0| \le \tau_T} T \left(\int_{u \in \mathbb{S}^{d-1}} \int
\left((\widehat{Q}_T(s; u, \psi) - Q(s; u, \psi)) - (\widehat{Q}_T(s; u,
\psi_0) - Q(s; u, \psi_0)) \right)^2 w(s) \mathrm{d}s \mathrm{d}\varsigma(u)
\right) = o_p(1).
\end{align*}

Let us denote 
\begin{align*}
V_{2, T}(\psi) & = \left(\int_{u \in \mathbb{S}^{d-1}} \int \left((\widehat{%
Q }_T(s; u, \psi) - Q(s; u, \psi)) - (\widehat{Q}_T(s; u, \psi_0) - Q(s; u,
\psi_0)) \right)^2 w(s) \mathrm{d}s \mathrm{d}\varsigma(u) \right).
\end{align*}
$V_{2, T}(\psi)$ is a V-statistic. That is, 
\begin{align*}
V_{2, T}(\psi) = \frac{1}{T^2} \sum_{t=1}^T \sum_{j=1}^{T} k_2(X_t, X_j,
\psi, \psi_0),
\end{align*}
where 
\begin{align*}
& k_2(X_t, X_j, \psi, \psi_0) \\
& \qquad = \int_{u \in \mathbb{S}^{d-1}} \int \Bigg\{ \bigg[\left(\mathbb{E}
[I(u^{\top} Z_t \le s)|X_t, \psi] - G(s; u, \psi)\right) - \left(\mathbb{E}
[I(u^{\top} Z_t \le s)|X_t, \psi_0] - G(s; u, \psi_0)\right)\bigg] \\
&\qquad  \quad \quad \times \bigg[\left(\mathbb{E}[I(u^{\top} Z_j \le s)|X_j, \psi]
- G(s; u, \psi)\right) - \left(\mathbb{E}[I(u^{\top} Z_j \le s)|X_j, \psi_0]
- G(s; u, \psi_0)\right)\bigg] \Bigg\} w(s) \mathrm{d}s \mathrm{d}
\varsigma(u)
\end{align*}

Note that $k_2(X_t, X_j, \psi, \psi_0)$ is symmetric kernel, and $V_{2,
T}(\psi_0) = 0$ since $k_2(X_t, X_j, \psi_0, \psi_0) = 0$.

Then, we need to handle $n V_{2, T}(\psi)$. It can be decomposed as follows. 
\begin{align*}
T V_{2, T}(\psi) & = \frac{1}{T} \sum_{t=1}^T \sum_{j=1}^T k_2(X_t, X_j;
\psi, \psi_0) \\
& = \frac{1}{T} \sum_{t=1}^T k_2(X_t, X_t; \psi, \psi_0) + \frac{2}{T}
\sum_{1 \le t < j \le T} k_2(X_t, X_j, \psi, \psi_0) \\
& = \frac{1}{T} \sum_{t=1}^T \Big(k_2(X_t, X_t; \psi, \psi_0) - \mathbb{E}%
[k_2(X_t, X_t; \psi, \psi_0)] \Big) + \mathbb{E}[k_2(X_t, X_t; \psi,
\psi_0)]  \\
& \quad + \frac{2}{T} \sum_{1 \le t < j \le T} k_2(X_t, X_j, \psi, \psi_0).
\end{align*}
We will show that each term in the above expressions is $o_p(1)$ as $\tau_T
\rightarrow 0$. Note that $k_2(\cdot, \cdot, \psi, \psi_0)$ is Lipchitz continuous in $\psi$ by Lemma \ref{lemma:Lipchitz-k}.

Step 1) Because $\Psi$ is compact and $k_2(X_t, X_t, \psi, \psi_0)$ is
Lipschitz continuous with respect to $\psi$ for each $X_t$, by ULLN, we have 
\begin{align*}
\sup_{ \psi \in \Psi} \left| \frac{1}{T} \sum_{i=1}^T \Big(k_2(X_t, X_t;
\psi) - \mathbb{E}[k_2(X_t, X_t; \psi, \psi_0)] \Big)\right| \xrightarrow{p}
0.
\end{align*}

Step 2) Because $k_2(X_i, X_i, \psi_0, \psi_0) = 0$, and $k_2(X_i, X_i,
\psi, \psi_0)$ is Lipschitz continuous for each $X_i$, we have 
\begin{align*}
\mathbb{E}[k_2(X_i, X_i, \psi, \psi_0)] \le \mathbb{E}[M_2(X_i, X_i)] \|\psi
- \psi_0\|
\end{align*}
and it implies that $\sup_{\psi \in \Psi ;\;\lVert \psi - \psi_{0} \rVert
\leq \tau_{n}} \mathbb{E}[k_2(X_i, X_i, \psi, \psi_0)] = o_p(1)$.

Step 3) For $t \ne j$, $\mathbb{E}[k_2(X_t, X_j, \psi, \psi_0)] = 0$ when $%
w(s)$ is integrable. Then, 
\begin{align*}
U_{2,T} := \frac{2}{T^2} \sum_{1 \le t < j \le T} k_2(X_t, X_j, \psi, \psi_0)
\end{align*}
is a degenerate U-statistic.

Suppose $\Psi$ is compact, and for any $\psi, \tilde{\psi} \in \Psi$, we
have 
\begin{align*}
| k_2(X_t, X_j, \psi, \psi_0) - k_2(X_i, X_j, \tilde{\psi}, \psi_0)| \le
M_2(X_i, X_j) \|\psi - \tilde{\psi}\|
\end{align*}
with $\iint M_2^2(x, y) d F_{X_t}(x) d F_{X_j}(y) < \infty$ and $\int
M_2^2(x, x) d F_{X_t}(x) < \infty$. Then, by the proof of Lemma 4 in the
appendix of \cite{Briol2019}, we have $k_2(X_i, X_i, \psi, \psi_0)$ is
Euclidean with envelope $F \le M(X_i, X_j) diam(\Psi)$ where $diam(\Psi) =
\sup_{\theta, \tilde{\theta} \in \Psi} \| \theta - \tilde{\theta}\|$ is the
diameter of $\Psi$ Then, 
\begin{align*}
\int \int |k_2^2(X_i, X_j)| d F_{X_i} d F_{X_j} \le \left(\int \int
M_2^2(X_i, X_j) d F_{X_i} d F_{X_j}\right) \| \psi - \psi_0 \|
\end{align*}
So, $\int \int |k_2^2(X_i, X_j)| d F_{X_i} d F_{X_j} \rightarrow 0$ as $%
|\psi - \psi_0| \rightarrow 0$.

By Corollary 8 in \cite{Sherman1994}, we have $\sup_{\psi \in \Psi ;\;\lVert
\psi - \psi_{0} \rVert \leq \tau_{T}} |T U_{2, T}| = o_p(1)$.

Therefore, 
\begin{align*}
\sup_{\psi \in \|\psi - \psi_0\|_2 \le \tau_T } | T V_{2, T}(\psi) | =
\sup_{\psi \in \|\psi - \psi_0\|_2 \le \tau_T } | T V_{2, T}(\psi) - T V_{2,
T}(\psi_0) | = o_p(1).
\end{align*}

\end{appendices}

\end{document}